\newcommand{\field}[1]{\mathbb{#1}}
\newcommand{\N}{\field{N}}
\newcommand{\E}{\field{E}}
\theoremstyle{example} \theoremstyle{remark} \theoremstyle{lemma}
\theoremstyle{definition} \theoremstyle{corol}
\theoremstyle{proposition} \theoremstyle{condition}
\theoremstyle{assumption}
\newtheorem{assumption}{\n{Assumption}}[section]
\newtheorem{theorem}{\n{Theorem}}[section]
\newtheorem{remark}{\n{Remark}}[section]
\newtheorem{lemma}{\n{Lemma}}[section]
\newtheorem{proposition}{\n{Proposition}}[section]
\font\n=cmcsc10
\def\@biblabel#1{\hspace*{-\labelsep}}
\makeatother \geometry{left=1in,right=1in,top=1.00in,bottom=1.0in}
\newcommand{\rmnum}[1]{\romannumeral #1}
\newcommand{\Rmnum}[1]{\expandafter\romannumeral #1}
\begin{document}
\title{\Large On the Coverage Bound Problem of Empirical Likelihood Methods For Time Series}
\author{Xianyang Zhang\thanks{Address: Department of Statistics, University of Missouri-Columbia, Columbia, MO 65211, USA. E-mail: zhangxiany@missouri.edu. Tel:
+1 (573) 882-4455. Fax: +1 (573) 884-5524.} and Xiaofeng Shao
\medskip\\{\normalsize University of Missouri-Columbia and University of Illinois at Urbana-Champaign}} \maketitle
\sloppy%
%\onehalfspacing \small \textbf{Abstract} \strut \textbf{Keywords: }
\strut \onehalfspacing \small \textbf{Abstract} \strut  The upper
bounds on the coverage probabilities of the confidence regions based on blockwise empirical
likelihood [Kitamura (1997)] and nonstandard expansive empirical likelihood [Nordman et al. (2013)] methods for time series data
are investigated via studying the probability for the violation of
the convex hull constraint. The large sample bounds  are derived on the basis of the pivotal
limit of the blockwise empirical log-likelihood ratio obtained under the
fixed-$b$ asymptotics, which  has been
recently shown to provide a more accurate approximation to the
finite sample distribution than the conventional $\chi^2$
approximation. Our theoretical and numerical findings suggest that both the finite sample and large sample upper bounds for coverage
probabilities are strictly less than one and the blockwise empirical likelihood confidence region can exhibit serious undercoverage when (i) the dimension of moment conditions is moderate or large; (ii) the time series dependence is positively strong; or (iii) the block size is large relative to sample size.  A similar finite sample coverage problem occurs for the nonstandard expansive empirical likelihood.
To alleviate the coverage bound problem, we  propose to penalize both
empirical likelihood methods by relaxing the convex hull constraint.  Numerical simulations and data illustration
 demonstrate the effectiveness of our proposed remedies in terms of
delivering confidence sets with more accurate coverage.

\textbf{Keywords:} Coverage probability, Convex hull constraint, Fixed-$b$ asymptotics,
Heteroscedasticity-autocorrelation robust, Moment condition.

%Interestingly, the penalized blockwise empirical likelihood is also
%shown to provide shorter interval widths.

\section{Introduction}
Empirical likelihood [EL, Owen (1988; 1990)] is a nonparametric
methodology for deriving estimates and confidence sets for unknown
parameters, which shares some of desirable properties of parametric
likelihood [see DiCiccio et al. (1991); Chen and Cui (2006)]. Due to
its  effectiveness and flexibility, it has been advanced to many
branches in statistics, such as regression models, time series,  and
censored data, among others; see Owen (2001) for a nice book-length
treatment of the subject.

The EL-based confidence sets inherit some nice features from their
parametric likelihood counterparts, but there is a finite sample
upper bound for the coverage of the EL ratio
confidence region [see Owen (2001, page 209); Tsao (2004)] due to
convex hull constraint, which may limit its applicability and make
it less appealing. For example, the EL confidence
region for the mean of a random sample are nested within the convex
hull of the data and their coverage level is necessarily smaller
than that of the convex hull itself. The upper bound can be much
smaller than nominal coverage level $1-\alpha$ in the small sample
and multidimensional situations. Following the terminology in Tsao
and Wu (2013a), the finite sample coverage bound problem is due to
the mismatch between the domain of the EL and the parameter space,
so it is also called a {\it mismatch problem}. There has been a few
recent proposals to alleviate or resolve the mismatch problem, see
e.g., the adjusted EL [Chen et al. (2008); Emerson
and Owen (2009); Liu and Chen (2010); Chen and Huang (2012)], the
penalized EL [Bartolucci (2007); Lahiri and
Mukhopadhyay (2012)] and the domain expansion approach [Tsao and Wu
(2013a; 2013b)]. However all these works deal with independent
estimation equations, and their direct applicability to the
important time series case is not clear.

In this article, our interest concerns the coverage bound problems
for EL methods tailed to the stationary and weakly dependent time
series. Although many variants have been proposed to extend the EL
to the time series setting [see Nordman and Lahiri (2013) for a
recent review], it seems that no investigation has been conducted
regarding the coverage bound problem, which is expected to exist but
its impact in the time series setting is unknown. We focus on two EL
methods: blockwise EL (BEL, hereafter) proposed by Kitamura (1997)
and nonstandard expansive BEL (EBEL, hereafter) recently proposed by
Nordman et al. (2013). The BEL applies the EL to the blockwise
averaged moment conditions to accommodate the dependence in time
series nonparametrically and it possesses a number of useful
properties of EL, such as Wilks' theorem. In Kitamura (1997), the
limiting $\chi^2$ distribution for the empirical log-likelihood
ratio (up to a multiplicative constant) was shown under the
traditional small-$b$ asymptotics, in which $b$,  the fraction of
block size relative to sample size, goes to zero as sample size
$n\rightarrow+\infty$.
%In practice, even if a practitioner chooses the fraction based on the traditional rule such that the fraction
%goes to zero as the sample size grows, it does not change the fact
%that a positive fraction is being used for a particular data set
%[Kiefer and Vogelsang (2005)].
Adopting the fixed-$b$ asymptotics [Kiefer and Vogelsang (2005)], in
which $b\in (0,1)$ is held fixed as $n\rightarrow+\infty$, Zhang and
Shao (2014) derived the pivotal limit of the empirical
log-likelihood ratio at the true parameter value, and used that as
the basis for confidence region construction. The pivotal limit
depends on $b$ and the simulations show that the fixed-$b$ based
confidence set has more accurate coverage than the small-$b$
counterpart, indicating that the approximation by the fixed-$b$
pivotal limit is more accurate than the small-$b$ counterpart (i.e.,
$\chi^2$). 

Since this paper is related to our previous work in Zhang and Shao (2014), it pays to highlight the difference. 
The focus of this paper is rather different from that of
Zhang and Shao (2014), and we investigate the coverage upper bound
problem of the block-based EL methods for time series. The technique
we used to derive the large sample bound, which depends on $b$, is
completely different from the one involved in the derivation of the
fixed-$b$ limit of the EL ratio statistic in Zhang and Shao (2014).
The main contribution of the current paper is (\rmnum{1}) to
identify the coverage bound problem for block-based EL methods in
time series setting and study the factors (e.g., sample size, block
size, joint distribution of time series, form of moment conditions)
that determine its magnitude. The large sample bound we derive under
the fixed-$b$ asymptotics provides an approximation to its finite
sample counterpart and the approximation is accurate for large $n$;
(\rmnum{2}) to propose the penalized BEL and EBEL methods as
remedies of the coverage bound problem, and show their effectiveness
through theory and simulations.

Let $1-\beta_n$ denote  the probability that convex hull of the moment conditions at the true parameter value contains the origin as an interior point and it is a natural upper bound on the coverage probability of the BEL ratio confidence region (with any finite critical values) regardless of its confidence level. In Tsao (2004), a finite sample upper bound was derived for independent estimation equations and the EL method. Tsao's technique is tailored to the independent case, and seems not applicable to time series data.
The calculation of the finite sample bound in the dependent and BEL case is challenging since it depends on the sample size, block size, dimension and form of moment conditions as well as  the joint distribution of time series. To shed some light on the coverage bound $1-\beta_n$, we approximate $1-\beta_n$ by its large sample counterpart $1-\beta$, where $\beta$ is shown to be the probability that the pivotal limit (under fixed-$b$ asymptotics) equals to infinity.
We further provide an
analytical formula for $\beta$ as a function of $b$ in the case $k=1$, and derive an upper bound for $1-\beta$ in the case $k>1$, where $k$ denotes the dimension of moment conditions.
Interestingly, we discover that
$\beta=\beta(b)>0$ for any $b>0$ and $\beta$ can be quite large for fixed $b\in (0,1)$ if the dimension of moment conditions $k$ is moderately large. Compared to Tsao (2004) and Kitamura (1997),
the large sample bound problem (i.e., $\beta>0$) is a unique feature that is associated with BEL under the fixed-$b$ asymptotics and it does not occur under the traditional small-$b$ asymptotic approximation or for independent estimation equations. It is also worth pointing out that the large sample bound is always one for any $b>0$ under the small-$b$ asymptotics, and it provides an inaccurate approximation of the finite sample bound and could lead to an overoptimistic but misleading inference.
 In corroboration with our theoretical results, our simulations show that the finite sample coverage bound can deviate substantially from one when (1) the block size is large relative to sample size (i.e., $b$ is large); (2) the dimension of moment conditions is moderate or high; (3) the time series dependence is positively strong. In any one of these cases, constructing a confidence set of a conventional nominal level (say,  $95\%$ or $99\%$) is likely to lead to undercoverage.  Thus our finding represents a cautionary note on the recent (theoretical) extension of BEL in the high dimensional setting [see Chang et al.
(2013)], where the dimension of moment conditions can also grow to infinity as sample size $n$ grows to infinity.

The EBEL uses a sequence of nested blocks with growing sizes so no choice of block size is involved, and the empirical log-likelihood ratio at the true parameter value converges to a pivotal but nonstandard limit. Unlike BEL,  there is no large sample bound problem for EBEL as the probability that the pivotal limit of EBEL equals to infinity is zero.
However, the finite sample bound can be far below the nominal level as shown in our simulations and results in a severe undercoverage.
 To alleviate the finite sample undercoverage problem caused by the convex hull constraint,  we propose to  penalize BEL and EBEL by dropping the convex hull constraint. The penalized EL was first introduced by Bartolucci (2007) for the inference of the mean of independent and identically distributed (i.i.d) data, and our generalization to the time series context requires a nontrivial modification. In particular, we introduce a new normalization matrix that takes the dependence into account and derive the limit of log EL ratio at the true value under the fixed-$b$ asymptotics. Our numerical results in the supplementary material suggest that the fixed-$b$ asymptotics not only provides better approximation for the original BEL [see Zhang and Shao (2014)] but it also tends to provide better finite sample approximation for its penalized counterpart.
Our new penalized EL ratio test statistic can be viewed as an intermediate between the empirical log-likelihood ratio test statistic and the self-normalized score test statistic [see (\ref{eq:self-test})] with the tuning parameter in the penalization term determining the amount of relaxation of the convex null constraint. Our numerical results show the effectiveness of the two penalization based EL methods in terms of  delivering more accurate confidence sets for a range of tuning parameters.

A word on notation. Let $D[0,1]$ be the space of functions on
$[0,1]$ which are right-continuous and have left limits, endowed
with the Skorokhod topology [Billingsley (1999)]. Weak convergence
in $D[0,1]$ or more generally in the $\mathbb{R}^q$-valued function
space $D^q[0, 1]$ is denoted by $``\Rightarrow"$, where
$q\in\mathbb{N}$. Convergence in probability and convergence in
distribution are denoted by $``\rightarrow^p"$ and
$``\rightarrow^d"$ respectively. Let $\lfloor a\rfloor$ be the
integer part of $a\in\mathbb{R}$. The notation $N(v,\Sigma)$ is used
to denote the multivariate normal distribution with mean $v$ and
covariance $\Sigma$.

\section{BEL and EBEL}
Suppose we are interested in the inference of a $p$-dimensional parameter vector $\theta$,
which is identified by a set of moment conditions. Denote by
$\theta_0$ the true parameter of $\theta$ which is an interior point
of a compact parameter space $\Theta\subseteq \mathbb{R}^{p}$. Let
$\{z_t\}^{n}_{t=1}$ be a sequence of $\mathbb{R}^l$-valued
stationary time series and assume that the moment conditions
\begin{equation}\label{moment}
\E[f(z_t,\theta_0)]=0,\quad t=1,2,\dots,n,
\end{equation}
hold, where $f(z_t,\theta): \mathbb{R}^{l}\times \Theta\rightarrow
\mathbb{R}^k$ is a map which is differentiable with respect to
$\theta$ and $\text{rank}(E[\partial
f(z_t,\theta_0)/\partial\theta'])=p$ with $k\geq p$. To deal with
time series data, we consider the fully overlapping smoothed moment
condition [Kitamura (1997)] which is given by
$f_{tn}(\theta)=\frac{1}{m}\sum^{t+m-1}_{j=t}f(z_j,\theta)$ with
$t=1,2,\dots,n-m+1$ and $m=\lfloor nb \rfloor$ for $b\in(0,1)$. The
overlapping data blocking scheme aims to preserve the underlying
dependence among neighboring time observations. Consider the profile
empirical log-likelihood function based on the fully overlapping
smoothed moment conditions,
\begin{equation}
\mathcal{L}_n(\theta)=\sup\left\{\sum^{N}_{t=1}\log(\pi_t):
\pi_t\geq 0, \sum^{N}_{t=1}\pi_t=1,
\sum^{N}_{t=1}\pi_tf_{tn}(\theta)=0\right\},\quad N:=n-m+1.
\end{equation}
Standard Lagrange multiplier arguments imply that the maximum is attained when
$$\pi_t=\frac{1}{N\{1+\lambda'f_{tn}(\theta)\}},\quad \text{with} \quad \sum^{N}_{t=1}\frac{f_{tn}(\theta)}{1+\lambda'f_{tn}(\theta)}=0,$$
where $\lambda$ is the Lagrange multiplier. By duality, the empirical log-likelihood ratio function (up to a multiplicative constant) is given by
\begin{equation}
elr(\theta)=\frac{2}{nb}\max_{\lambda\in
\mathbb{R}^k}\sum^{N}_{t=1}\log(1+\lambda'f_{tn}(\theta)),\quad
\theta\in\Theta.
\end{equation}
Under the traditional small-$b$ asymptotics, i.e., $nb^2+1/(nb)\rightarrow0$ as $n\rightarrow \infty$,
and suitable weak dependence assumptions [Kitamura (1997); also see Theorem 1 of Nordman and Lahiri (2013)], it can be shown that
\begin{equation}
elr(\theta_0)\rightarrow^d \chi^2_k.
\end{equation}
As pointed out by Nordman et al. (2013), the coverage accuracy of BEL can depend crucially
on the block length $m=\lfloor nb \rfloor$ and appropriate choices can vary with respect to the joint distribution of the series.
To capture the choice of block length in the asymptotics, Zhang and Shao (2014) adopted the fixed-$b$ approach
proposed by Kiefer and Vogelsang (2005) in the context of heteroscedasticity-autocorrelation
robust (HAR) testing and derived the nonstandard limit of $elr(\theta_0)$ under the fixed-$b$ asymptotics.
To proceed, we make the following assumption which can be verified under
suitable moment and weak dependence assumptions on $f(z_j,\theta_0)$ [see e.g., Phillips (1987)].
\begin{assumption}\label{ass1}
Assume that $\sum^{\lfloor nr
\rfloor}_{j=1}f(z_j,\theta_0)/\sqrt{n}\Rightarrow \Lambda W_k(r)$
for $r\in [0,1]$, where
$\Lambda\Lambda'=\Omega=\sum^{+\infty}_{j=-\infty}\Gamma_j$ with
$\Gamma_j=\E f(z_{t+j},\theta_0)f(z_t,\theta_0)'$ and $W_k(r)$ is a
$k$-dimensional vector of independent standard Brownian motions.
\end{assumption}
Under Assumption \ref{ass1}, Zhang and Shao (2014) showed that when $n\rightarrow +\infty$ and $b$ is
held fixed,
\begin{equation}\label{fixed-b-limit}
elr(\theta_0)\rightarrow^d U_{el,k}(b):=\frac{2}{b}\max_{\lambda \in
\mathbb{R}^k}\int^{1-b}_{0}\log(1+\lambda'\{W_k(r+b)-W_k(r)\})dr,
\end{equation}
where we define $\log(x)=-\infty$ for $x\leq 0.$ The asymptotic distribution $U_{el,k}(b)$ is nonstandard yet pivotal for a given $b$,
and its critical values can be obtained via simulation or bootstrap. Given $b\in(0,1)$,
a $100(1-\alpha)\%$ confidence region for the parameter $\theta_0$
is then given by
\begin{equation}
CI(1-\alpha;b)=\left\{\theta\in\Theta:
\frac{elr(\theta)}{1-b}\leq u_{el,k}(b; 1-\alpha)\right\},
\end{equation}
where $u_{el,k}(b; 1-\alpha)$ denotes the $100(1-\alpha)\%$ quantile
of the distribution $P(U_{el,k}(b)/(1-b)\leq x)$. It was demonstrated in Zhang and Shao (2014)
that the confidence region based on the fixed-$b$ approximation has more accurate coverage than the traditional counterpart.
Our analysis in the next section reveals an interesting coverage upper bound problem associated with the fixed-$b$ approach in the BEL framework.
This result provides some insight on the use of fixed-$b$ based critical values as suggested in Zhang and Shao (2014).
It also sheds some light on the finite sample coverage bound problem that can occur as long as the BEL ratio statistic is used to
construct confidence region. Moreover, we propose a penalized version of the fixed-$b$ based BEL, which improves the finite sample performance of the method
in Zhang and Shao (2014).

To avoid the choice of block length and also improve the finite
sample coverage, Nordman et al. (2013) proposed a new version of BEL
which uses a nonstandard data-blocking rule. To describe their
approach, we let
$\tilde{f}_{tn}(\theta)=\frac{\omega(t/n)}{n}\sum^{t}_{j=1}f(z_j,\theta)$
for $t=1,2,\dots,n$, where $\omega(\cdot): [0,1]\rightarrow
[0,+\infty)$ denotes a nonnegative weight function. The block
collection, which constitutes a type of forward scan in the block
subsampling language of McElroy and Politis (2007), contains a data
block of every possible length for a given sample size $n$. It is worth noting that this nonstandard data-blocking rule bears
some resemblance to recursive estimation in the self-normalization approach of
Shao (2010). Following Nordman et al. (2013), we define the
EBEL ratio function as
\begin{equation}
\widetilde{elr}(\theta)=\frac{1}{n}\max_{\lambda\in
\mathbb{R}^k}\sum^{n}_{t=1}\log(1+ \lambda'\tilde{f}_{tn}(\theta)).
\end{equation}
For the smooth function model, Nordman et al. (2013) showed that
\begin{equation}\label{nord}
\widetilde{elr}(\theta_0)\rightarrow^d
U_{ebel,k}(\omega)=\max_{\lambda \in
\mathbb{R}^k}\int^{1}_{0}\log(1+\lambda'\omega(r)W_k(r))dr.
\end{equation}
The numerical studies in Nordman et al. (2013) indicate that the
EBEL generally exhibits comparable (or in some cases even better) coverage accuracy  than the BEL with
$\chi^2$ approximation and suitable block size. Though the fixed-$b$ based BEL and EBEL
provide improvement over the traditional $\chi^2$ based BEL, our
study in the next section reveals that both the fixed-$b$ based BEL
and EBEL can suffer seriously from the coverage upper bound problem in finite sample. To
the best of our knowledge, this is the first time that the coverage
upper bound problem is revealed  for EL methods in time series.

%Our result provides some new insights into the formulation of BEL.

\section{Bounds on the coverage probabilities}\label{sec:bound}
\subsection{Large sample bounds}
In the framework of BEL, asymptotic theory is typically established under the small-$b$ asymptotics, where large sample bound problem does not occur as
the empirical log-likelihood ratio statistic converges to a $\chi^2$ limit. However, in finite sample, the coverage upper bound $1-\beta_n$ can deviate significantly from the unity.
To shed some light on the finite sample coverage bound, we derive a limiting upper bound on the coverage
probabilities of the BEL ratio confidence region based on the
fixed-$b$ limiting distribution given in (\ref{fixed-b-limit}).
The fixed-$b$ method adopted here reflects the coverage upper bound problem in the asymptotics, while
the original BEL under the small-$b$ asymptotics is somewhat ``over-optimistic'' as the
corresponding upper bound in the limit is always one regardless of what the finite sample
bound is. Define $D_k(r;b)=W_k(r+b)-W_k(r)$
and $\mathcal{A}=\mathcal{A}_b=\left\{\lambda\in \mathbb{R}^k:
\min_{r\in(0,1-b)}(1+\lambda'D_k(r;b))\geq 0 \right\}.$ Let
$t_k(r;b)=\frac{D_k(r;b)}{|D_k(r;b)|}\mathbf{I}\{|D_k(r;b)|>0\}$ be
the direction of $D_k(r;b)$ on the $k-1$ dimensional sphere
$\mathcal{S}^{k-1}$, where $|\cdot|$ denotes the Euclidian norm and
$\mathbf{I}\{\cdot\}$ denotes the indicator function. We first
present the following lemma regarding the unboundedness of
$\mathcal{A}$.
\begin{lemma}\label{lemma1}
Define the convex hull
$\mathcal{H}(D_k)=\{\sum^{s}_{j=1}\alpha_jD_k(r_j;b):
s\in\mathbb{N}, \alpha_j\geq 0, \sum^{s}_{j=1}\alpha_j=1,
r_j\in(0,1-b)\}$. Then the set $\mathcal{A}$ is unbounded if and
only if the origin is not an interior point of $\mathcal{H}(D_k)$.
\end{lemma}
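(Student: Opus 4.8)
The statement is the Brownian-increment analogue of the classical fact that in empirical likelihood the dual feasible set $\{\lambda:1+\lambda'(x_i-\mu)\geq 0\ \forall i\}$ is bounded iff the origin is interior to the convex hull of the data. The plan is to prove both implications by a separating-hyperplane / positive-homogeneity argument, being careful that the index $r$ ranges over the continuum $(0,1-b)$ rather than a finite set, so the convex hull $\mathcal{H}(D_k)$ must be handled through its closure.

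First I would record two preliminary observations. (i) $0\in\mathcal{A}$, so $\mathcal{A}\neq\emptyset$; and $\mathcal{A}$ is closed and convex (it is an intersection of half-spaces $\{\lambda:1+\lambda'D_k(r;b)\geq 0\}$ over $r\in(0,1-b)$, using that $r\mapsto D_k(r;b)$ has continuous sample paths so the ``$\min$'' in the definition is attained). (ii) The recession cone of $\mathcal{A}$ is $\mathcal{A}_\infty=\{v\in\mathbb{R}^k: v'D_k(r;b)\geq 0\ \forall r\in(0,1-b)\}$: indeed if $v$ satisfies these inequalities then $0+tv\in\mathcal{A}$ for all $t\geq 0$, and conversely if $v'D_k(r_0;b)<0$ for some $r_0$ then $1+(\lambda+tv)'D_k(r_0;b)\to-\infty$, so no ray in direction $v$ can stay in $\mathcal{A}$. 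Since $\mathcal{A}$ is a nonempty closed convex set, $\mathcal{A}$ is unbounded iff $\mathcal{A}_\infty\neq\{0\}$, i.e.\ iff there is a nonzero $v$ with $v'D_k(r;b)\geq 0$ for all $r\in(0,1-b)$.

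Next I would translate the convex-hull condition. Let $\overline{\mathcal{H}}(D_k)$ denote the closure of $\mathcal{H}(D_k)$; note $\{D_k(r;b):r\in(0,1-b)\}$ is a bounded set (continuous image of an interval, and a.s.\ finite), so $\overline{\mathcal{H}}(D_k)$ is compact convex and the origin is an interior point of $\mathcal{H}(D_k)$ iff it is an interior point of $\overline{\mathcal{H}}(D_k)$. For the direction ``origin not interior $\Rightarrow$ $\mathcal{A}$ unbounded'': if $0\notin\mathrm{int}\,\overline{\mathcal{H}}(D_k)$, a supporting/separating hyperplane through the origin gives a nonzero $v$ with $v'y\geq 0$ for every $y\in\overline{\mathcal{H}}(D_k)$, in particular $v'D_k(r;b)\geq 0$ for all $r$; by the recession-cone characterization $\mathcal{A}$ is unbounded. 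For the converse ``$\mathcal{A}$ unbounded $\Rightarrow$ origin not interior'': take nonzero $v$ with $v'D_k(r;b)\geq 0$ for all $r\in(0,1-b)$; then $v'y\geq 0$ for every convex combination $y\in\mathcal{H}(D_k)$, hence $v'y\geq 0$ on $\overline{\mathcal{H}}(D_k)$, so the half-space $\{y:v'y<0\}$ misses $\overline{\mathcal{H}}(D_k)$ entirely and no neighborhood of $0$ can be contained in it — the origin is not interior. Combining the two implications with the equivalence $0\in\mathrm{int}\,\mathcal{H}(D_k)\iff 0\in\mathrm{int}\,\overline{\mathcal{H}}(D_k)$ finishes the proof.

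The step I expect to be the only real subtlety is the passage between $\mathcal{H}(D_k)$ and its closure, which is what lets one invoke the finite-dimensional separating-hyperplane theorem despite the continuum of generators; this needs the a.s.\ boundedness and path-continuity of $r\mapsto D_k(r;b)$, both of which are immediate from Brownian sample-path regularity. Everything else — that $\mathcal{A}$ is closed convex with the stated recession cone, and the hyperplane arguments — is routine convex analysis, and the direction $t_k(r;b)$ of $D_k(r;b)$ on $\mathcal{S}^{k-1}$ introduced just before the lemma is presumably what the authors will use to phrase the separation in terms of directions, but it is not strictly needed for the equivalence itself.
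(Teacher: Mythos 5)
Your proposal is correct and follows essentially the same route as the paper: both arguments identify the recession cone of $\mathcal{A}$ as $\{v: v'D_k(r;b)\geq 0,\ \forall r\in(0,1-b)\}$, use the fact that a nonempty closed convex set is unbounded iff this cone is nontrivial (the paper cites Theorem 8.4 of Rockafellar), and apply the supporting/separating hyperplane theorem for the converse. Your extra care with the closure of $\mathcal{H}(D_k)$ and the explicit verification of the recession-cone identity are sound refinements of details the paper leaves implicit, but they do not change the argument.
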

From the proof of Lemma \ref{lemma1} (given in the appendix) and the fact that the components of $D_k(r;b)$ are linearly independent (with probability one),
we know $\{\mathcal{A}~\text{is unbounded}\}$ implies that
$\{U_{el,k}(b)=+\infty\}$. On the other hand, when
$U_{el,k}(b)=+\infty$, it is easy to see that $\mathcal{A}$ cannot
be bounded. Therefore we have $P(\mathcal{A}~\text{is
unbounded})=P(U_{el,k}(b)=+\infty)$. Let
$\mathcal{H}_n(\theta_0;b)=\{\sum^{N}_{t=1}\alpha_tf_{tn}(\theta_0):
\alpha_t\geq0, \sum^{N}_{t=1}\alpha_t=1\}$ and denote
by $\mathcal{H}_n^o(\theta_0;b)$ the interior of $\mathcal{H}_n(\theta_0;b)$. By Lemma
\ref{lemma1} and strong approximation, we have for large $n$,
$P(\text{the origin is not contained in}~\mathcal{H}_n^o(\theta_0;b))\approx
P(\mathcal{A}~\text{is unbounded})=P(U_{el,k}(b)=+\infty).$

It was conjectured in Zhang and Shao (2014) that
$P(\mathcal{A}~\text{is unbounded})>0$, which implies that
$P(U_{el,k}(b)=+\infty)>0$. In what follows,
we give an affirmative answer to this conjecture and provide an
explicit formula for the probability $P(\mathcal{A}~\text{is
unbounded})$ when $k=1.$
Notice that for $k=1$, we must have
$\{\mathcal{A}~\text{is unbounded}\}=\{D_1(r;b)\geq 0, \forall~r\in
(0,1-b]\}\cup\{D_1(r;b)\leq 0, \forall~r\in (0,1-b]\}.$ By the symmetry
of Wiener process, we have $P(\mathcal{A}~\text{is
unbounded})=2P(D_1(r;b)\geq 0, \forall~r\in (0,1-b]).$  For $\beta>0$,
we let $\phi_{\beta}(\cdot)=\phi(\cdot/\sqrt{\beta})/\sqrt{\beta}$
with $\phi(x)=\frac{1}{\sqrt{2\pi}}\exp(-x^2/2)$ being the standard
normal density. For two vectors $x=(x_1,x_2,\dots,x_{L})'$ and $y=(y_1,y_2,\dots,y_{L})'$
of real numbers with $L\in \N$, define the matrix $\mathcal{Q}_{\beta,L}(x,y)=(\phi_{\beta}(x_i-y_j))^{L}_{i,j=1}$.
Let $q_{\beta,L}(x,y)$ be the determinant of $\mathcal{Q}_{\beta,L}(x,y)$. For a vector $x=(x_1,x_2,\dots,x_{L})'$,
denote by $x_{s_1:s_2}=(x_{s_1},x_{s_1+1},\dots,x_{s_2})'$ the subvector of $x$ for $1\leq s_1\leq s_2\leq L.$
Using similar arguments as in Shepp (1971) [also see Karlin and Mcgregor (1959)],
we prove the following result.
\begin{theorem}\label{bound-k=1}
If $L=1/b$ is a positive integer, we have
\begin{eqnarray}
\label{eq:prob1} P(D_1(r;b)\geq 0,
\forall~r\in(0,1-b])=\int_{0=x_1<x_2<x_3<\cdots<x_{L+1}}q_{1,L}(x_{1:L},x_{2:(L+1)})dx_2dx_3\cdots
dx_{L+1},
\end{eqnarray}
where $x=(x_1,\dots,x_{L+1})'$. If $bL+\tau=1$ with $L$ being a positive integer and $0<\tau<b$, we
have
\begin{eqnarray}
\label{eq:prob2}
\begin{split}
 &P(D_1(r;b)\geq 0, \forall~r\in(0,1-b])
\\=&\int\cdots\int_Sq_{\xi,L+1}(x,y)q_{1-\xi,L}(x_{2:(L+1)},y_{1:L})dy_1dx_2dy_2\cdots
dx_{L+1}dy_{L+1},\quad 0<\xi=\tau/b<1,
\end{split}
\end{eqnarray}
where $x=(x_1,\dots,x_{L+1})'$ with $x_1=0$, $y=(y_1,\dots,y_{L+1})'$, and the integral is over the set
$S:=\{(y_1,x_2,y_2,\cdots,x_{L+1},y_{L+1})\in\mathbb{R}^{2L+1}:
0<x_2<\cdots <x_{L+1}, y_1<y_2<\cdots<y_{L+1}\}$.
\end{theorem}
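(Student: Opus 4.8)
The plan is to identify $\{D_1(r;b)\ge 0\ \forall\, r\in(0,1-b]\}$ with the event that a vector-valued Brownian bridge stays in the ordered Weyl chamber, and then to evaluate the probability by the Karlin--McGregor non-intersection formula [Karlin and McGregor (1959); Shepp (1971)]. First I would rescale: $\widetilde W(t):=W(bt)/\sqrt b$ is a standard Brownian motion on $[0,1/b]$, and with $u=r/b$ the event becomes $\{\widetilde W(u+1)\ge\widetilde W(u)\ \forall\, u\in(0,1/b-1]\}$. Writing $u=s+j$ with $s\in(0,1]$ and $j$ a nonnegative integer, for fixed $s$ these inequalities say exactly that the vector $(\widetilde W(s),\widetilde W(s+1),\dots,\widetilde W(s+m-1))$ has nondecreasing coordinates, where $m=m(s)=\lfloor 1/b-s\rfloor+1$ is the largest integer with $s+m-1\le 1/b$. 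When $L=1/b$ is an integer, $m\equiv L$, so the event is the confinement of the $\mathbb R^{L}$-valued process $\widetilde{\mathbf V}(s):=(\widetilde W(s),\dots,\widetilde W(s+L-1))$, $s\in[0,1]$, to $\{v_1\le\cdots\le v_L\}$; when $1/b=L+\xi$ with $\xi=\tau/b\in(0,1)$, one has $m=L+1$ for $s\in[0,\xi]$ and $m=L$ for $s\in[\xi,1]$, the two regimes agreeing at $s=\xi$ because the $(L+1)$st coordinate $\widetilde W(s+L)$ is defined only for $s\le\xi$.

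For the integer case I would condition on the $L$ values $(x_2,\dots,x_{L+1}):=(\widetilde W(1),\dots,\widetilde W(L))$ and set $x_1:=\widetilde W(0)=0$. Given these, the restriction of $\widetilde W$ to the unit intervals $[j,j+1]$ is a family of independent Brownian bridges, so $\widetilde{\mathbf V}$ is an $\mathbb R^{L}$-valued Brownian bridge over $[0,1]$ from $(x_1,\dots,x_L)$ to $(x_2,\dots,x_{L+1})$. The chamber constraints at $s=0$ and $s=1$ force $0=x_1<x_2<\cdots<x_{L+1}$ (outside this set the integrand is $0$ anyway), and for such ordered endpoints the Karlin--McGregor formula gives the conditional probability of staying in $\{v_1\le\cdots\le v_L\}$ as $q_{1,L}(x_{1:L},x_{2:(L+1)})\big/\prod_{i=1}^{L}\phi_1(x_i-x_{i+1})$. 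Multiplying by the Gaussian density $\prod_{i=1}^{L}\phi_1(x_{i+1}-x_i)$ of the conditioning vector and integrating over $0=x_1<x_2<\cdots<x_{L+1}$, the two products cancel because $\phi_1$ is even, and (\ref{eq:prob1}) results.

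For the non-integer case the same plan works with a finer grid: condition on $\widetilde W$ at the $2L+2$ points $\{j:0\le j\le L\}\cup\{j+\xi:0\le j\le L\}$, writing $x_k:=\widetilde W(k-1)$ (so $x_1=0$) and $y_k:=\widetilde W(k-1+\xi)$. The length-$\xi$ blocks $[j,j+\xi]$ and the length-$(1-\xi)$ blocks $[j+\xi,j+1]$ are pairwise disjoint --- this is exactly where $0<\tau<b$ is used --- so, conditionally on the grid, the constraint on $s\in[0,\xi]$ (confinement of the $\mathbb R^{L+1}$-bridge over time $\xi$ from $x$ to $y$ to $\{v_1\le\cdots\le v_{L+1}\}$) and the constraint on $s\in[\xi,1]$ (confinement of the $\mathbb R^{L}$-bridge over time $1-\xi$ from $(y_1,\dots,y_L)$ to $(x_2,\dots,x_{L+1})$ to $\{v_1\le\cdots\le v_{L}\}$) become independent. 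Karlin--McGregor applied to each factor produces $q_{\xi,L+1}(x,y)$ and $q_{1-\xi,L}(x_{2:(L+1)},y_{1:L})$ divided by the corresponding products of $\phi_\xi$'s and $\phi_{1-\xi}$'s; multiplying by the grid density (a product of $\phi_\xi$'s over the length-$\xi$ blocks and $\phi_{1-\xi}$'s over the length-$(1-\xi)$ blocks) cancels both denominators, and integrating over $S$ gives (\ref{eq:prob2}).

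I expect the main obstacle to be the bookkeeping in the non-integer case: verifying precisely why the chamber dimension is $L+1$ on $[0,\xi]$ and $L$ on $[\xi,1]$, how the two vector bridges share the boundary data at $s=\xi$, why the interleaved blocks of lengths $\xi$ and $1-\xi$ are disjoint so that the two sub-events are conditionally independent, and tracking the index ranges so that the determinants come out as $q_{\xi,L+1}$ and $q_{1-\xi,L}$ with exactly the arguments appearing in (\ref{eq:prob2}). The only other technical point is the Karlin--McGregor formula for Brownian bridges: it is phrased for strictly ordered endpoints and strict non-intersection, but a bridge between strictly ordered endpoints stays weakly ordered if and only if it stays strictly ordered, and the set of conditioning values with a tie has measure zero, so the weak inequalities ``$\ge$'' in the statement cause no difficulty.
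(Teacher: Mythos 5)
Your proposal is correct and follows essentially the same route as the paper's proof: identify the event with non-collision of $L$ (resp.\ $L+1$ and $L$) segments of the Brownian path that become independent Brownian bridges after conditioning on the grid values, apply the Karlin--McGregor/Shepp determinant formula, and cancel the transition densities against the grid density before integrating over the ordered region. The only cosmetic differences are that you rescale time at the outset (the paper rescales at the end) and you dispose of the weak-versus-strict ordering issue by a null-set argument where the paper shifts by $a>0$ and lets $a\downarrow 0$.
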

Theorem \ref{bound-k=1} provides an exact formula for the
probability $P(\mathcal{A}~\text{is unbounded})$ when $k=1$.
%This result suggests that with the block size as a fixed proportion of
%the sample size, the coverage probability of the BEL ratio
%confidence region is strictly less than one even when the sample
%size is large enough.
The probability can be manually calculated when $L$ is small.
In particular, if $b=1/2$ (i.e., $L=2$),  we have
\begin{align*}
P(\mathcal{A}~\text{is
unbounded})=&2\int_{0<x_2<x_3}\{\phi(-x_2)\phi(x_2-x_3)-\phi(-x_3)\phi(0)\}dx_2dx_3
\\=&2\left\{\Phi^2(0)-\phi(0)\int^{0}_{-\infty}\Phi(x)dx\right\}=0.18169,
\end{align*}
where $\Phi(\cdot)$ denotes the distribution function of the
standard normal random variable. When $b=1/3$ (i.e., $L=3$), direct calculation
yields that,
\begin{align*}
&P(\mathcal{A}~\text{is
unbounded})=2\left\{\Phi^3(0)+\frac{\phi^2(0)}{4}+\int_{0<x_2<x_3}\phi(-x_3)\phi(x_3-x_2)\Phi(x_2-x_3)dx_2dx_3\right\}
\\
-&2\left\{\int_{0<x_2<x_3}\phi^2(x_3-x_2)\Phi(-x_3)dx_2dx_3+\phi^2(0)\Phi(0)+\int_{0<x_2<x_3}\phi(-x_2)\phi(0)\Phi(x_2-x_3)dx_2dx_3\right\}\\
=&2\bigg\{\frac{1}{8}+\frac{\phi^2(0)}{4}+\int_{-\infty}^{0}\phi(u)\Phi^2(u)du+\frac{\phi^2(0)}{2\sqrt{2}}-\frac{1}{\sqrt{4\pi}}\int_{0}^{+\infty}\Phi(-x_3)\Phi(\sqrt{2}x_3)dx_3
-\phi^2(0)\bigg\}=0.03635.
\end{align*}
The calculation for larger $L$ is still possible but is more involved. An alternative way is to approximate
the probabilities in (\ref{eq:prob1}) and (\ref{eq:prob2}) using Monte Carlo simulation; see Table~\ref{tab:bound-1} and Figure \ref{fig:bound}.
Utilizing the result in Theorem \ref{bound-k=1}, we can derive a (conservative) upper bound on
 $P(\mathcal{A}~\mbox{is bounded})$ (i.e., $1-\beta$) in the multidimensional case.
For $k>1$, we let $D_k^{(j)}(r;b)$ be the $j$th element of $D_k(r;b)$ and
$\mathcal{V}_j=\{D_k^{(j)}(r;b)\geq 0, \forall~r\in
(0,1-b]\}\cup\{D_k^{(j)}(r;b)\leq 0, \forall~r\in (0,1-b]\}$ with $1\leq
j\leq k.$ By the independence of the components of $D_k(r;b)$, it is easy to derive that
\begin{align*}
P(\mathcal{A}~\text{is unbounded})\geq &
P\left(\cup_{j=1}^{k}\mathcal{V}_j\right)=1-P\left(\cap_{j=1}^{k}\mathcal{V}_j^c\right)=1-P^k\left(\mathcal{V}_1^c\right)
\\=&1-\left\{1-2P\left(D_k^{(j)}(r;b)\geq 0, \forall~r\in (0,1-b]\right)\right\}^k.
\end{align*}
Therefore, we obtain the following result.

\begin{proposition}\label{bound-k>1}
When $L=1/b$ is a positive integer, we have
\begin{eqnarray}
\label{eq:prob3}
\begin{split}
P(U_{el,k}(b)<+\infty)\leq &
\left(1-2\int_{0=x_1<x_2<x_3<\cdots<x_{L+1}}q_{1,L}(x_{1:L},x_{2:(L+1)})dx_2dx_3\cdots
dx_{L+1}\right)^k,
\end{split}
\end{eqnarray}
where $x=(x_1,\dots,x_{L+1})'$. When $bL+\tau=1$ with $L$ being a positive integer and $0<\tau<b$,
we have
\begin{eqnarray}
\label{eq:prob4}
\begin{split}
& P(U_{el,k}(b)<+\infty)\\ \leq & \left(1-2\int\cdots\int_S
q_{\xi,L+1}(x,y)q_{1-\xi,L}(x_{2:(L+1)},y_{1:L})dy_1dx_2dy_2\cdots
dx_{L+1}dy_{L+1}\right)^k,\quad 0<\xi=\tau/b<1,
\end{split}
\end{eqnarray}
where $x=(x_1,\dots,x_{L+1})'$ with $x_1=0$, $y=(y_1,\dots,y_{L+1})'$, and the integral is over the set
$S:=\{(y_1,x_2,y_2,\cdots,x_{L+1},y_{L+1})\in\mathbb{R}^{2L+1}:
0<x_2<\cdots <x_{L+1}, y_1<y_2<\cdots<y_{L+1}\}$. When $k=1$, the inequality becomes equality in
(\ref{eq:prob3}) and (\ref{eq:prob4}).
\end{proposition}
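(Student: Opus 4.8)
The plan is to assemble the proof from three ingredients that are already in place: the identity $P(\mathcal{A}~\text{is unbounded})=P(U_{el,k}(b)=+\infty)$ established from Lemma~\ref{lemma1}, the union–bound computation displayed immediately above the statement, and the exact one–dimensional formulas of Theorem~\ref{bound-k=1}. The first step is simply to record that $P(U_{el,k}(b)<+\infty)=1-P(\mathcal{A}~\text{is unbounded})$, so that everything reduces to producing a lower bound for $P(\mathcal{A}~\text{is unbounded})$ in terms of the $k=1$ quantity $P(D_1(r;b)\geq 0,\ \forall r\in(0,1-b])$.

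For that lower bound I would first verify the set inclusion $\mathcal{V}_j\subseteq\{\mathcal{A}~\text{is unbounded}\}$ for each $j$. Let $e_j$ be the $j$th standard basis vector of $\mathbb{R}^k$. If $D_k^{(j)}(r;b)\geq 0$ for all $r\in(0,1-b]$, then $\lambda=c\,e_j$ gives $1+\lambda'D_k(r;b)=1+c\,D_k^{(j)}(r;b)\geq 1>0$ for every $c>0$, so the whole ray $\{c\,e_j:c>0\}$ lies in $\mathcal{A}$; the case $D_k^{(j)}(r;b)\leq 0$ is handled symmetrically by $\lambda=-c\,e_j$. Hence $\cup_{j=1}^k\mathcal{V}_j\subseteq\{\mathcal{A}~\text{is unbounded}\}$, and therefore $P(\mathcal{A}~\text{is unbounded})\geq P(\cup_{j=1}^k\mathcal{V}_j)=1-\prod_{j=1}^k P(\mathcal{V}_j^c)$, where the factorization uses that the processes $D_k^{(j)}(\cdot;b)=W_k^{(j)}(\cdot+b)-W_k^{(j)}(\cdot)$, $j=1,\dots,k$, are independent (Assumption~\ref{ass1}) and each distributed as $D_1(\cdot;b)$. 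Combining, $P(U_{el,k}(b)<+\infty)\leq\prod_{j=1}^k P(\mathcal{V}_j^c)=\bigl(1-P(\mathcal{V}_1)\bigr)^k$.

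Next I would evaluate $P(\mathcal{V}_1)$. The two events $\{D_1(r;b)\geq 0,\ \forall r\in(0,1-b]\}$ and $\{D_1(r;b)\leq 0,\ \forall r\in(0,1-b]\}$ meet only on the null event $\{D_1(\cdot;b)\equiv 0\}$, and they are equiprobable by the symmetry $W_1\eqd -W_1$; thus $P(\mathcal{V}_1)=2\,P(D_1(r;b)\geq 0,\ \forall r\in(0,1-b])$. Substituting the first formula of Theorem~\ref{bound-k=1} when $L=1/b\in\mathbb{N}$, and the second when $bL+\tau=1$ with $0<\tau<b$ and $\xi=\tau/b$, into $P(U_{el,k}(b)<+\infty)\leq(1-P(\mathcal{V}_1))^k$ produces precisely (\ref{eq:prob3}) and (\ref{eq:prob4}).

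Finally, for $k=1$ the union $\cup_{j=1}^1\mathcal{V}_j$ is just $\mathcal{V}_1$, and, as noted in the paragraph preceding Theorem~\ref{bound-k=1}, one in fact has $\{\mathcal{A}~\text{is unbounded}\}=\mathcal{V}_1$; so the single inclusion above is an equality and the bound collapses to an identity. I do not expect a genuine obstacle here, since the analytic work is already carried out in Lemma~\ref{lemma1} and Theorem~\ref{bound-k=1}; the only points needing care are the direction-$e_j$ argument establishing $\mathcal{V}_j\subseteq\{\mathcal{A}~\text{is unbounded}\}$ and the independence/symmetry bookkeeping, both of which are routine.
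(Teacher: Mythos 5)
Your proposal is correct and follows essentially the same route as the paper: the union bound over the events $\mathcal{V}_j$, independence of the components of $D_k(\cdot;b)$, the symmetry identity $P(\mathcal{V}_1)=2P(D_1(r;b)\geq 0,\ \forall r\in(0,1-b])$, and substitution of Theorem \ref{bound-k=1}. The only addition is that you make explicit the inclusion $\mathcal{V}_j\subseteq\{\mathcal{A}\ \text{is unbounded}\}$ via the ray $\{c\,e_j:c>0\}$, which the paper leaves implicit.
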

If the (asymptotic) critical value based on the fixed-$b$ pivotal limit $U_{el,k}(b)$ is used to construct a
$100(1-\alpha)\%$ confidence region, then
the following several cases can occur:

(1) $P(U_{el,k}(b)<\infty)=1-\beta\le 1-\alpha$, then the fixed-$b$
based critical value is $\infty$. In this case, it is impossible to
construct a meaningful confidence region as $\{\theta\in\Theta |
elr(\theta)\le \infty\}=\Theta$. Note that in the case $k=1$, the
value of $\beta$ is known but in the case $k=2$ or higher, only an
upper bound for $1-\beta$ is provided in the above proposition. Thus
if the upper bound is no greater than $1-\alpha$, then we are not
able to construct a sensible confidence region based on fixed-$b$
critical values.

(2) $P(U_{el,k}(b)<\infty)=1-\beta>1-\alpha$, then the fixed-$b$ based critical value is finite. The
$100(1-\alpha)\%$ quantile of the distribution of $U_{el,k}(b)/(1-b)$ (i.e., $u_{el,k}(b;1-\alpha)$)
is $100\gamma\%$ quantile of the conditional distribution $P(U_{el,k}(b)/(1-b)\leq x|U_{el,k}(b)<+\infty)$, where $\gamma=\frac{1-\alpha}{1-\beta}$. In the simulation experiment of Zhang and Shao (2014), the $100(1-\alpha)\%$ quantile of the conditional distribution is used as the critical value.
 Note that the largest $b$ considered in the latter paper  is $0.2$, which corresponds to $\beta\approx 1-0.9985=0.0015$ when $k=1$ and $\beta\approx 1-0.9872=0.0128$ when $k=2$, as seen from Table~\ref{tab:bound-1}. This suggests that the critical values used in Zhang and Shao (2014) are wrong, but not by a lot.

(3) In the event that $u_{el,k}(b;1-\alpha)$ is finite, which occurs in case (2) above or
when the $\chi^2$-based critical values are used,
$$P(\theta_0\in CI(1-\alpha;b))\le P(\mbox{the origin is contained in}~\mathcal{H}_n^o(\theta_0;b))=1-\beta_n,$$ which is a finite sample bound. The quantity $\beta_n$ depends on joint distribution of time series, the form of $f$, block size and sample size, so is in general difficult to calculate. We present some numerical results on $\beta_n$ in Section~\ref{subsec:bounds} below. If $1-\beta_n\le 1-\alpha$, then the confidence region is bound to undercover and the amount of undercoverage gets severe when $\beta_n$ is farther from zero.
 %For a large sample, $\beta_n\approx \beta$, so the amount of undercoverage is (approximately) at least $\beta-\alpha$ if $1-\beta\le 1-\alpha$.

%When the bound in Proposition \ref{bound-k>1} is lower than
%$1-\alpha$, the fixed-$b$ based critical value
%$u_{el,k}(b;1-\alpha)=+\infty$ and it is impossible to construct a
%$100(1-\alpha)\%$ confidence region based on the fixed-$b$ limiting
%distributions. Any confidence region based on a finite critical
%value, such as the quantile of $\chi^2_k$ or the conditional
%distribution $P(U_{el,k}(b)/(1-b)\leq x|U_{el,k}(b)<+\infty)$
%considered in Zhang and Shao (2013), will lead to undercoverage even
%for large sample, which is due to the use of a ``wrong'' critical
%value.

Proposition \ref{bound-k>1} shows that for any fixed $b\in(0,1)$,
the bound decays exponentially to zero as the dimension $k$ grows.
This result suggests that caution needs to be taken in the recent
extension of the BEL to the high dimensional setting [see Chang et
al. (2013)], where the dimension of moment condition $k$ can grow
with respect to sample size $n$. In the latter paper, the small-$b$
asymptotics is adopted, and no discussion on such coverage bound
issue (either finite sample or large sample) seems provided. It
would be interesting to extend the  fixed-$b$ asymptotic approach to
the BEL in the high dimensional setting and we leave it for future
investigation.
%It is also worth pointing out that the above result
%is applicable to other members in generalized empirical likelihood
%(GEL) class [see e.g. Smith (2011)] such as exponential tilting (ET)
%and continuous updating (CUE).

The large sample bound on the coverage probabilities depends
crucially on how the smoothed moment conditions are constructed. By
Lemma 1 of Nordman et al. (2013), we know that for EBEL, the set
$\mathcal{A}_{\omega}=\{\lambda\in \mathbb{R}^k: \min_{r\in(0,1)}(1
+ \lambda'\omega(r)W_k(r))\geq 0 \}$ is bounded with probability
one, which implies that $P(U_{ebel,k}(\omega)<\infty)=1$. Thus for
large sample, no upper bound problem occurs for the EBEL. However,
the numerical results in Table \ref{tab:bound-EBEL} show that the
finite sample bounds on the coverage probabilities of the EBEL ratio
confidence regions can be significantly lower than one, which
indicates that the convergence of the EBEL ratio statistic
$\widetilde{elr}(\theta_0)$ to its limit $U_{ebel,k}(\omega)$ is in
fact slow and there can be substantial undercoverage associated with
EBEL-based confidence region in any one of the following three
cases: (1) the dependence is positively strong; (2)  sample size $n$
is small; (3) $k$ is moderate, say $k\ge 3$.
\begin{remark}
{\rm The convex hull constraint is related to the underlying
distance measure between $\pi=(\pi_1,\dots,\pi_N)$ and
$(1/N,\dots,1/N)$ in EL. If one considers alternative nonparametric
likelihood such as the Euclidean likelihood or more generally,
members of the Cressie-Read power divergence family of
discrepancies, then the origin is allowed to get outside of the
convex hull of the smoothed moment conditions  as long as the weights are allowed to be negative.
No coverage upper bound problem occurs for these alternative nonparametric likelihoods, but since EL has certain optimality property [Kitamura (2006); Kitamura et al. (2013)], it is still
a worthwhile effort to seek remedies of the coverage bound problem based on EL.    }
\end{remark}

\subsection{Finite sample results on coverage bounds}
\label{subsec:bounds} To evaluate the upper bounds on the coverage
probabilities for BEL and EBEL, we simulate time series from the
AR(1) models with the AR(1) coefficient $\rho=-0.5, 0, 0.2, 0.5,
0.8$, and i.i.d standard
normal errors. The sample size $n$ is equal to
$50,100,500,1000,5000,+\infty$. We approximate the probability
$P(\mathcal{A}~\text{is bounded})$ by simulating independent Wiener
processes, where the Wiener process is approximated by the
normalized partial sum of 50,000 i.i.d standard normal random
variables and the number of Monte Carlo replications is 100,000.
When $k>1$, we simulate VAR(1) processes with the coefficient matrix $A_1=\rho I_k$ for
$\rho=-0.5, 0, 0.2, 0.5, 0.8$, and standard multivariate normal errors. Table \ref{tab:bound-1} summarizes
the upper bounds on the coverage probabilities for BEL with $b=1/L$
for $L=2,3,\dots,10,15$ and 20, and Table \ref{tab:bound-EBEL}
provides the finite sample upper bounds on the coverage
probabilities for EBEL. For BEL, it is seen from the table that the
upper bound on the coverage probability decreases as the block
size increases and the positive dependence strengthens. The bound in
the multidimensional case is lower than its counterpart in the
univariate case, which is consistent with our theoretical finding.
It is interesting to note that negative dependence (corresponding to
$\rho=-0.5$) tends to bring the upper bound higher. In practice, if
the dependence is expected to be positively strong, a large block
size is preferable. However, our result indicates that the
corresponding upper bound on the coverage probabilities will be
lower for larger block size. It is also worth noting that the upper
bounds on the coverage probabilities generally increase as the
sample size grows and the result in Proposition \ref{bound-k>1}
provides conservative  bounds on $1-\beta$ when $k=2$. For EBEL,
though its large sample bound is one, its finite sample bound can be
significantly lower than one as seen from Table
\ref{tab:bound-EBEL}. To further assess the impact of the dimensionality $k$,
we present the coverage upper bounds for $k=5,10,15,20,50$, and $L=2,3,\dots,20,30,40,50$ in Figure \ref{fig:bound}, where
data are generated from multivariate standard normal distribution with sample size $n=5000$. We observe that (i) as $k$ grows, a smaller $b$ (or larger $L$)
is required to deliver meaningful finite sample upper bounds (say, larger than nominal level); (ii)  the coverage upper bound for EBEL can be close to zero for
$k=15$ or larger. We expect that the bound can get worse when we increase the positive dependence in the observations.
Based on the numerical results for this
specific setting, we suggest special attention be paid to the
potential coverage bound problem for the following cases: (1) the
nominal level is close to one (such as 99\%); (2) the dimension of
moment conditions $k$ is moderate or high; (3) the (positive)
dependence is strong; (4) $b$ is large.

\section{Penalized BEL and EBEL}
\label{sec:remedy} The convex hull constraint
violation underlying the mismatch is well known in the EL literature
[see Owen (1990); Owen (2001)]. Various methods have been proposed
to bypass this constraint, such as the penalized EL [Bartolucci (2007); Lahiri and Mukhopadhyay (2012)], the
adjusted EL [Chen et al. (2008); Emerson and Owen
(2009); Liu and Chen (2010); Chen and Huang (2012)] and the extended EL
[Tsao and Wu (2013a; 2013b)]. Motivated by the theoretical findings
as well as the finite sample results in Section \ref{subsec:bounds},
we propose a remedy based on penalization to circumvent the coverage bound problem which leads to improved coverage accuracy under the fixed-$b$ asymptotics.

%In general, we can define the transformed/demeaned moment conditions
%to be $\bar{f}_{tn}(\theta)=f_{tn}(\theta)-\frac{1}{N}\sum^{N}_{t=1}f_{tn}(\theta).$
%In this case, we know that the origin is always contained in the
%convex hull of $\{\bar{f}_{tn}(\theta)\}$ as
%$\sum^{N}_{t=1}\bar{f}_{tn}(\theta)=0.$ It is not hard to see that
%\begin{equation}\label{tran2}
%\overline{elr}(\theta_0)=2\max_{\lambda\in
%\mathbb{R}^k}\sum^{N}_{t=1}\log(1+
%\lambda'\bar{f}_{tn}(\theta_0))/S_n\rightarrow^d
%\frac{2}{b}\max_{\lambda\in \mathbb{R}^k}\int^{1-b}_{0}\log(1+
%\lambda'V_k(r;b))dr.
%\end{equation}
%Notice that, in the mean case, the above test reduces to the
%overidentifying restriction test.
%In a straightforward manner, prewhitening can be coupled with the test in (\ref{tran2}).

\subsection{Penalized BEL}\label{pel}
To overcome the convex hull constraint violation problem, Bartolucci
(2007) dropped the convex hull constraint in the formulation of EL for the mean of a random sample
and defined the likelihood by penalizing the unconstrained EL using
the Mahalanobis distance. Recently, Lahiri and Mukhopadhyay (2012)
introduced a modified version of Bartolucci's penalized EL (PEL) in
the mean case. Under the assumption that the observations are i.i.d
and the components of each observation are dependent, Lahiri and
Mukhopadhyay (2012) derived the asymptotic distributions of the PEL
ratio statistic in the high dimensional setting. Other variants of the PEL where
a penalty function is added to the standard EL are considered by Otsu (2007) for efficient
estimation in semiparametric models and Tang and Leng (2010) for consistent parameter estimation and variable selection in linear models.
In these two papers, they either penalize high dimensional parameters or
roughness of unknown nonparametric function,  and their PELs still
suffer from the same convex hull constraint violation problem as the standard EL. In what follows, we
shall consider a penalized version of the BEL ratio test statistic
in the moment condition models, which allows weak dependence within
the moment conditions and may be computed even when the origin does
not belong to the convex hull of the smoothed moment conditions.
Compared to existing penalization methods in the literature, our
method is different in three aspects. First, our method is designed
for dependent data where existing methods are only applicable to
independent moment conditions. Second, our theoretical result is
established under the fixed-$b$ asymptotics which is expected to
provide better approximation to the finite sample distribution. And we suggest the use of the fixed-$b$ based critical values that
 capture the choice of tuning parameters (also see the simulations in the supplementary material).
 Third, our formulation produces a new class of statistic between the
empirical log-likelihood ratio statistic and the self-normalized score statistic
which is of interest in their own right. To illustrate the idea, we
first consider the case $k=p$, i.e.,  the moment condition is
exactly identified (see Remark \ref{overidentified} for the general
overidentified case). Define the simplex
$\mathfrak{F}_N=\{\pi=(\pi_1,\dots,\pi_N): \pi_t\geq 0,
\sum^{N}_{t=1}\pi_t=1\}$ and the quadratic distance measure
$\delta_n(\mu):=\delta_{\Psi_n}(\mu)=\mu' \Psi^{-1}_n\mu$ for
$\mu\in \mathbb{R}^k$, where $\Psi_n\in\mathbb{R}^{k\times k}$ is an
invertible normalization matrix. Let
$\mu_{\pi}(\theta)=\sum^{N}_{t=1}\pi_t f_{tn}(\theta)$ with
$\pi=(\pi_1,\dots,\pi_N)\in\mathfrak{F}_N$. We consider the
penalized BEL (PBEL) as follows,
\begin{equation}
\mathcal{L}_{pbel,n}(\theta)=\max_{\pi\in\mathfrak{F}_N}\prod^{N}_{t=1}\pi_t\exp\left\{-\frac{n\tau}{2}\delta_n(\mu_{\pi}\left(\theta)\right)\right\}.
\end{equation}
The PBEL ratio test statistic is then defined as
\begin{align*}
elr_{pbel}(\theta)=&-\frac{2}{nb}\log\left\{N^{N}\mathcal{L}_{pbel,n}(\theta)\right\}
=\min_{\pi\in\mathfrak{F}_N}\left\{-\frac{2}{nb}\sum^{N}_{t=1}\log
(N\pi_t)+\frac{\tau}{b}\delta_n(\mu_{\pi}\left(\theta)\right)\right\}.
\end{align*}
Under the constraint that $\mu=\sum^{N}_{t=1}\pi_tf_{tn}(\theta)$,
it is not hard to derive that
\begin{equation*}
\pi_t=\frac{1}{N\left\{1+\lambda'(f_{tn}(\theta)-\mu)\right\}},\quad
\text{with}~~\sum^{N}_{t=1}\frac{f_{tn}(\theta)-\mu}{1+\lambda'(f_{tn}(\theta)-\mu)}=0,
\end{equation*}
by using standard Lagrange multiplier argument. Denote by
$\mathcal{H}_n(\theta;b)=\{\sum^{N}_{t=1}\pi_tf_{tn}(\theta):\pi\in\mathfrak{F}_N\}.$
Thus we deduce that
\begin{equation}\label{pen-equ0}
\begin{split}
elr_{pbel}(\theta)=\min_{\mu\in\mathcal{H}_n(\theta;b)}\left\{\frac{2}{nb}\max_{\lambda\in\mathbb{R}^k}\sum^{N}_{t=1}\log
\left\{1+\lambda'(f_{tn}(\theta)-\mu)\right\}+\frac{\tau}{b}\delta_n\left(\mu\right)\right\},
\end{split}
\end{equation}
where $\mu$ is minimized to balance the empirical log-likelihood ratio and the
penalty term.
\begin{proposition}\label{prop-1}
If the space spanned by $\{f_{tn}(\theta)\}^{N}_{t=1}$ is of $k$ dimension, we have
\begin{equation}\label{pen-equ01}
\begin{split}
elr_{pbel}(\theta)=&\min_{\mu\in\mathbb{R}^k}\left\{\frac{2}{nb}\max_{\lambda\in\mathbb{R}^k}\sum^{N}_{t=1}\log
\left\{1+\lambda'(f_{tn}(\theta)-\mu)\right\}+\frac{\tau}{b}\delta_n\left(\mu\right)\right\}.
\end{split}
\end{equation}
\end{proposition}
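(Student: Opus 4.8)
The plan is to read off (\ref{pen-equ01}) from the representation (\ref{pen-equ0}). Write
\[
\Phi(\mu):=\frac{2}{nb}\max_{\lambda\in\mathbb{R}^k}\sum_{t=1}^{N}\log\{1+\lambda'(f_{tn}(\theta)-\mu)\}+\frac{\tau}{b}\,\delta_n(\mu),\qquad\mu\in\mathbb{R}^k,
\]
so that (\ref{pen-equ0}) says $elr_{pbel}(\theta)=\min_{\mu\in\mathcal{H}_n(\theta;b)}\Phi(\mu)$, and the proposition asks to replace $\mathcal{H}_n(\theta;b)$ by $\mathbb{R}^k$ in this minimization. Since $\mathcal{H}_n(\theta;b)\subseteq\mathbb{R}^k$ we get $\min_{\mathbb{R}^k}\Phi\le\min_{\mathcal{H}_n(\theta;b)}\Phi=elr_{pbel}(\theta)$ for free, so everything reduces to showing that a point $\mu$ outside $\mathcal{H}_n(\theta;b)$ cannot improve on this value; I will in fact establish the stronger statement that $\Phi(\mu)=+\infty$ for every such $\mu$.

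The key step is a separation argument. Fix $\mu\notin\mathcal{H}_n(\theta;b)$. Because $\mathcal{H}_n(\theta;b)$ is a compact convex set not containing $\mu$, the strict separating hyperplane theorem produces $\lambda_0\in\mathbb{R}^k$ and $\epsilon>0$ with $\lambda_0'(h-\mu)\ge\epsilon$ for all $h\in\mathcal{H}_n(\theta;b)$, in particular $\lambda_0'(f_{tn}(\theta)-\mu)\ge\epsilon$ for $t=1,\dots,N$. Taking $\lambda=s\lambda_0$ and letting $s\to+\infty$ gives $\sum_{t=1}^{N}\log\{1+s\lambda_0'(f_{tn}(\theta)-\mu)\}\ge N\log(1+s\epsilon)\to+\infty$, so the inner maximum, and hence $\Phi(\mu)$, equals $+\infty$. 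Consequently $\Phi$ is identically $+\infty$ on $\mathbb{R}^k\setminus\mathcal{H}_n(\theta;b)$, whence $\min_{\mathbb{R}^k}\Phi=\min_{\mathcal{H}_n(\theta;b)}\Phi=elr_{pbel}(\theta)$, which is (\ref{pen-equ01}).

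It remains to pin down where the spanning hypothesis enters. By the standard empirical-likelihood feasibility criterion (Owen (2001)), the inner maximum $\max_{\lambda}\sum_t\log\{1+\lambda'(f_{tn}(\theta)-\mu)\}$ is finite precisely when the origin is an interior point of $\mathrm{conv}\{f_{tn}(\theta)-\mu:1\le t\le N\}=\mathcal{H}_n(\theta;b)-\mu$, i.e., precisely when $\mu\in\mathcal{H}_n^o(\theta;b)$; the assumption that $\{f_{tn}(\theta)\}_{t=1}^N$ spans a $k$-dimensional space guarantees that $\mathcal{H}_n(\theta;b)$ is full-dimensional, so $\mathcal{H}_n^o(\theta;b)\ne\emptyset$ and the common value $elr_{pbel}(\theta)$ is finite and attained at an interior point of $\mathcal{H}_n(\theta;b)$ (this non-degeneracy also underpins the Lagrange-multiplier derivation of (\ref{pen-equ0})). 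The main obstacle is thus the separation step together with this feasibility fact; the delicate part is the boundary case $\mu\in\mathcal{H}_n(\theta;b)\setminus\mathcal{H}_n^o(\theta;b)$, where one must invoke full-dimensionality to rule out a supporting functional that annihilates every $f_{tn}(\theta)-\mu$ (which would make the candidate divergent direction harmless). The remaining ingredients — concavity and coercivity of $\lambda\mapsto\sum_t\log\{1+\lambda'(f_{tn}(\theta)-\mu)\}$ on its effective domain when $\mu\in\mathcal{H}_n^o(\theta;b)$, and positive-definiteness of $\delta_n$ — are routine.
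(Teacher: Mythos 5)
Your proof is correct and follows essentially the same route as the paper's: for $\mu\notin\mathcal{H}_n(\theta;b)$ a separating hyperplane makes the inner maximum equal to $+\infty$, so the minimization over $\mathbb{R}^k$ collapses to the minimization over $\mathcal{H}_n(\theta;b)$ in (\ref{pen-equ0}) (the paper invokes non-strict separation with at least one strict inequality, you invoke strict separation; both are valid since the hull is a polytope). One small caveat: your side remark that the inner maximum is finite \emph{precisely} when $\mu\in\mathcal{H}_n^o(\theta;b)$ overstates the feasibility criterion --- the maximum can also be finite for certain boundary points $\mu$, e.g.\ when every $f_{tn}(\theta)-\mu$ is annihilated by the supporting functional --- but this remark plays no role in your actual argument.
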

The condition that the space spanned by $\{f_{tn}(\theta)\}^{N}_{t=1}$ is of $k$ dimension is fairly mild because $k$ is fixed and $N$ grows with $n$.
Notice that the minimizer $\mu^*$ of (\ref{pen-equ01}) is necessarily contained in $\mathcal{H}_n(\theta;b)$, which implies that
the origin of $\mathbb{R}^k$ is contained in the convex hull of
$\{f_{tn}(\theta)-\mu^*\}^{N}_{t=1}$. In addition, since the empirical log-likelihood
ratio and the penalty term in (\ref{pen-equ01}) are both convex
functions of $\mu$, it is not hard to obtain $\mu^*$ in practice.
Let $\tau=c^*n$ with $c^*$ being a nonnegative constant which
controls the magnitude of the penalty term, and suppose that
$\Psi_n^{-1}\rightarrow ^d (\Lambda\Phi_k\Lambda')^{-1}$ as
$n\rightarrow +\infty,$ where $\Phi_k\in\mathbb{R}^{k\times k}$ is a
pivotal limit. For example, if we let $Q(\cdot,\cdot):
[0,1]^2\rightarrow \mathbb{R}$ be a positive semi-definite kernel,
then one possible choice of the normalization matrix $\Psi_n$ is
given by
\begin{equation}
\Psi_n(\hat{\theta}_n)=\frac{1}{n}\sum^{n}_{t=1}\sum^{n}_{j=1}Q(t/n,j/n)f(z_t,\hat{\theta}_n)f'(z_j,\hat{\theta}_n),
\end{equation}
where $\hat{\theta}_n$ is a preliminary estimator obtained by
solving the equation $\sum^{n}_{j=1}f(z_j,\theta)=0$. In practice,
one can choose $Q(r,s)=\kappa(r-s)$ with $\kappa(\cdot)$ being the
kernels used in the heteroskedasticity and autocorrelation
consistent (HAC) estimation, such as the Bartlett kernel or
the quadratic spectral kernel. Under appropriate conditions [see
e.g., Kiefer and Vogelsang (2005); Sun (2013)], it can be shown that
\begin{equation}
\Psi_n(\hat{\theta}_n)\rightarrow^d \Lambda
\int^{1}_{0}\int^{1}_{0}Q(r,s)dB_k(r)dB_k'(s)\Lambda':=\Lambda \Phi_k
\Lambda',
\end{equation}
where $\Phi_k=\int^{1}_{0}\int^{1}_{0}Q(r,s)dB_k(r)dB_k'(s)$ with $B_k(r)=W_k(r)-rW_k(1)$.
Therefore, under Assumption \ref{ass1}, we have
\begin{equation}
\begin{split}
elr_{pbel}(\theta_0)\rightarrow^d U_{pbel,k}(b)=&
\min_{\mu\in
\mathcal{H}(b)}\left\{\frac{2}{b}\max_{\lambda\in\mathbb{R}^k}\int^{1-b}_{0}\log
\left\{1+\lambda'(\Lambda
D_k(r;b)/b-\mu)\right\}dr+\frac{c^*}{b}\mu'(\Lambda\Phi_k\Lambda')^{-1}\mu\right\},
\end{split}
\end{equation}
where $\mathcal{H}(b)$ denotes the convex hull of
$\{\Lambda D_k(r;b)/b: r\in (0,1-b)\}$. Note that when $\mu$ is outside the convex hull of
$\{\Lambda D_k(r;b)/b: r\in (0,1-b)\}$, the separating hyperplane theorem [see e.g. Section 11 of Rockafellar (1970)] implies that $\max_{\lambda\in\mathbb{R}^k}\int^{1-b}_{0}\log
\left\{1+\lambda'(\Lambda D_k(r;b)/b-\mu)\right\}dr=+\infty$. Thus we have the simplified expression,
\begin{equation}
\begin{split}
U_{pbel,k}(b)=&
\min_{\mu\in\mathbb{R}^k}\left\{\frac{2}{b}\max_{\lambda\in\mathbb{R}^k}\int^{1-b}_{0}\log
\left\{1+\lambda'(\Lambda
D_k(r;b)/b-\mu)\right\}dr+\frac{c^*}{b}\mu'(\Lambda\Phi_k\Lambda')^{-1}\mu\right\}
\\=&
\min_{\tilde{\mu}\in\mathbb{R}^k}\left\{\frac{2}{b}\max_{\tilde{\lambda}\in\mathbb{R}^k}\int^{1-b}_{0}\log
\left\{1+\tilde{\lambda}'(
D_k(r;b)/b-\tilde{\mu})\right\}dr+\frac{c^*}{b}\tilde{\mu}'\Phi^{-1}_k\tilde{\mu}\right\},
\end{split}
\end{equation}
where $\tilde{\lambda}=\Lambda'\lambda$ and $\tilde{\mu}=\Lambda^{-1}\mu.$ Notice that the limiting
distribution $U_{pbel,k}(b)$ is pivotal and its critical values can
be simulated by approximating the Brownian motion with the
standardized/normalized partial sum of i.i.d standard normal random variables.
As to the pivotal limit $U_{pbel,k}(b)$, we have the following
result.
\begin{proposition}\label{boundness}
For $b\in (0,1)$ and $c^*>0$, $P(U_{pbel,k}(b)<\infty)=1$.
\end{proposition}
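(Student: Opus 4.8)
The plan is to bound $U_{pbel,k}(b)$ from above by evaluating the outer minimization at one explicit, a.s.\ finite choice of $\tilde\mu$. I would work from the simplified representation derived just above the statement,
\[
U_{pbel,k}(b)=\min_{\tilde\mu\in\mathbb{R}^k}\left\{\frac{2}{b}\max_{\tilde\lambda\in\mathbb{R}^k}\int^{1-b}_{0}\log\{1+\tilde\lambda'(D_k(r;b)/b-\tilde\mu)\}dr+\frac{c^*}{b}\tilde\mu'\Phi^{-1}_k\tilde\mu\right\},
\]
with the convention $\log(x)=-\infty$ for $x\le 0$, and plug in $\tilde\mu=\tilde\mu^\dagger:=\frac{1}{b(1-b)}\int^{1-b}_{0}D_k(r;b)dr$, the uniformly weighted barycenter of the limiting smoothed moment process. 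Since $r\mapsto D_k(r;b)=W_k(r+b)-W_k(r)$ is a.s.\ continuous on $[0,1-b]$, the vector $\tilde\mu^\dagger$ is a.s.\ finite, and the penalty term $\frac{c^*}{b}(\tilde\mu^\dagger)'\Phi^{-1}_k\tilde\mu^\dagger$ is a.s.\ finite (here $\Phi_k$ is a.s.\ invertible, as is implicit in the definition of $U_{pbel,k}(b)$).

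The key step is to show that, with this $\tilde\mu^\dagger$, the inner maximization is exactly $0$. By construction $\int^{1-b}_{0}\{D_k(r;b)/b-\tilde\mu^\dagger\}dr=0$; writing $g(r):=D_k(r;b)/b-\tilde\mu^\dagger$, the inequality $\log(1+x)\le x$ gives, for every $\tilde\lambda$ with $1+\tilde\lambda'g(r)>0$ for a.e.\ $r$,
\[
\int^{1-b}_{0}\log\{1+\tilde\lambda'g(r)\}dr\le\tilde\lambda'\int^{1-b}_{0}g(r)dr=0,
\]
whereas for any other $\tilde\lambda$ the integrand equals $-\infty$ on a set of positive measure, so the integral is $-\infty$; taking $\tilde\lambda=0$ attains the value $0$. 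Hence the inner maximum equals $0$, and therefore $0\le U_{pbel,k}(b)\le\frac{c^*}{b}(\tilde\mu^\dagger)'\Phi^{-1}_k\tilde\mu^\dagger<\infty$ a.s.\ (the lower bound because, for each $\tilde\mu$, the inner maximum is $\ge 0$ at $\tilde\lambda=0$ and the penalty is nonnegative). This yields $P(U_{pbel,k}(b)<\infty)=1$.

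The only delicate point is the evaluation of the inner $\max_{\tilde\lambda}$ at $\tilde\mu^\dagger$: a priori it could be $+\infty$ (this is exactly what happens in the unpenalized $U_{el,k}(b)$ when the origin falls outside $\mathcal{H}(b)$), but the concavity of $\log$ together with the centering identity $\int^{1-b}_{0}g(r)dr=0$ forces it to be finite---indeed zero. This is the mechanism by which penalization removes the convex-hull (mismatch) problem: even when $0\notin\mathcal{H}(b)$ one may re-center the moment process at its own barycenter, which lies in the closure of $\mathcal{H}(b)$, at the cost of only a finite quadratic penalty, and no restriction on $b$, $c^*$, or the dependence structure is needed beyond $b\in(0,1)$ and $c^*>0$.
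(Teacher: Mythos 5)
Your proof is correct and follows essentially the same route as the paper's: both evaluate the outer minimization at the barycenter of the limiting smoothed moment process and exploit the centering identity $\int_0^{1-b}\{D_k(r;b)/b-\tilde\mu^\dagger\}\,dr=0$ to control the inner maximization, leaving only the a.s.\ finite quadratic penalty. Your finishing step is in fact slightly sharper and cleaner: the pointwise bound $\log(1+x)\le x$ shows the inner maximum equals exactly $0$, whereas the paper only argues it is a.s.\ finite by showing (via a contradiction with the centering identity) that the origin is an interior point of the re-centered convex hull.
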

Thus compared to the BEL, the PBEL is
well defined and does not suffer from the convex hull violation
problem  in both large sample and  finite sample cases, though it involves the
choice of additional tuning parameters such as $c^*$ and $\Psi_n$.

Note that when $c^*=\infty$, we have $\mu^*=0$
and the PBEL ratio statistic reduces to the BEL ratio statistic. %In
%the mean case, we have $f_{tn}(\theta)-\mu=f_{tn}(\theta-\mu)$. Thus
On the other hand,
\begin{equation}
\begin{split}
elr_{pbel}(\theta)=c^*\min_{\mu\in\mathbb{R}^k}\left\{\frac{2}{c^*nb}\max_{\lambda\in\mathbb{R}^k}\sum^{N}_{t=1}\log
\left\{1+\lambda'(f_{tn}(\theta)-\mu)\right\}+\frac{n}{b}\delta_n\left(\mu\right)\right\},
\end{split}
\end{equation}
and
$$\max_{\lambda\in\mathbb{R}^k}\sum^{N}_{t=1}\log
\left\{1+\lambda'\left(f_{tn}(\theta)-\frac{1}{N}\sum^{N}_{t=1}f_{tn}(\theta)\right)\right\}=0.$$
%In the case of $k>p,$ the PBEL ratio statistic with small $c^*$ is
%related to the overidentifying restriction test statistic [see
%Kitamura (1997)].
Thus for small $c^*$, the minimizer $\mu^*$ should be close to
$\sum^{N}_{t=1}f_{tn}(\theta)/N$. In this case, the penalty term
dominates and the PBEL ratio statistic evaluated at the true parameter value behaves like the self-normalized score
statistic which is defined as
\begin{equation}\label{eq:self-test}
\mathcal{S}_n(\theta_0)=n\delta_{n}\left(\sum^{N}_{t=1}f_{tn}(\theta_0)/N\right)=n\left(\sum^{N}_{t=1}f_{tn}(\theta_0)/N\right)'\Psi_n^{-1}(\hat{\theta}_n)\left(\sum^{N}_{t=1}f_{tn}(\theta_0)/N\right).
\end{equation}
We call $\mathcal{S}_n(\theta_0)$ the self-normalized score statistics as $f_{tn}(\theta)$ plays the role of the score in likelihood-based inference and the
self-normalizer $\Psi_n(\hat{\theta}_n)$ is an inconsistent estimator of the asymptotic variance matrix $\Omega$ in the spirit of the self-normalized approach of Shao (2010).
Therefore, based on the quadratic distance measure, the penalized
BEL ratio statistic can be viewed as a combination of the BEL ratio
statistic and the self-normalized score statistic.

\begin{remark}\label{overidentified}
{\rm When the moment condition is overidentified (i.e. $k>p$), we
shall consider the normalization matrix $\Psi_n=\Psi_n(\hat{\theta}_n)$ with $\hat{\theta}_n$ being a
preliminary estimator such as the one-step GMM estimator with the
weighting matrix $W_n\rightarrow^p W_0$, where $W_0$ is a $k\times k$
positive definite matrix. To illustrate the idea, define $G_t(\theta)=\frac{1}{n}\sum^{t}_{j=1}\partial
f(z_j,\theta)/\partial \theta'$ and $G_0=\E [G_n(\theta_0)]$. Let
$\hat{u}_j=(G_n'(\hat{\theta}_n)W_nG_n(\hat{\theta}_n))^{-1}G_n'(\hat{\theta}_n)W_nf(z_j,\hat{\theta}_n).$
Consider the normalization matrix
$\Psi_n(\hat{\theta}_n)=\frac{1}{n}\sum^{n}_{t=1}\sum^{n}_{j=1}Q(t/n,j/n)\hat{u}_t\hat{u}_j'$.
Under suitable conditions [see Kiefer and Vogelsang (2005)], it can
be deduced that $\Psi_n(\hat{\theta}_n)\rightarrow^d \Delta
\int^{1}_{0}\int^{1}_{0}Q(r,s)dB_p(r)dB_p(s)\Delta'$, where
$\Delta\in \mathbb{R}^{p\times p}$ is an
invertible matrix such that
$\Delta\Delta'=(G_0'W_0G_0)^{-1}G_0'W_0\Omega
W_0G_0(G_0'W_0G_0)^{-1}.$ In this case, the PBEL ratio test
statistic can be defined as,
\begin{align*}
elr_{pbel}(\theta)=&\min_{\mu\in\mathbb{R}^p}\left\{\frac{2}{nb}\max_{\lambda\in\mathbb{R}^p}\sum^{N}_{t=1}\log
\left\{1+\lambda'\left(g_{tn}(\theta)-\mu\right)\right\}+\frac{\tau}{b}\mu'\Psi_n^{-1}(\hat{\theta}_n)\mu\right\},
\end{align*}
where $g_{tn}(\theta)=(G_n'(\hat{\theta}_n)W_nG_n(\hat{\theta}_n))^{-1}G_n'(\hat{\theta}_n)W_nf_{tn}(\theta)$
is the transformed smooth moment condition. Following the arguments above, it can be shown that $elr_{pbel}(\theta_0)$ admits the same pivotal limit,
\begin{equation}
\begin{split}
elr_{pbel}(\theta_0)\rightarrow^d {U}_{pbel,p}(b)=&
\min_{\tilde{\mu}\in\mathbb{R}^p}\left\{\frac{2}{b}\max_{\tilde{\lambda}\in\mathbb{R}^p}\int^{1-b}_{0}\log
\left\{1+\tilde{\lambda}'(
D_p(r;b)/b-\tilde{\mu})\right\}dr+\frac{c^*}{b}\tilde{\mu}'\Phi^{-1}_p\tilde{\mu}\right\}.
\end{split}
\end{equation}}
\end{remark}

%\end{document}

\subsection{Penalized EBEL}\label{pebel}
As demonstrated in Section \ref{subsec:bounds}, the EBEL suffers
seriously from the convex hull violation problem in finite sample.
To deal with the convex hull condition, we introduce the penalized
version of the EBEL (PEBEL) which is shown to provide significant
finite sample improvement in Section \ref{sec:numerical}. We
describe the idea for exactly identified moment condition models.
The results below can be extended to more general cases following the discussion in
Remark \ref{overidentified}. Recall that
$\tilde{f}_{tn}(\theta)=\frac{\omega(t/n)}{n}\sum^{t}_{j=1}f(z_j,\theta)$
for $t=1,2,\dots,n$. We consider the PEBEL ratio test statistic
which is defined as
\begin{align*}
elr_{pebel}(\theta)=&-\frac{1}{n}\log\left\{n^{n}\mathcal{L}_{pebel,n}(\theta)\right\}
=\min_{\pi\in\mathfrak{F}_n}\left\{-\frac{1}{n}\sum^{n}_{t=1}\log
(n\pi_t)+\tau\delta_n(\tilde{\mu}_{\pi}\left(\theta)\right)\right\},\quad \tau=c^*n,
\end{align*}
where
\begin{equation}
\mathcal{L}_{pebel,n}(\theta)=\max_{\pi\in\mathfrak{F}_n}\prod^{n}_{t=1}\pi_t\exp\left\{-n\tau\delta_n(\tilde{\mu}_{\pi}\left(\theta)\right)\right\},
\end{equation}
and
$\tilde{\mu}_{\pi}(\theta)=\sum^{n}_{t=1}\pi_t\tilde{f}_{tn}(\theta)$
with $\pi=(\pi_1,\dots,\pi_n)\in\mathfrak{F}_n$. Following similar
derivations in the proof of Proposition \ref{prop-1}, we deduce that
\begin{equation}\label{pen-ebel1}
\begin{split}
elr_{pebel}(\theta)=&\min_{\mu\in\tilde{\mathcal{H}}_n(\theta)}\left\{\frac{1}{n}\max_{\lambda\in\mathbb{R}^k}\sum^{n}_{t=1}\log
\left\{1+\lambda'(\tilde{f}_{tn}(\theta)-\mu)\right\}+\tau\delta_n\left(\mu\right)\right\}\\
=&\min_{\mu\in\mathbb{R}^k}\left\{\frac{1}{n}\max_{\lambda\in\mathbb{R}^k}\sum^{n}_{t=1}\log
\left\{1+\lambda'(\tilde{f}_{tn}(\theta)-\mu)\right\}+\tau\delta_n\left(\mu\right)\right\},
\end{split}
\end{equation}
where $\tilde{\mathcal{H}}_n(\theta)$ denotes the convex hull of $\{\tilde{f}_{tn}(\theta)\}^{n}_{t=1}$. Under suitable assumptions [see Nordman et al. (2013)], it can be
shown that
\begin{equation}\label{pen-ebel2}
\begin{split}
elr_{pebel}(\theta_0)\rightarrow^d
\min_{\tilde{\mu}\in\mathbb{R}^k}\left\{\max_{\tilde{\lambda}\in\mathbb{R}^k}\int^{1}_{0}\log
\left\{1+\tilde{\lambda}'(
\omega(r)W_k(r)-\tilde{\mu})\right\}dr+c^*\tilde{\mu}'\Phi^{-1}_k\tilde{\mu}\right\}.
\end{split}
\end{equation}
Notice that the PEBEL is free of $b$, but again it requires the
choice of a tuning parameters $c^*.$ For large $c^*$, we have
$\mu^*\approx 0$ and $\delta_n(\mu^*)\approx 0$ with $\mu^*$ being
the minimizer in (\ref{pen-ebel1}). Thus the PEBEL behaves like the
EBEL when $c^*$ is large. Following the discussion in Section
\ref{pel}, as $c^*$ becomes close to zero, $\mu^*$ gets near
$\sum^{n}_{t=1}\tilde{f}_{tn}(\theta)/n$ which satisfies that
$\max_{\lambda\in\mathbb{R}^k}\sum^{n}_{t=1}\log
\left\{1+\lambda'(\tilde{f}_{tn}(\theta)-\sum^{n}_{t=1}\tilde{f}_{tn}(\theta)/n)\right\}=0.$
Thus for small $c^*$, the behavior of the PEBEL ratio statistic evaluated at the true parameter value is
closely related to the self-normalized score statistic given by
\begin{equation}\label{eq:pebel-c=0}
%\frac{1}{n}\max_{\lambda\in\mathbb{R}^k}\sum^{n}_{t=1}\log
%\left\{1+\lambda'\left(\tilde{f}_{tn}(\theta)-\sum^{n}_{t=1}\tilde{f}_{tn}(\theta)/n\right)\right\}+c^*n\delta_n\left(\sum^{n}_{t=1}\tilde{f}_{tn}(\theta)/n\right)
\tilde{\mathcal{S}}_n(\theta_0)=n\delta_n\left(\sum^{n}_{t=1}\tilde{f}_{tn}(\theta_0)/n\right)
=n\left(\sum^{n}_{t=1}\tilde{f}_{tn}(\theta_0)/n\right)'\Psi_n^{-1}(\hat{\theta}_n)\left(\sum^{n}_{t=1}\tilde{f}_{tn}(\theta_0)/n\right).
\end{equation}
%To conduct inference, one can simulate the critical values from the
%nonstandard limiting distribution in (\ref{pen-ebel2}).
\begin{remark}
{\rm To resolve the coverage upper bound problem, one may consider adjusted versions of BEL and EBEL,
which retain the formulation of BEL and EBEL but add one or two pseudo-observations to the sample [see Chen et al. (2007);
Emerson and Owen (2009)]. However, a direct extension to the current setting may not work due to temporal dependence in moment conditions.
A possible strategy is to add a small fraction of artificial data points instead of one or two pseudo-observations, and derive the limiting distributions under the
fixed-$b$ asymptotics. This approach also requires the choice of additional tuning parameters such as the fraction of points being added, and we leave it for further investigation.
}
\end{remark}

%\subsection{Choice of the tuning parameters}
%Simulation studies in Section \ref{sec:numerical} indicate that a
%choice of relatively small $c^*$ e.g. 0.01 in PBEL and PEBEL
%generally leads to good performance in terms of coverage
%probabilities and interval widths (for PEBEL). Thus for practical
%implementation, a small value of $c^*$ is suggested if the goal is
%to construct confidence intervals. As pointed out in Section
%\ref{pel} (also see (\ref{eq:pebel-c=0}) above), the penalized test
%statistics with small $c^*$ behave like the Wald statistic. When the
%goal is to construct a test statistic behaves like the BEL or EBEL
%while not suffering from the coverage upper bound problem, a
%relatively large $c^*$ is preferable. For the choice of $\Psi_n$, we
%suggest the use of the self-normalization matrix in Shao (2010)
%which generally performs well and does not involve the choice of an
%additional bandwidth parameter which is typically required in the HAC estimation.

\section{Numerical results}\label{sec:numerical}
In this section, we conduct simulation studies to evaluate the
finite sample performance of the penalization methods
proposed in Section \ref{sec:remedy}. We shall focus on the
confidence region for the mean of univariate/multivariate time
series. In the univariate case, we consider the AR(1) process
$z_t=\rho z_{t-1}+\varepsilon_t$ with $\rho=-0.5, 0.2, 0.5, 0.8$,
and the MA(1) process $z_t=\theta\epsilon_{t-1}+\epsilon_t$ with
$\theta=-0.5, 0.2, 0.5, 0.95$, where $\{\varepsilon_t\}$ and
$\{\epsilon_t\}$ are two sequences of i.i.d standard normal errors.
In the multidimensional case (i.e. $k>1$), we generate multivariate
time series with each component being independent AR(1) or MA(1)
process. The sample sizes considered are $n=100$ and 400. In the supplementary material, we present additional simulation results
for time series regression models, where the results are qualitatively similar to those for the mean.

\subsection{PBEL}
To implement the PBEL, we consider the self-normalization matrix $\Psi_n$ [Shao (2010)]
which is defined as,
\begin{equation}\label{eq:self}
\Psi_n(\hat{\theta}_n)=\frac{1}{n}\sum^{n}_{i=1}\sum^{n}_{j=1}\left(1-\left|\frac{i-j}{n}\right|\right)(z_i-\bar{z}_n)(z_j-\bar{z}_n)',
\end{equation}
where $\hat{\theta}_n=\bar{z}_n=\sum^{n}_{j=1}z_j/n.$ The tuning
parameter $c^*$ is chosen between 0.01 and 2. As pointed out in
Section \ref{pel}, the limiting distribution of the PBEL under the
fixed-$b$ asymptotics is pivotal and it can be approximated
numerically. Table \ref{tab:crit-PBEL} in the supplementary material summarizes the simulated
critical values for the limiting distributions of BEL and PBEL.
Selected simulation results are presented in Figures
\ref{fig:pel-ar-k1}-\ref{fig:pel-ar}. In the univariate case, the performance of PBEL
with $c^*=2$ and BEL are generally comparable in terms of the coverage probability and interval width. The PBEL with
$c^*=0.01$ delivers more accurate coverage as compared to the two alternatives especially when the
positive dependence is strong, although the corresponding interval width is slightly wider for relatively small $b$.
This finding is presumably due to the fact that the finite sample bounds for BEL do not deviate much from
the unity for $k=1$ and not quite large $b$ (see Table
\ref{tab:bound-1}). The simulation results for the MA models are
quantitatively similar and thus not presented here to conserve
space. In the case of $k=2$, the PBEL tends to provide better
coverage uniformly over $b$ as compared to the BEL (when the
dependence is positive). The improvement becomes more significant as
the block size grows. Also the PBEL with $c^*=0.01$ delivers the
most accurate coverage in most cases. Unreported numerical results
show that for $k=1,2$ and $c^*$ between $0.01$ and $2$, the
performance of PBEL is generally between the two cases reported
here. To assess the impact of dimensionality, we present the
coverage probabilities for the PBEL with $b=0.05,0.1$, and various
$c^*$ when $k=5$ (see the right column of Figure \ref{fig:pel-ar}).
Along with Table \ref{tab:pbel-k=5}, which also shows the coverage bound for the case $k=10$,
 we see that PBEL with
suitable $c^*$ offers improvement over the unpenalized counterpart.
The coverage upper bound problem clearly shows up for BEL especially
when the dependence is strong and dimension $k$ is large.
We also note that the choice of $c^*$ (that delivers the most accurate coverage)
is delicate in this case as it depends on $b$, the sample size $n$
and the underlying dependence structure. Overall, the finite sample
performance of the PBEL is satisfactory in terms of delivering
better coverage (especially when the bound is substantially below one)
as compared to the BEL under the fixed-$b$
asymptotics.

%Of course, the choices of the tuning
%parameter $c^*$ and the normalization $\Psi_n$ involved in the PBEL
%are important practical issues and the finite sample performance of
%PBEL can depend on $c^*$ and $\Psi_n$. A serious investigation from
%both numerical and theoretical perspectives is certainly worthwhile and is left for future research.

\subsection{PEBEL}\label{sub:sim-pebel}
We implement the PEBEL with $\Psi_n$ being the self-normalization
matrix in (\ref{eq:self}) and various choices of $c^*$. The simulated critical values
for the PEBEL are summarized in Table \ref{tab:crit-EBEL} in the supplementary material. We
present the coverage probabilities and interval widths for the unweighted EBEL and PEBEL
(i.e. $w(t)=1$) in Figures \ref{fig:pebel-ar-k1}-\ref{fig:pebel-ar}. Compared to the EBEL,
the PEBEL significantly improves coverage probabilities in all cases
considered here. The right column of Figure \ref{fig:pebel-ar-k1} suggests that the
PEBEL is also able to deliver smaller interval widths for the range of $c^*$ being considered.
In the univariate case, the choice of small $c^*$ seems to provide both better coverage and shorter interval
width. In the case of $k=2$, a relatively large $c^*$ tends to
provide good coverage as well, and the performance of PEBEL is not
affected much by the choice of $c^*$. It is worth noting that when
$k=5$, the performance of PEBEL deteriorates for $c^*=2$, which,
along with the above findings in the cases $k=1,2$,
suggests that the optimal  $c^*$ that delivers the most accurate coverage and the sensitivity
of the coverage with respect to $c^*$ can very much depend on the underlying dimensionality $k$.
To sum up, the numerical results
demonstrate the usefulness of the PEBEL as it provides significant
improvement over the EBEL provided that $c^*$ is suitably chosen.

In view of the right column of Figure \ref{fig:pebel-ar}, there
seems to be an optimal $c^*$ in terms of delivering the most
accurate coverage when $\rho=0.5$ and 0.8. Below we present a simple
 block bootstrap based method for choosing the tuning parameter
$c^*$. Suppose $n=b_nl_n$ where $b_n,l_n\in\mathbb{Z}$. Conditional
on the sample $\{z_l\}_{l=1}^{n}$, we let $M_1,\dots,M_{l_n}$ be
i.i.d uniform random variables on $\{0,\dots,l_n-1\}$ and define
$z_{(j-1)b_n+i}^*=z_{M_jb_n+i}$ with $1\leq j\leq l_n$ and $1\leq
i\leq b_n.$ In other words, $\{z_t^*\}_{t=1}^{n}$ is a
non-overlapping block bootstrap sample with block size $b_n$. For
each $c^*$, we can compute the times that the sample mean
$\bar{z}_n$ is contained in the confidence region constructed based
on the bootstrap sample $\{z_l^*\}_{l=1}^{n}$ and then compute the
empirical coverage probabilities based on $B$ bootstrap samples.
This is based on the notion that $\bar{z}_n$ is the true mean for
the bootstrap sample conditional on the data and the $c^*$ that
delivers the most accurate coverage for bootstrap sample is an
estimate of the optimal $c^*$ for the original series. Specifically,
we consider the case $n=100$ and $b_n=5$, and set $B=100$ and the
number of Monte Carlo replication to be 100 to see if this scheme
works well. For VAR(1) model with the coefficient matrix being $0.5
I_5$, the coverage probability based on the above tuning parameter
selection procedure is 98\% and the most frequently chosen $c^*$ is
0.4 (33\%). When the coefficient matrix is $0.8 I_5$, the
corresponding coverage probability is 90\% and the most frequently
chosen $c^*$ is 0.1 (51\%), which is identical to the empirically
optimal $c^*$ as seen from the third plot in the right column of
Figure \ref{fig:pebel-ar}. Hence the method of choosing the tuning
parameter $c^*$ seems to perform quite well. We shall leave a more
detailed examination of this bootstrap based tuning parameter
selection method in a separate work.

\section{Records of hemispheric temperatures}
To further illustrate the finite sample performance, we
apply the penalization methods (PBEL and PEBEL) and their unpenalized counterparts
to the so-called hemisphere
temperature anomaly time series (HadCRUT3v) available from the
Climate Research Unit (U.K.). The data, consisting of adjusted
monthly temperature averages from 1850 to 2010, combines the land
and marine gridded temperature anomalies, after correcting for
nonclimatic (e.g., instrumental) errors and adjusting the variance
[see e.g. Rayner et al. (2006), Jones et al. (2011) and references
therein for more details about the data set]. Following Kim et al.
(2013), we consider the annual average anomalies for months
December-January-February (DJF) and June-July-August (JJA) over the
years 1850-2009 in both northern and southern hemispheres; the DJF
values are means of average temperature anomaly of December of the
current year and January and February of the next year. We consider
fitting a simple linear regression model
$$Y_t=X_t\theta+\epsilon_t,\quad t=1,\dots,160,$$
for predicting the DJF temperature anomalies $\{Y_t\}$ from the JJA
ones $\{X_t\}$. Define the estimating equation
$f(Z_t,\theta)=X_t(Y_t-X_t\theta)$ with $Z_t=(X_t,Y_t)'$. If the model
is correctly specified and $\E X_t\epsilon_t=0,$ then $\E
f(Z_t,\theta_0)=0$ with $\theta_0$ being the true parameter. We
apply the penalization methods and their unpenalized versions to
compute 95\% confidence intervals for $\theta_0$ (see Table
\ref{tab:temp}). Since $\theta_0$ is unknown to us, it makes a fair comparison of various
EL methods difficult as we do not really know if the constructed confidence interval covers
$\theta_0$ or not. To this end, we propose to apply the EL methods to non-overlapping bootstrap sample
which mimics the dependence structure of the original time series,  and make a fair comparison.
In particular, let $n=b_nl_n$ where $n=160$ and $b_n,l_n\in\mathbb{Z}$. Let $M_1,\dots,
M_{l_n}$ be i.i.d uniform random variables on $\{0,\dots,l_n-1\}$
and let
$(X_{(j-1)b_n+i}^*,Y_{(j-1)b_n+i}^*)=(X_{M_jb_n+i},Y_{M_jb_n+i})$
with $1\leq j\leq l_n$ and $1\leq i\leq b_n.$ It is not hard to
verify that $\E^*\sum^{n}_{t=1}X_t^*(Y_t^*-X_t^*\hat{\theta})=0,$
where $\hat{\theta}=\sum^{160}_{t=1} X_tY_t/\sum^{160}_{t=1} X_t^2$
is the ordinary least-square (OLS) estimator and $\E^*$ denotes the
expectation conditional on the sample $\{X_t,Y_t\}_{t=1}^{160}.$ Thus for the bootstrap sample, the true $\theta$
is $\hat{\theta}$ conditional on the data and we can compute the empirical coverage probabilities for $\hat{\theta}$ based on 1000
bootstrap samples, where the block size $b_n$ is chosen to be 4 or
8. It is seen from Table \ref{tab:temp} that for the northern
hemisphere, undercoverage occurs for BEL, while PBEL with suitable
choice of $c^*$ can deliver better coverage. In such cases, the
corresponding interval widths delivered by PBEL based on the original
data are wider. For the southern hemisphere, BEL provides quite
accurate coverage and PBEL with $c^*=1,2$ are comparable with BEL in
terms of the coverage accuracy based on the bootstrap samples and
the confidence intervals based on the original data. In view of
Table \ref{tab:temp}, PEBEL provides better coverage compared to the
unpenalized version for all cases considered here. For the northern
hemispheric temperature anomalies, the PEBEL based confidence
intervals are wider while for the southern ones, PEBEL delivers
shorter interval widths. Based on 1000 bootstrap samples, we can
compute the percentages
of convex hull violation for EBEL. For the northern hemisphere, the upper bounds are 90.1\% and 88.2\% for $b_n=4$ and $8$ respectively; for
the southern hemisphere, the  upper bounds are 93.2\%
and 96.5\%, showing a serious deficit of EBEL method.
It is worth pointing out that the penalized methods
generally deliver wider interval widths for the northern hemispheric
data (in particular, PEBEL seems quite conservative in this case).
This finding might be due to the following facts. First, the JJA
temperature anomalies appear to be worse predictors for the DJF
anomalies in the northern hemisphere (with adjusted $R$-squared
0.6234) than in the southern hemisphere (with adjusted $R$-squared
0.8768). Second, the plot of $f(Z_t,\hat{\theta})$ in the northern
hemisphere (Figure \ref{fig:temp}) tends to exhibit certain
nonstationarity features (e.g. in the second order property), which
may pose difficulty in constructing a confidence interval for
$\theta$.

\section{Conclusion}
In this paper, we study the upper bounds on the coverage
probabilities of the BEL and EBEL based confidence regions via
theory and simulations. Our theoretical results, which are derived
for the pivotal limit of the BEL ratio obtained under the fixed-$b$
asymptotics,  suggest that the large sample coverage upper bound
for BEL is strictly less than one for any $b\in (0,1)$. This result
is in sharp contrast to those corresponding to the EL for
independent moment conditions, where the large sample bound is
always equal to one due to the $\chi^2$ limit. By numerical
simulations, we discover that the finite sample coverage bounds for
both BEL and EBEL  can be far below the nominal level in the cases
when (\rmnum{1}) the dimension of moment condition $k$ is moderate or
high; (\rmnum{2}) the dependence of moment conditions is positively strong;
or (\rmnum{3}) $b$ is large for BEL. The deterioration of  the coverage
for the EBEL based confidence region with respect to $k$ is
especially noticeable. These phenomena appear to be discovered
 for the first time for these two important EL methods in the time series context, which will hopefully
lead to a new research direction on EL methods for dependent data.

To overcome the convex hull constraint and related undercoverage
problem, we introduce the penalization based  BEL and EBEL methods,
which drop the convex hull constraint and penalize the original
EL using the quadratic distance measure, and
derive their limiting distributions under the fixed-$b$ asymptotics.
Interestingly, the penalization generates a new class of statistics
which lies in between the empirical log-likelihood ratio statistic
and the self-normalized score statistic through the choice of a
tuning parameter $c^*$. Our simulation studies show that the
penalization based methods can outperform their unpenalized
counterparts in terms of coverage accuracy especially when the
coverage bound is below the nominal level. In addition, we propose a
method of choosing the tuning parameter and demonstrate its
effectiveness through a simulation example. It is worth mentioning
that our techniques (i.e., fixed-$b$ asymptotics and penalization)
are expected to be extendable to other variants of
EL methods for time series or spatial data, such as tapered
blockwise EL [Nordman (2009)] and spatial EL
[Nordman and Caragea (2008)]. We shall leave these for future
investigation.
% In the context of confidence interval construction, we
%suggest the choice of a relatively small $c^*$ (when the dimension
%of the moment conditions is small or moderate). If the goal is to
%construct a test statistic behaves like the BEL or EBEL while not
%suffering from the coverage upper bound problem, then a large $c^*$
%is required. In real application, an adaptive choice of $c^*$ may be
%preferable. We leave this problem for future research.
%approaches are shown via simulation studies to provide more accurate
%coverage compared to the fixed-$b$ based BEL. Given the merits of
%the two proposed remedies, further investigation is of interest. For
%example, the selection of the prewhitening filter for BEL under the
%fixed-$b$ asymptotics remains to be studied. It is also interesting
%to study the choices of the tuning parameter $c^*$ and the
%normalization $\Psi_n$ involved in the PBEL. Presumably, the optimal
%selection of the tuning parameters relies on a rigorous theoretical
%study of the PBEL under the fixed-$b$ paradigm.
\\
\\
\centerline{\Large\textbf{Acknowledgments}} The authors would like to thank the Editor, the Associate Editor
and two reviewers for their constructive comments, which substantially improve the paper. Shao's research is supported in part by National Science Foundation grant DMS-11-04545.

\section{Technical appendix}
\begin{proof}[Proof of Lemma \ref{lemma1}]
Suppose $\mathcal{A}$ is unbounded. We note that
$\mathcal{A}=\cap_{r\in(0,1-b)}\{\lambda\in\mathbb{R}^k:
\lambda'D_k(r;b)\geq -1\}$ which is the intersection of a set of
closed half-spaces. The recession cone of $\mathcal{A}$ is then
given by $0^+\mathcal{A}=\cap_{r\in(0,1-b)}\{\lambda\in\mathbb{R}^k:
\lambda'D_k(r;b)\geq 0\}$ [see Section 8 of Rockafellar (1970)]. By
Theorem 8.4 of Rockafellar (1970), there exists a nonzero vector
$\tilde{\lambda}\in 0^+\mathcal{A}$, that is
$\tilde{\lambda}'D_k(r;b)\geq 0$ for all $r\in(0,1-b)$. Thus we know
$\{t_k(r;b): |D_k(r;b)|>0, r\in(0,1-b)\}$ lie on the same hemisphere
of the unit sphere $\mathcal{S}^{k-1}$ [see e.g. Wendel (1962)], and
the origin is not an interior point of the convex hull
$\mathcal{H}(D_k)$. On the other hand, if the origin is not an
interior point of the convex hull $\mathcal{H}(D_k)$. Then
$\{t_k(r;b): |D_k(r;b)|>0, r\in(0,1-b)\}$ lie on the same
hemisphere. By the supporting/separating hyperplane theorem [see e.g. Section 11 of Rockafellar (1970)], we can find a nonzero vector
$\tilde{\lambda}\in\mathbb{R}^k$ such that
$\tilde{\lambda}'D_k(r;b)\geq 0$ for all $r\in (0,1-b)$. It is easy to see that
$a\tilde{\lambda}\in\mathcal{A}$ for any $a>0,$ which implies that
$\mathcal{A}$ is unbounded.
\end{proof}

\begin{proof}[Proof of Theorem \ref{bound-k=1}]
Because $P(D_1(r;b)\geq 0,
\forall~r\in(0,1-b])=\lim_{a\downarrow 0}P(D_1(r;b)> -a,
\forall~r\in(0,1-b])$, we shall derive a formula for $P(D_1(r;b)> -a,
\forall~r\in(0,1-b])$ with $a>0.$ We first consider the case where $bL=1$ with $L$ being a positive integer. Note that
\begin{align*}
&\mathcal{B}:=\mathcal{B}(a)=\{D_1(r;b)>-a, \forall~r\in (0,1-b]\}=\{W_1(r)<W_1(r+b)+a,
\forall~r\in(0,1-b]\}
\\=&\{W_1(r)<W_1(r+b)+a<W_1(r+2b)+2a<\cdots<W_1(r+(L-1)b)+(L-1)a, \forall~r\in(0,b]\}
\\=&\{\mathcal{W}_1(r)<\mathcal{W}_2(r)<\cdots<\mathcal{W}_{L}(r), \forall~r\in(0,b]\},
\end{align*}
where $\mathcal{W}_i(r)=\mathcal{W}_i(r;a)=W_1(r+(i-1)b)+(i-1)a$ with $i=1,\dots,L$. Thus we deduce that
\begin{equation}\label{integral}
\begin{split}
&P(D_1(r;b)>-a, \forall~r\in (0,1-b])
\\=&\int\cdots\int_{\mathcal{C}(a)} P(\mathcal{B},
W_1((i-1)b)\in dx_i, i=1,2,\dots,L+1)
\\=&\int\cdots\int_{\mathcal{C}(a)} P(\mathcal{B}|
\mathcal{W}_i(0)=x_i+(i-1)a, \mathcal{W}_i(b)=x_{i+1}+(i-1)a, i=1,2,\dots,L)
\\&P(W_1((i-1)b)\in dx_i, i=1,2,\dots,L+1),
\end{split}
\end{equation}
where
$\mathcal{C}(a)=\{(x_2,\dots,x_{L+1})\in\mathbb{R}^L: -a<x_2<x_3+a<\cdots<x_{L+1}+(L-1)a\}$ and $x_1=0.$
For the first term under the integral (\ref{integral}), the
conditioned Wiener processes $\mathcal{W}_i$s are independent. Therefore, by
equation (2.12) of Shepp (1971) [also see Karlin and Mcgregor
(1959)], we obtain
\begin{align*}
&P(\mathcal{B}|
\mathcal{W}_i(0)=x_i+(i-1)a, \mathcal{W}_i(b)=x_{i+1}+(i-1)a, i=1,2,\dots,L)
\\=&q_{b,L}(v_{1}(x;a),v_{2}(x;a))/\prod^{L}_{i=1}\phi_b(x_{i+1}-x_i),
\end{align*}
where $v_1(x;a)=(x_1,x_2+a,\dots,x_{L}+(L-1)a)'$ and $v_2(x;a)=(x_2,x_3+a,\dots,x_{L+1}+(L-1)a)'$ with $x_1=0$, and
$\phi_b(\cdot)=\phi(\cdot/\sqrt{b})/\sqrt{b}$ with
$\phi(x)=\frac{1}{\sqrt{2\pi}}\exp(-x^2/2)$. By the property of
Wiener process, the second term under the integral (\ref{integral})
is simply given by
\begin{align*}
P(W_1((i-1)b)\in dx_i, i=1,2,\dots,L+1)=\prod_{i=1}^{L}\phi_b(x_{i+1}-x_i)dx_2dx_3\cdots
dx_{L+1}.
\end{align*}
Combing the above results, we have
\begin{equation}\label{eq:k=1}
\begin{split}
P(D_1(r;b)>-a, \forall~r\in
(0,1-b))=&\int_{\mathcal{C}(a)}q_{b,L}(v_{1}(x;a),v_{2}(x;a))dx_2dx_3\cdots.
dx_{L+1}.
\end{split}
\end{equation}
By letting $a\downarrow 0$ in (\ref{eq:k=1}), we deduce that
\begin{equation}
\begin{split}
P(D_1(r;b)\geq 0,\forall~r\in(0,1-b])=&\int_{0<x_2<x_3<\cdots<x_{L+1}}q_{b,L}(x_{1:L},x_{2:(L+1)})dx_2dx_3\cdots
dx_{L+1}
\\=&\int_{0<x_2<x_3<\cdots<x_{L+1}}q_{1,L}(x_{1:L},x_{2:(L+1)})dx_2dx_3\cdots
dx_{L+1},
\end{split}
\end{equation}
where $x=(x_1,\dots,x_{L+1})$.

Next, we consider the case where $Lb+\tau=1$ with $L$ being a
positive integer and $0<\tau<b$. With some abuse of notation, define
$\mathcal{W}_j(r)=\mathcal{W}_j(r;a)=W_1(r+(j-1)b)+(j-1)a$ with
$1\leq j\leq L+1$ and $r\in (0,\tau]$, and
$\mathcal{W}_l'(r')=\mathcal{W}_l'(r';a)=W_1(r'+(l-1)b+\tau)+(l-1)a$
with $1\leq l\leq L$ and $r'\in (0,b-\tau]$. Following Shepp (1971),
we have $\mathcal{B}=\mathcal{B}_1\cap \mathcal{B}_2,$ where
\begin{align*}
\mathcal{B}_1=&\{W_1(r)<W_1(r+b)+a<\cdots<W_1(r+Lb)+La, \forall~r\in (0,\tau]\}
\\=&\{\mathcal{W}_1(r)<\mathcal{W}_2(r)<\cdots<\mathcal{W}_{L+1}(r), \forall~r\in (0,\tau]\},
\end{align*}
and
\begin{align*}
\mathcal{B}_2=&\{W_1(r'+\tau)<W_1(r'+b+\tau)+a<\cdots<W_1(r'+(L-1)b+\tau)+(L-1)a,
\forall~r'\in(0,b-\tau]\}
\\=&\{\mathcal{W}_1'(r')<\mathcal{W}_2'(r')<\cdots<\mathcal{W}_{L}'(r'),
\forall~r'\in(0,b-\tau]\}.
\end{align*}
Following the arguments above, we shall consider the processes
$\mathcal{W}_j(r)$ and $\mathcal{W}_j'(r')$ conditional on their
boundary values, i.e., $\mathcal{W}_j(0)=x_{j}+(j-1)a$ and
$\mathcal{W}_j(\tau)=y_j+(j-1)a$ with $1\leq j\leq L+1$, and
$\mathcal{W}_l'(0)=y_l+(l-1)a$ and
$\mathcal{W}_{l}'(b-\tau)=x_{l+1}+(l-1)a$ with $1\leq l\leq L$. Here
we have $0=x_1<x_2+a<\cdots <x_{L+1}+La$ and
$y_1<y_2+a<\cdots<y_{L+1}+La$. Notice that conditioning on the
boundary values, the events $\mathcal{B}_1$ and $\mathcal{B}_2$ are
mutually independent, and the Wiener processes
$\{\mathcal{W}_j(r)\}_{j=1}^{L+1}$
($\{\mathcal{W}_j'(r')\}_{j=1}^{L}$) are independent. Define
$\mathcal{D}_1(a)=\{\mathcal{W}_j(0)=x_{j}+(j-1)a,
\mathcal{W}_j(\tau)=y_j+(j-1)a, 1\leq j\leq L+1\}$ and
$\mathcal{D}_2(a)=\{\mathcal{W}_l'(0)=y_l+(l-1)a,
\mathcal{W}_{l}'(b-\tau)=x_{l+1}+(l-1)a, 1\leq l\leq L\}$. We deduce
that
\begin{align*}
P(\mathcal{B})=&\int\cdots \int_{\mathcal{C}'(a)}
P(\mathcal{B}|\mathcal{D}_1\cap\mathcal{D}_2)P(W_{1}((j-1)b)\in dx_j,W_1(\tau+(j-1)b)\in dy_j,1\leq
j\leq L+1)
\\=&\int\cdots\int_{\mathcal{C}'(a)}
P(\mathcal{B}_1|\mathcal{D}_1)P(\mathcal{B}_2|\mathcal{D}_2)P(W_{1}((j-1)b)\in dx_j,W_1(\tau+(j-1)b)\in dy_j,1\leq
j\leq L+1),
\end{align*}
where
$\mathcal{C}'(a)=\{(y_1,x_2,y_2,\cdots,x_{L+1},y_{L+1})\in\mathbb{R}^{2L+1}:
-a<x_2<\cdots <x_{L+1}+(L-1)a, y_1<y_2+a<\cdots<y_{L+1}+La\}$. Let
$x=(x_1,\dots,x_{L+1})'$ with $x_1=0$, and $y=(y_1,\dots,y_{L+1})'$.
Define $u_1(x;a)=(x_1,x_2+a,\dots,x_{L+1}+La)'$,
$u_1'(y;a)=(y_1,y_2+a,\dots,y_{L+1}+La)'$,
$u_2(x;a)=(x_2,x_3+a,\dots,x_{L+1}+(L-1)a)'$ and
$u_2'(y;a)=(y_1,y_2+a,\dots,y_{L}+(L-1)a)'$. Using the
fact that [see (2.12) of Shepp (1971)],
\begin{align*}
&P(\mathcal{B}_1|\mathcal{D}_1)=q_{\tau,L+1}(u_1(x;a),u_1'(y;a))/\prod^{L+1}_{i=1}\phi_{\tau}(y_i-x_i),\\
&P(\mathcal{B}_2|\mathcal{D}_2)=q_{b-\tau,L}(u_2(x;a),u_2'(y;a))/\prod^{L}_{i=1}\phi_{b-\tau}(y_i-x_{i+1}),
\end{align*}
and
\begin{align*}
&P(W_{1}((j-1)b)=x_j,W_1(\tau+(j-1)b)=y_j,1\leq j\leq L+1)
\\=&\prod^{L}_{i=1}\phi_{b-\tau}(y_i-x_{i+1})\prod^{L+1}_{i=1}\phi_{\tau}(y_i-x_i)dy_1dx_2dy_2\dots
dx_{L+1}dy_{L+1}.
\end{align*}
It thus implies that
\begin{align*}
&P(D_1(r;b)>-a, \forall~r\in (0,1-b])
\\=&\int\cdots \int_{\mathcal{C}'(a)}
q_{\tau,L+1}(u_1(x;a),u_1'(y;a))q_{b-\tau,L}(u_2(x;a),u_2'(y;a))dy_1dx_2dy_2\dots
dx_{L+1}dy_{L+1}.
\end{align*}
By letting $a\downarrow 0$, we obtain
\begin{equation}
\begin{split}
&P(D_1(r;b)\geq 0, \forall~r\in (0,1-b])
\\=&\int\cdots\int_S
q_{\tau,L+1}(x,y)q_{b-\tau,L}(x_{2:(L+1)},y_{1:L})dy_1dx_2dy_2\cdots
dx_{L+1}dy_{L+1}
\\=&\int\cdots\int_S
q_{\xi,L+1}(x,y)q_{1-\xi,L}(x_{2:(L+1)},y_{1:L})dy_1dx_2dy_2\cdots
dx_{L+1}dy_{L+1},
\end{split}
\end{equation}
where $\xi=\tau/b$ and
$S=\{(y_1,x_2,y_2,\cdots,x_{L+1},y_{L+1})\in\mathbb{R}^{2L+1}:
0=x_1<x_2<\cdots <x_{L+1}, y_1<y_2<\cdots<y_{L+1}\}$. In fact, one can
derive the results presented above by applying the results in Shepp
(1971) and the scaling property of Wiener process, i.e.,
$W(rb)/\sqrt{b}$ is another Wiener process. We present the  details
for the sake of clarity.
\end{proof}

\begin{proof}[Proof of Proposition \ref{prop-1}]
When $\mu\notin \mathcal{H}_n(\theta;b)$ i.e., the origin is outside
the convex hull of $\{f_{tn}(\theta)-\mu\}^{N}_{t=1}$, and the space spanned by $\{f_{tn}(\theta)\}^{N}_{t=1}$ is of $k$ dimension,
the separating hyperplane theorem implies that there exits a $\lambda_0:=\lambda_0(\theta)\in\mathbb{R}^k$ such that $\lambda_0'(f_{tn}(\theta)-\mu)\geq 0$ for all $1\leq t\leq N$
and $\lambda_0'(f_{t_0n}(\theta)-\mu)>0$ for at least one $1\leq t_0\leq N$. Thus we have
$$\max_{\lambda\in\mathbb{R}^k}\sum^{N}_{t=1}\log
\left\{1+\lambda'(f_{tn}(\theta)-\mu)\right\}\geq \max_{a\geq 0}\sum^{N}_{t=1}\log
\left\{1+a\lambda'_0(f_{tn}(\theta)-\mu)\right\}=+\infty,$$
which implies that
\begin{equation}
\begin{split}
elr_{pbel}(\theta)=&\min_{\mu\in\mathcal{H}_n(\theta;b)}\left\{\frac{2}{nb}\max_{\lambda\in\mathbb{R}^k}\sum^{N}_{t=1}\log
\left\{1+\lambda'(f_{tn}(\theta)-\mu)\right\}+\frac{\tau}{b}\delta_n\left(\mu\right)\right\}\\
=&\min_{\mu\in\mathbb{R}^k}\left\{\frac{2}{nb}\max_{\lambda\in\mathbb{R}^k}\sum^{N}_{t=1}\log
\left\{1+\lambda'(f_{tn}(\theta)-\mu)\right\}+\frac{\tau}{b}\delta_n\left(\mu\right)\right\}.
\end{split}
\end{equation}
\end{proof}

\begin{proof}[Proof of Proposition \ref{boundness}].
From the definition
of $elr_{pbel}(\theta)$, we get
\begin{equation}\label{pen-equ1}
elr_{pbel}(\theta)\leq
\frac{2}{nb}\max_{\lambda\in\mathbb{R}^k}\sum^{N}_{t=1}\log
\left\{1+\lambda'(f_{tn}(\theta)-\bar{f}_{n}(\theta))\right\}+\frac{\tau}{b}\delta_n\left(\bar{f}_{n}(\theta)\right),
\end{equation}
where $\bar{f}_{n}(\theta)=\frac{1}{N}\sum^{N}_{t=1}f_{tn}(\theta)$.
When evaluated at $\theta=\theta_0$, the RHS of (\ref{pen-equ1})
converges in distribution to,
$$ \widetilde{U}_k(b)=\frac{2}{b}\max_{\lambda\in\mathbb{R}^k}\int^{1-b}_{0}\log
\left\{1+\lambda'(D_k(r;b)-\bar{D}_k(b))\right\}dr+\frac{c^*}{b^3}\bar{D}_k(b)'\Phi^{-1}_k\bar{D}_k(b),$$
where $\bar{D}_k(b)=\frac{1}{1-b}\int^{1-b}_{0}D_k(r;b)dr.$ Thus we see that
\begin{equation}\label{pen-equ2}
P\left(U_{pbel,k}(b)\leq \widetilde{U}_k(b)\right)=1.
\end{equation}
Because $\int^{1-b}_{0}\left\{D_k(r;b)-\bar{D}_k(b)\right\}dr=0$, we
have $P(\widetilde{U}_k(b)<\infty)=1$, which along with
(\ref{pen-equ2}) implies that $P\left(U_{pbel,k}(b)<\infty\right)=1.$
Note that if there exists a $\tilde{\lambda}\in\mathbb{R}^k$ such
that $\tilde{\lambda}'(D_k(r;b)-\bar{D}_k(b))\geq 0$ for $r\in
[0,1-b]$ and $\tilde{\lambda}'(D_k(r;b)-\bar{D}_k(b))>0$ for
$r\in\mathcal{M}\subset [0,1-b]$, where $\mathcal{M}$ has positive
Lebesgue measure, then
$\int^{1-b}_{0}\tilde{\lambda}'(D_k(r;b)-\bar{D}_k(b))dr>0$ which
contradicts with the fact that
$\int^{1-b}_{0}\left\{D_k(r;b)-\bar{D}_k(b)\right\}dr=0$. With probability one, the origin of $\mathbb{R}^k$ is
an interior point of the convex hull of $\{D_k(r;b)-\bar{D}_k(b)\}_{r\in [0,1-b]}$ and thus $P(\widetilde{U}_k(b)<\infty)=1$.
\end{proof}

%\end{document}

\newpage

\begin{table}[H]
\caption{Bounds on the coverage probabilities for BEL in \%}\label{tab:bound-1}
\begin{center}\footnotesize
\begin{tabular}{ccc rrrrrrrrrrr}\toprule
& &\multicolumn{10}{c}{$L=1/b$}
\\ \cmidrule(r){4-14}
$n$ & $\rho$ & $k$ & $2$ & $3$ & $4$ & $5$ & $6$ & $7$ & $8$ & $9$ & $10$
& $15$ & $20$\\\midrule
$50$ & 0.0 & 1 & 73.61&93.61&98.07&99.30&99.83&99.93&99.98&99.99&99.99&100.00&100.00\\
     & 0.0 & 2 & 37.22&75.83&90.47&95.71&98.80&99.52&99.82&99.97&99.97&100.00&100.00\\
     & 0.2 & 1 & 71.20&92.26&97.36&98.92&99.70&99.87&99.94&99.99&99.99&100.00&100.00\\
     & 0.2 & 2 & 33.45&71.65&87.90&93.80&97.89&99.02&99.68&99.91&99.91&100.00&100.00\\
     & 0.5 & 1 & 66.47&89.26&95.61&97.83&99.21&99.57&99.80&99.93&99.93&100.00&100.00\\
     & 0.5 & 2 & 26.73&63.07&81.76&89.19&95.31&97.20&98.60&99.43&99.43&99.98&100.00\\
     & 0.8 & 1 & 56.83&80.69&89.45&92.89&95.82&96.93&97.91&98.63&98.63&99.62&99.84\\
     & 0.8 & 2 & 15.98&44.42&62.81&72.30&81.15&85.25&89.09&92.53&92.53&97.24&98.84\\
     & -0.5& 1 & 80.16&96.44&99.24&99.80&99.98&99.99&100.00&100.00&100.00&100.00&100.00\\
     & -0.5& 2 & 48.66&85.17&95.78&98.61&99.84&99.90&100.00&100.00&100.00&100.00&100.00\\
$100$ & 0.0& 1 & 76.36&94.13&98.34&99.57&99.90&99.97&99.99&99.99&99.99&100.00&100.00\\
      & 0.0& 2 & 40.72&76.22&91.04&96.94&99.13&99.76&99.90&99.97&99.99&100.00&100.00\\
      & 0.2& 1 & 74.59&93.17&97.86&99.40&99.85&99.95&99.98&99.99&99.99&100.00&100.00\\
      & 0.2& 2 & 37.51&73.02&89.00&95.91&98.68&99.47&99.80&99.90&99.97&100.00&100.00\\
      & 0.5& 1 & 71.16&91.01&96.77&98.92&99.65&99.84&99.95&99.98&99.99&100.00&100.00\\
      & 0.5& 2 & 32.39&67.18&84.64&93.24&97.33&98.54&99.47&99.69&99.85&100.00&100.00\\
      & 0.8& 1 & 63.71&85.58&93.20&96.71&98.53&99.08&99.49&99.64&99.76&99.98&99.99\\
      & 0.8& 2 & 22.53&53.78&72.08&83.59&91.28&94.11&96.37&97.40&98.21&99.78&99.90\\
      &-0.5& 1 & 80.78&96.27&99.21&99.82&99.99&99.99&100.00&100.00&100.00&100.00&100.00\\
      &-0.5& 2 & 48.77&83.68&95.10&98.70&99.67&99.99&100.00&100.00&100.00&100.00&100.00\\
%$200$ & 0.0 &  78.08&94.94&98.68&99.70&99.93&99.98&99.99&100.00&100.00&100.00&100.00\\
%      & 0.0 &43.37&78.56&92.43&97.69&99.27&99.74&99.93&99.98&99.99&100.00&100.00\\
%      & 0.2 &  76.80&94.33&98.42&99.62&99.90&99.96&99.99&100.00&100.00&100.00&100.00\\
%      & 0.2 &  41.24&76.60&91.18&96.91&98.96&99.60&99.85&99.98&99.99&100.00&100.00\\
%      & 0.5 &  74.37&92.96&97.78&99.31&99.80&99.93&99.97&99.99&100.00&100.00&100.00\\
%      & 0.5 &37.12&72.46&88.21&95.17&98.21&99.28&99.71&99.91&99.95&100.00&100.00\\
%      & 0.8 &  68.90&89.52&95.87&98.28&99.30&99.70&99.83&99.91&99.95&100.00&100.00\\
%      & 0.8 &28.25&62.69&80.42&90.54&95.17&97.55&98.51&99.26&99.57&99.98&99.98\\
%      & -0.5 &81.47&96.42&99.19&99.87&99.98&100.00&100.00&100.00&100.00&100.00&100.00\\
%      & -0.5 &49.33&83.89&95.15&98.81&99.69&99.90&99.96&100.00&100.00&100.00&100.00\\
$500$ & 0.0 & 1 &79.87&95.51&98.96&99.79&99.96&99.98&100.00&100.00&100.00&100.00&100.00\\
      & 0.0 & 2 &45.60&80.19&93.63&98.12&99.57&99.87&99.95&99.99&100.00&100.00&100.00\\
      & 0.2 & 1 &79.16&95.11&98.83&99.74&99.95&99.98&99.99&100.00&100.00&100.00&100.00\\
      & 0.2 & 2 &44.10&79.22&92.90&97.75&99.44&99.83&99.94&99.98&99.99&100.00&100.00\\
      & 0.5 & 1 &77.50&94.37&98.52&99.63&99.92&99.97&99.98&100.00&100.00&100.00&100.00\\
      & 0.5 & 2 &41.33&76.71&91.50&96.83&99.10&99.68&99.86&99.96&99.98&100.00&100.00\\
      & 0.8 & 1 &73.91&92.43&97.65&99.25&99.75&99.90&99.95&99.98&99.99&100.00&100.00\\
      & 0.8 & 2 &35.56&70.69&87.36&94.58&97.81&99.13&99.58&99.84&99.92&100.00&100.00\\
      & -0.5& 1 &81.92&96.36&99.30&99.86&99.97&100.00&100.00&100.00&100.00&100.00&100.00\\
      & -0.5& 2 &49.76&83.39&95.42&98.87&99.74&99.91&99.97&100.00&100.00&100.00&100.00\\
$1000$ & 0.0& 1 &80.06&95.67&99.05&99.78&99.96&99.99&100.00&100.00&100.00&100.00&100.00\\
       & 0.0& 2 &45.25&80.59&94.01&98.16&99.54&99.87&99.97&99.99&100.00&100.00&100.00\\
       & 0.2& 1 &79.49&95.43&98.97&99.77&99.95&99.99&100.00&100.00&100.00&100.00&100.00\\
       & 0.2& 2 &44.09&79.60&93.50&98.02&99.39&99.84&99.97&99.98&100.00&100.00&100.00\\
       & 0.5& 1 &78.40&94.94&98.79&99.72&99.94&99.99&100.00&100.00&100.00&100.00&100.00\\
       & 0.5& 2 &42.06&77.83&92.50&97.55&99.23&99.73&99.89&99.95&100.00&100.00&100.00\\
       & 0.8& 1 &76.01&93.68&98.27&99.48&99.86&99.95&100.00&100.00&100.00&100.00&100.00\\
       & 0.8& 2 &37.93&73.21&89.65&96.21&98.57&99.49&99.76&99.90&99.97&100.00&100.00\\
       &-0.5& 1 &81.64&96.30&99.24&99.85&99.98&100.00&100.00&100.00&100.00&100.00&100.00\\
       &-0.5& 2 &48.21&82.96&95.11&98.76&99.74&99.92&99.98&100.00&100.00&100.00&100.00\\
%$5000$ & 0.0 &  80.98&96.05&99.13&99.80&99.97&100.00&100.00&100.00&100.00&100.00&100.00\\
%       & 0.0 &48.55&82.71&94.88&98.63&99.61&99.90&99.99&100.00&100.00&100.00&100.00\\
%       & 0.2 &  80.72&95.93&99.10&99.79&99.97&100.00&100.00&100.00&100.00&100.00&100.00\\
%       & 0.2 & 48.19&82.38&94.60&98.55&99.58&99.89&99.99&100.00&100.00&100.00&100.00\\
%       & 0.5 &  80.19&95.71&99.00&99.76&99.96&99.99&100.00&100.00&100.00&100.00&100.00\\
%       & 0.5 & 47.35&81.64&94.05&98.39&99.52&99.88&99.99&99.99&99.99&100.00&100.00\\
%       & 0.8 &  79.14&95.15&98.83&99.71&99.94&99.99&100.00&100.00&100.00&100.00&100.00\\
%       & 0.8 & 45.31&80.03&93.09&97.92&99.40&99.80&99.95&99.99&99.99&100.00&100.00\\
%       & -0.5 &81.65&96.31&99.25&99.83&99.97&100.00&100.00&100.00&100.00&100.00&100.00\\
%       & -0.5 &49.97&83.68&95.50&98.80&99.64&99.92&99.99&100.00&100.00&100.00&100.00\\
$+\infty$ & 0.0 & 1 &81.70&96.26&99.23&99.85&99.97&99.99&100.00&100.00&100.00&100.00&100.00\\
          & 0.0 & 2 &48.58&82.93 &95.04 &98.72 &99.70 &99.91 &99.99 &100.00 &100.00 &100.00 &100.00\\
\midrule
\end{tabular}
Note: the number of Monte Carlo replication is 50,000 for $k=1$
(10,000 for $k=2$). For the last row $n=+\infty,$ we approximate the
probability $P(\mathcal{A}~\text{is bounded})$ by simulating
independent Wiener processes, where the Wiener process is
approximated by a normalized partial sum of 50,000 for $k=1$ (10,000
for $k=2$) i.i.d standard normal random variables and the number of
replications is 100,000 for $k=1$ (50,000 for $k=2$).
\end{center}
\end{table}

\newpage

\begin{table}[H]
\caption{Bounds on the coverage probabilities for EBEL
in \%}\label{tab:bound-EBEL}
\begin{center}
\begin{tabular}{cc rrrrr}\toprule
& &\multicolumn{5}{c}{$n$}
\\ \cmidrule(r){3-7}
  $\rho$ & $k$ & $50$ & $100$ & $500$ & $1000$ & $5000$ \\\midrule
  0.0 & 1 &84.02 &89.51  & 94.64 &96.18 & 98.30 \\
%(\rmnum{2}) & 0.0 & 1 &82.56&88.45&94.51&96.10&98.46\\
  0.0 & 2 &52.45  &62.66  &78.48  &84.22 &92.09 \\
%(\rmnum{2}) & 0.0 & 2 &51.09&62.52&78.81&83.67&91.97\\
 0.2 & 1 &82.20  &87.93 &93.99 &95.74 &98.12\\
%(\rmnum{2}) & 0.2 & 1 &82.42&88.30&94.66&96.19&98.49\\
  0.2 & 2 &48.95  &59.47  &76.50 &82.55 &91.51\\
%(\rmnum{2}) & 0.2 & 2 &50.37&62.37&78.92&83.69&91.89\\
  0.5 & 1 &77.57  &84.31 &92.35 & 94.57 &97.72\\
%(\rmnum{2}) & 0.5 & 1 &81.96&88.30&94.73&96.11&98.50\\
 0.5 & 2 &41.24  &52.96  &72.31  &79.09  &89.84 \\
%(\rmnum{2}) & 0.5 & 2 &49.10&61.83&79.10&83.76&91.78\\
 0.8 & 1 &65.99  &75.78  &88.32 &91.91 &96.52 \\
%(\rmnum{2}) & 0.8 & 1 &79.55&86.99&94.65&96.05&98.45\\
 0.8 & 2 &26.56  &39.06  &62.10 &71.32  &85.62\\
%(\rmnum{2}) & 0.8 & 2 &43.75&58.63&78.46&83.51&91.89\\
 -0.5 & 1 &87.42  &91.75 &95.89  &96.83 &98.70\\
%(\rmnum{2}) & -0.5 & 1 &82.99&88.53&94.48&96.16&98.43\\
 -0.5 & 2 &60.24  &69.35 &82.52  &87.16 &93.87\\
%(\rmnum{2}) & -0.5 & 2 &51.94&62.25&78.80&83.95&92.11\\
\midrule
\end{tabular}\\
Note: the number of Monte Carlo replication is 10,000. The bounds on
the coverage probabilities for EBEL do not depend on the choice of
the weight function $\omega(\cdot)$.
\end{center}
\end{table}

\begin{table}[H]
\caption{Coverage probabilities in \% for the mean delivered by
BEL}\label{tab:pbel-k=5}
\begin{center}\footnotesize
\begin{tabular}{cc cccc}\toprule
& & \multicolumn{4}{c}{$\rho$}
\\ \cmidrule(r){3-6}
$n$ & $b$ & $0.2$  & $0.5$ & $0.8$ & $-0.5$
\\ \midrule
100 & $0.05$ & 88.5 (97.8) & 76.1 (86.7) & 34.3 (24.1) & 98.6 (99.9)\\
100 & $0.10$ & 84.7 (37.3) & 74.6 (17.6) & 42.4 (1.4) & 97.5 (80.9)\\
400 & $0.05$ & 93.8 (99.9) & 91.8 (99.6) & 80.3 (92.6) & 97.2 (100.0)\\
400 & $0.10$ & 93.0 (68.0) & 90.2 (55.8) & 78.7 (26.1) & 96.5 (87.2)\\
\midrule
\end{tabular}\\
Note: the data is generated from the VAR(1) process with the
coefficient matrix being $\rho I_k$ for $k=5$ or 10. The number in
the parentheses is the coverage upper bound for the case of $k=10$.
\end{center}
\end{table}

\newpage

\begin{table}[H]
\caption{Confidence intervals and coverage probabilities in \% for
the hemispheric temperatures records}\label{tab:temp}
\begin{center}\footnotesize
\begin{tabular}{ccc crrcrr}\toprule
& & & \multicolumn{3}{c}{northern hemisphere} &
\multicolumn{3}{c}{southern hemisphere}
\\ \cmidrule(r){4-6} \cmidrule(r){7-9}
 & $m$ & $c^*$ & CI & CP$_4$ & CP$_8$ & CI & CP$_4$ & CP$_8$ \\ \midrule
PBEL & 4 & 0.05 & $[1.012, 1.336]$ & 90.6 & 82.2 & $[0.862, 0.949]$ & 93.5 & 91.3          \\
     & 4 & 0.1  & $[0.961, 1.431]$ & 98.5 & 95.6 & $[0.833, 0.977]$ & 99.0 & 99.4         \\
     & 4 & 0.2  & $[0.983, 1.404]$ & 98.0 & 94.5 & $[0.837, 0.973]$ & 98.7 & 99.1        \\
     & 4 & 1    & $[1.028, 1.330]$ & 93.2 & 86.0 & $[0.854, 0.957]$ & 95.5 & 95.1        \\
     & 4 & 2    & $[1.031, 1.326]$ & 93.1 & 85.5 & $[0.854, 0.956]$ & 95.1 & 94.8          \\
BEL  & 4 & ---  & $[1.033, 1.323]$ & 92.7 & 85.0 & $[0.855, 0.955]$ & 95.0 & 94.8\\
PBEL & 8 & 0.05 & $[0.983, 1.417]$ & 91.5 & 84.8 & $[0.858, 0.948]$ & 93.7 & 92.5\\
     & 8 & 0.1  & $[1.022, 1.377]$ & 90.3 & 83.0 & $[0.860, 0.946]$ & 92.1 & 89.2\\
     & 8 & 0.2  & $[0.974, 1.481]$ & 98.1 & 95.1 & $[0.838, 0.976]$ & 98.6 & 99.1\\
     & 8 & 1    & $[1.029, 1.383]$ & 93.9 & 87.5 & $[0.853, 0.956]$ & 95.5 & 94.6\\
     & 8 & 2    & $[1.034, 1.374]$ & 93.0 & 86.2 & $[0.855, 0.955]$ & 95.2 & 94.3\\
BEL  & 8 & --- & $[1.039, 1.365]$  & 92.3 & 84.9 & $[0.856, 0.953]$
& 94.7 & 93.6\\ \hline
PEBEL & ---& 0.05 & $[0.738, 1.825]$ & 93.7 & 90.5 & $[0.868, 0.952]$ & 95.2 & 95.1          \\
      & ---& 0.1  & $[0.742, 1.830]$ & 93.6 & 90.6 & $[0.869, 0.955]$ & 95.2 & 95.3         \\
      & ---& 0.2  & $[0.746, 1.832]$ & 93.6 & 90.8 & $[0.871, 0.962]$ & 95.0 & 95.5        \\
      & ---& 1    & $[0.800, 1.786]$ & 92.8 & 90.5 & $[0.876, 0.988]$ & 94.6 & 94.8        \\
      & ---& 2    & $[0.831, 1.760]$ & 92.3 & 89.7 & $[0.877, 1.002]$ & 94.6 & 94.6          \\
EBEL  & ---& ---  & $[1.059, 1.538]$ & 87.0 & 84.4 & $[0.880, 1.133]$ & 89.6 & 93.3\\
\midrule
\end{tabular}\\
Note: the columns CP$_4$ and CP$_8$ correspond to the coverage
probabilities based on the bootstrap samples with block size $b_n=4$
and $b_n=8$ respectively.
\end{center}
\end{table}

\newpage

\begin{figure}[H]
\centering
\includegraphics[height=9cm,width=9cm]{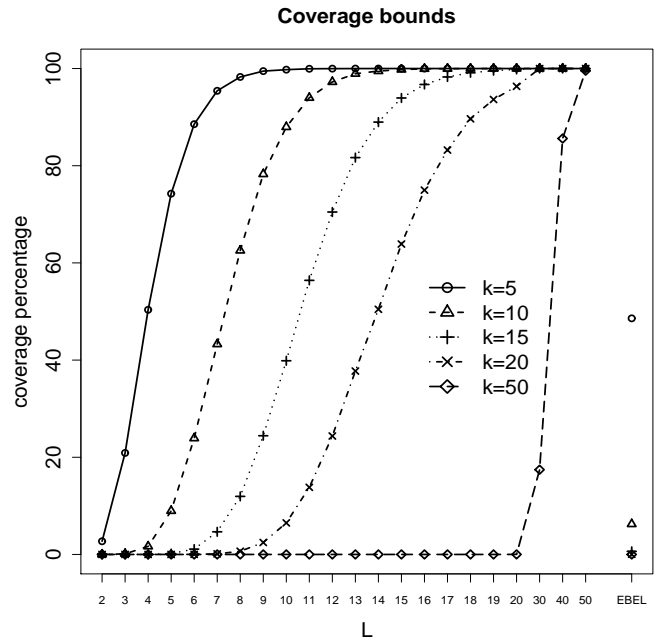}
\caption{Bounds on the coverage probabilities for BEL and EBEL in \%. Note: the
data are generated from multivariate standard normal distribution with $n=5,000$ and the number of Monte Carlo
replications is 10,000.}\label{fig:bound}
\end{figure}

\newpage

\begin{figure}[H]
\centering
\includegraphics[height=5.2cm,width=6cm]{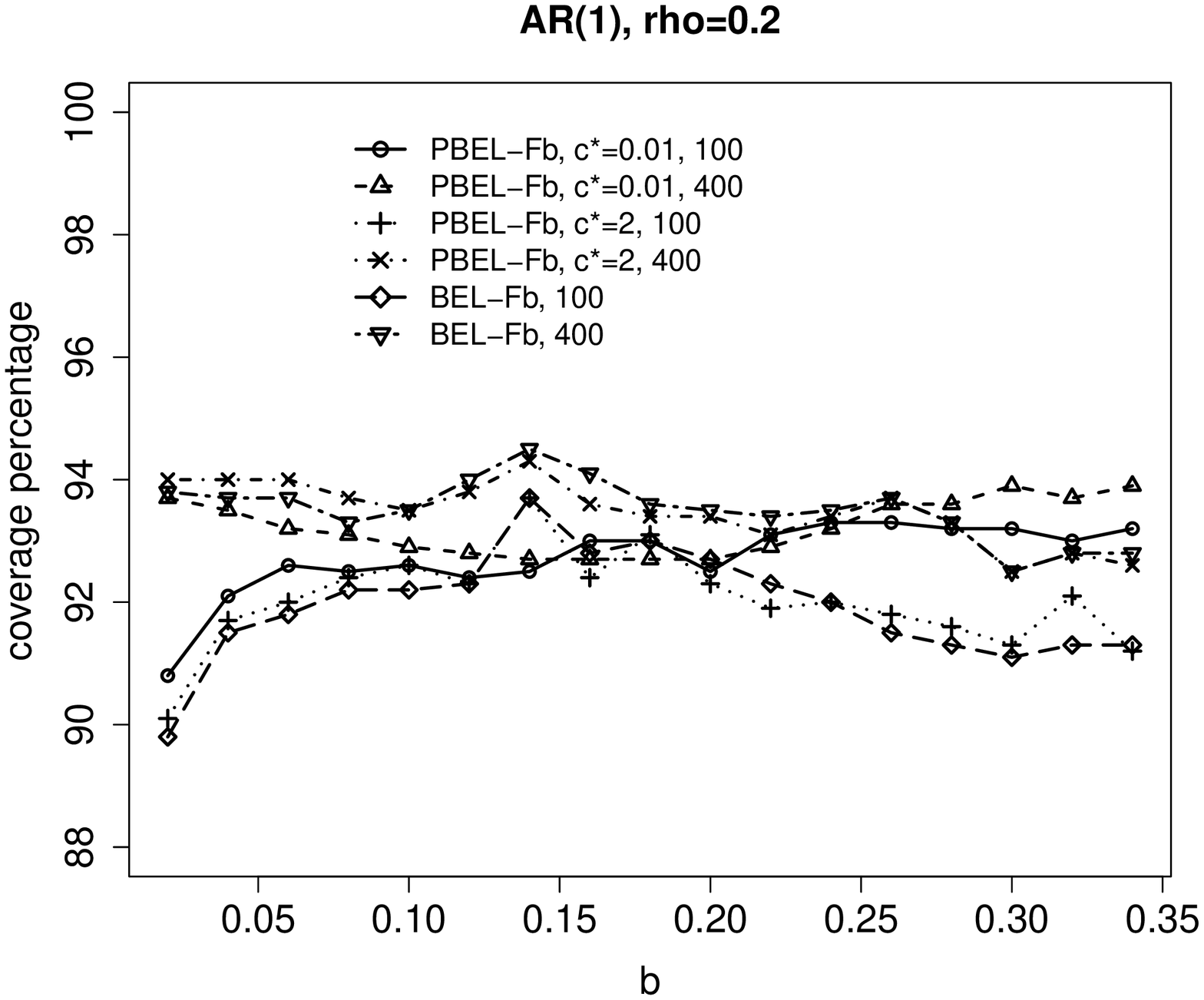}
\includegraphics[height=5.2cm,width=6cm]{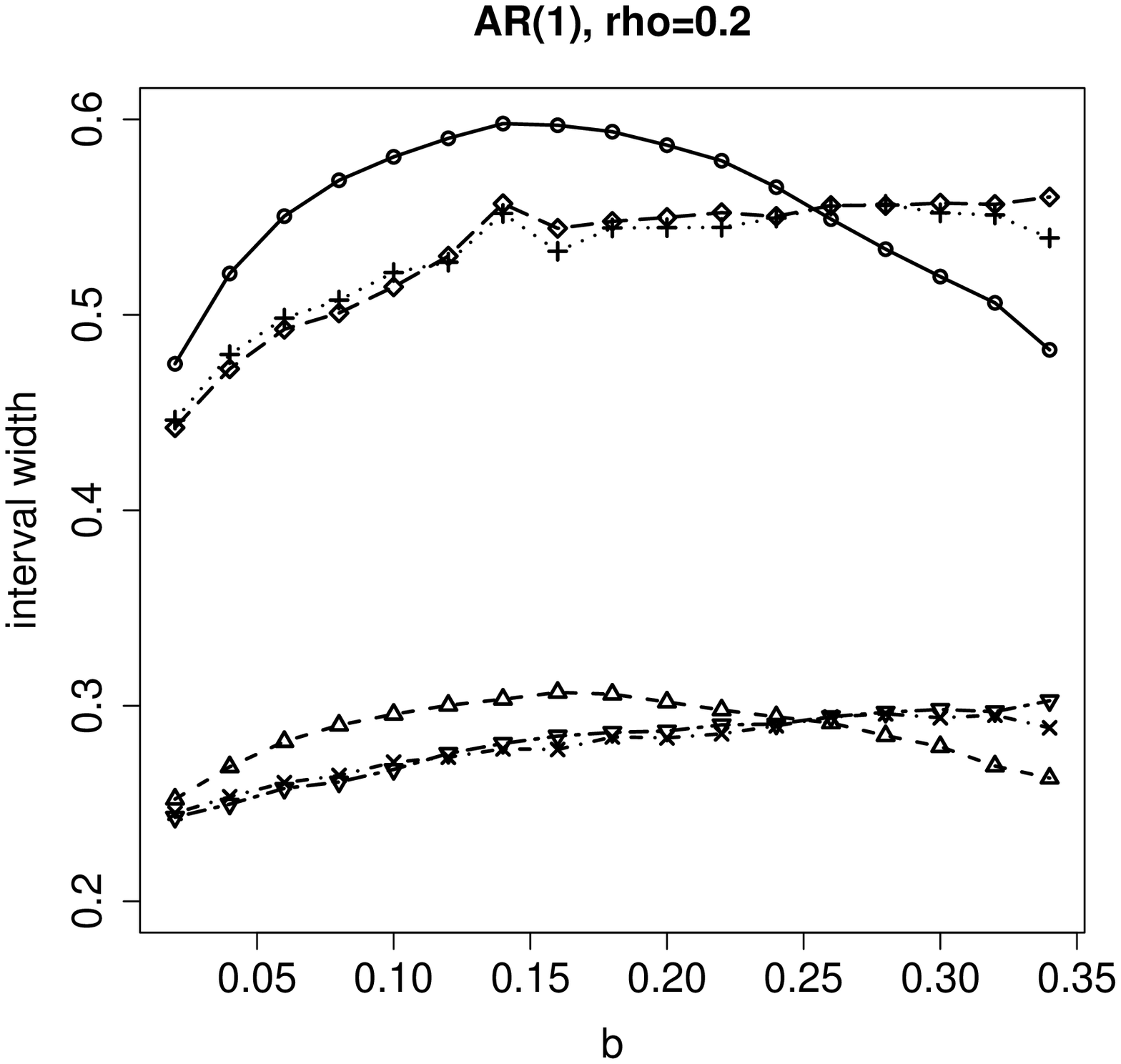}
\includegraphics[height=5.2cm,width=6cm]{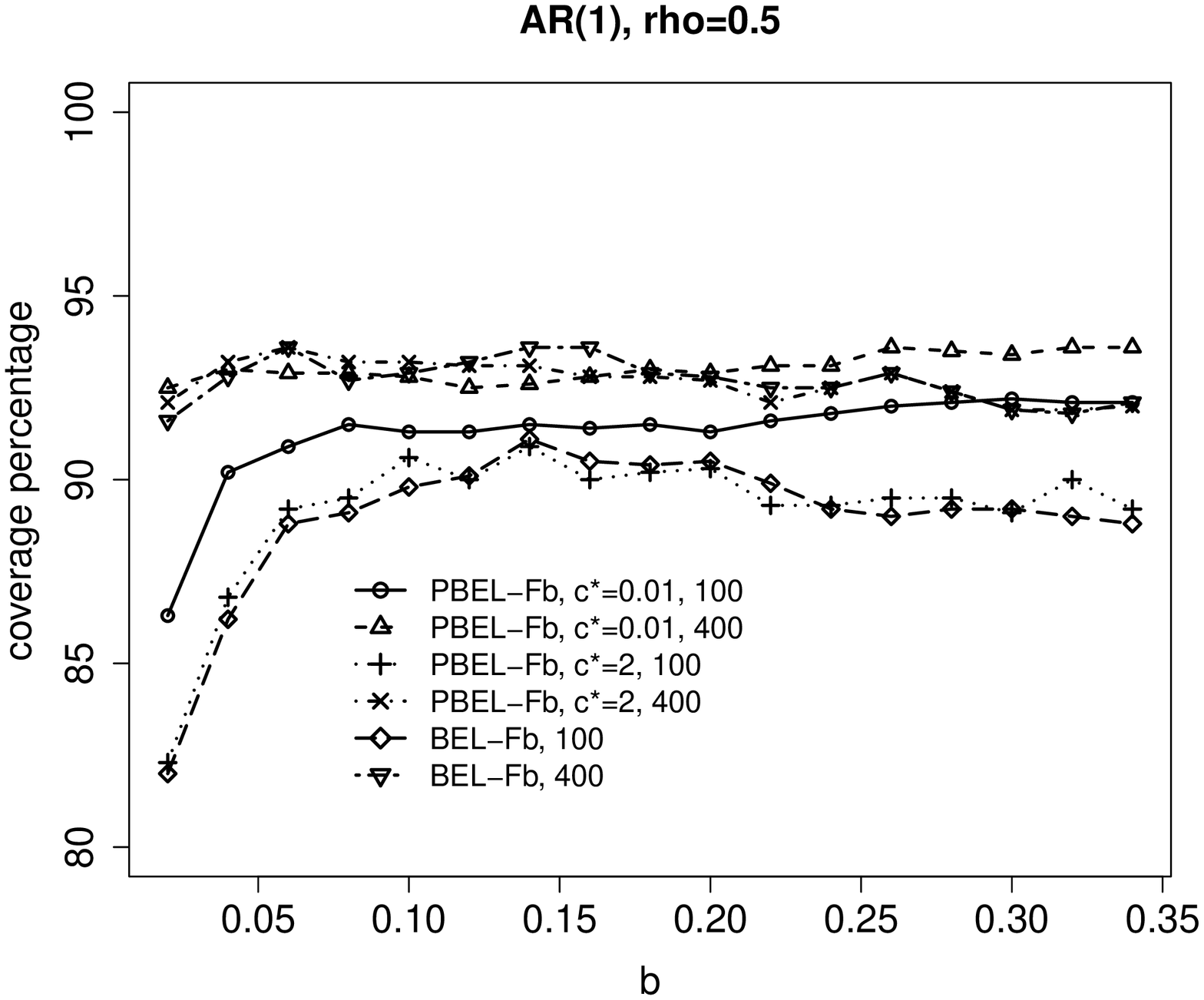}
\includegraphics[height=5.2cm,width=6cm]{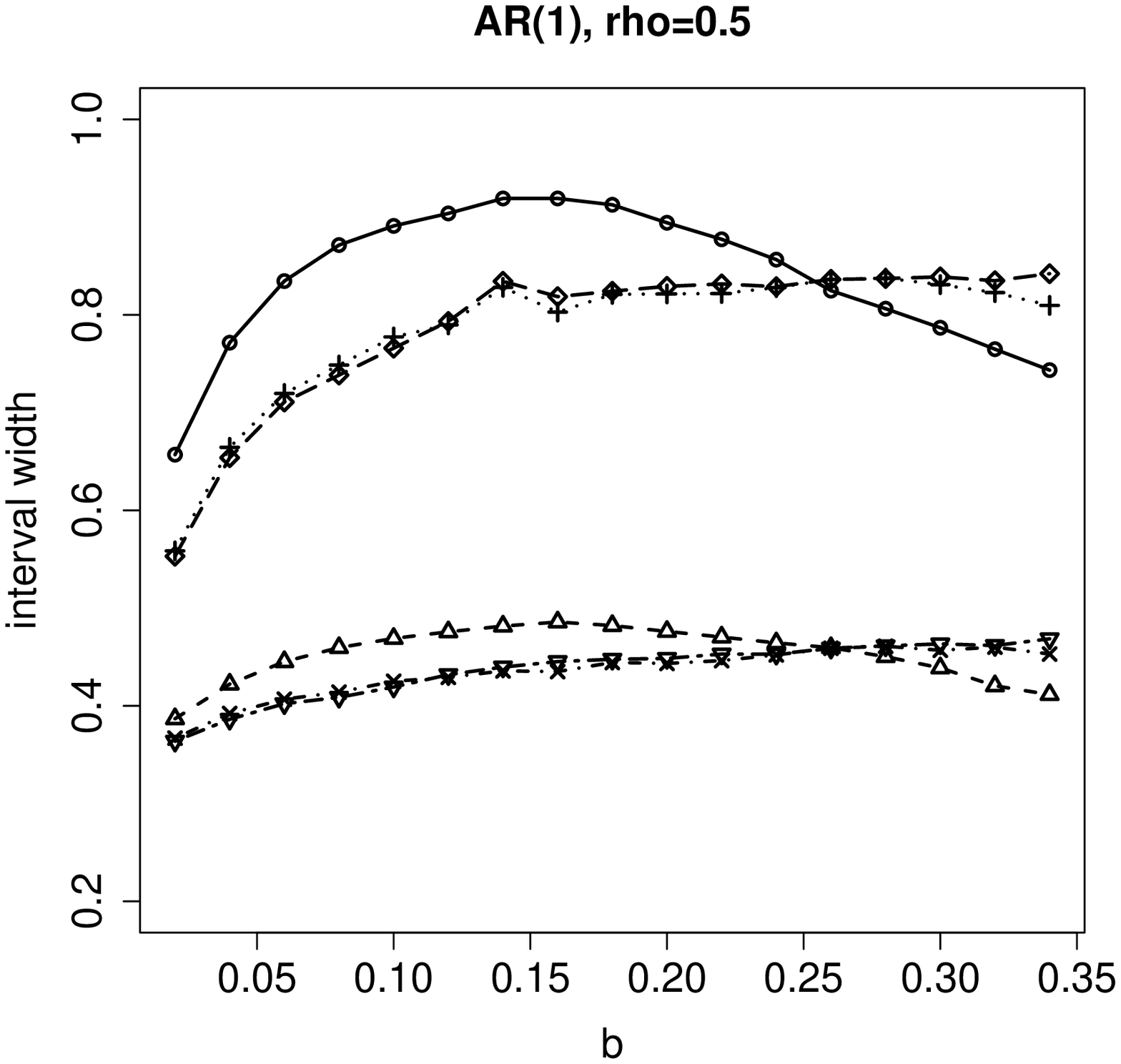}
\includegraphics[height=5.2cm,width=6cm]{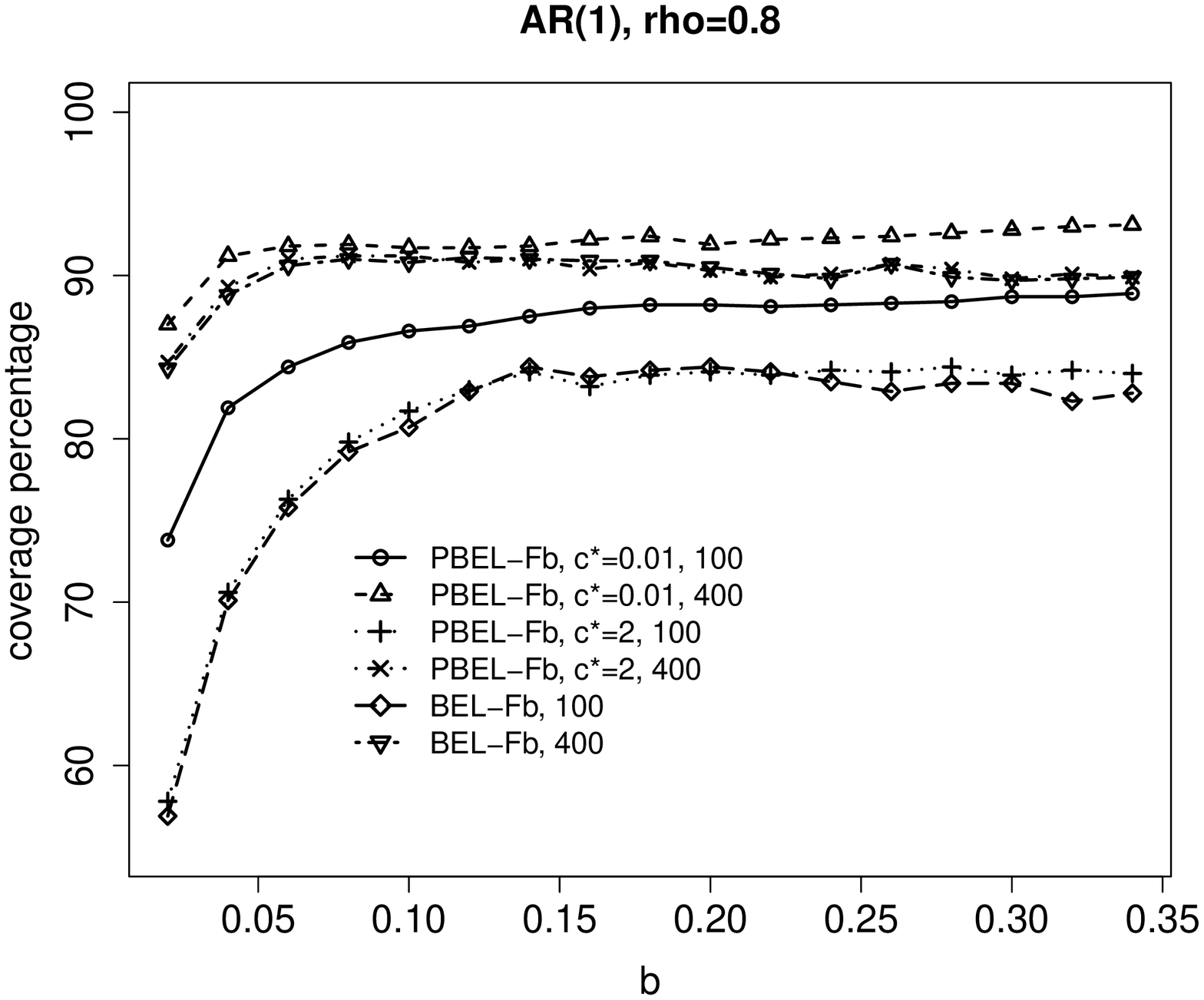}
\includegraphics[height=5.2cm,width=6cm]{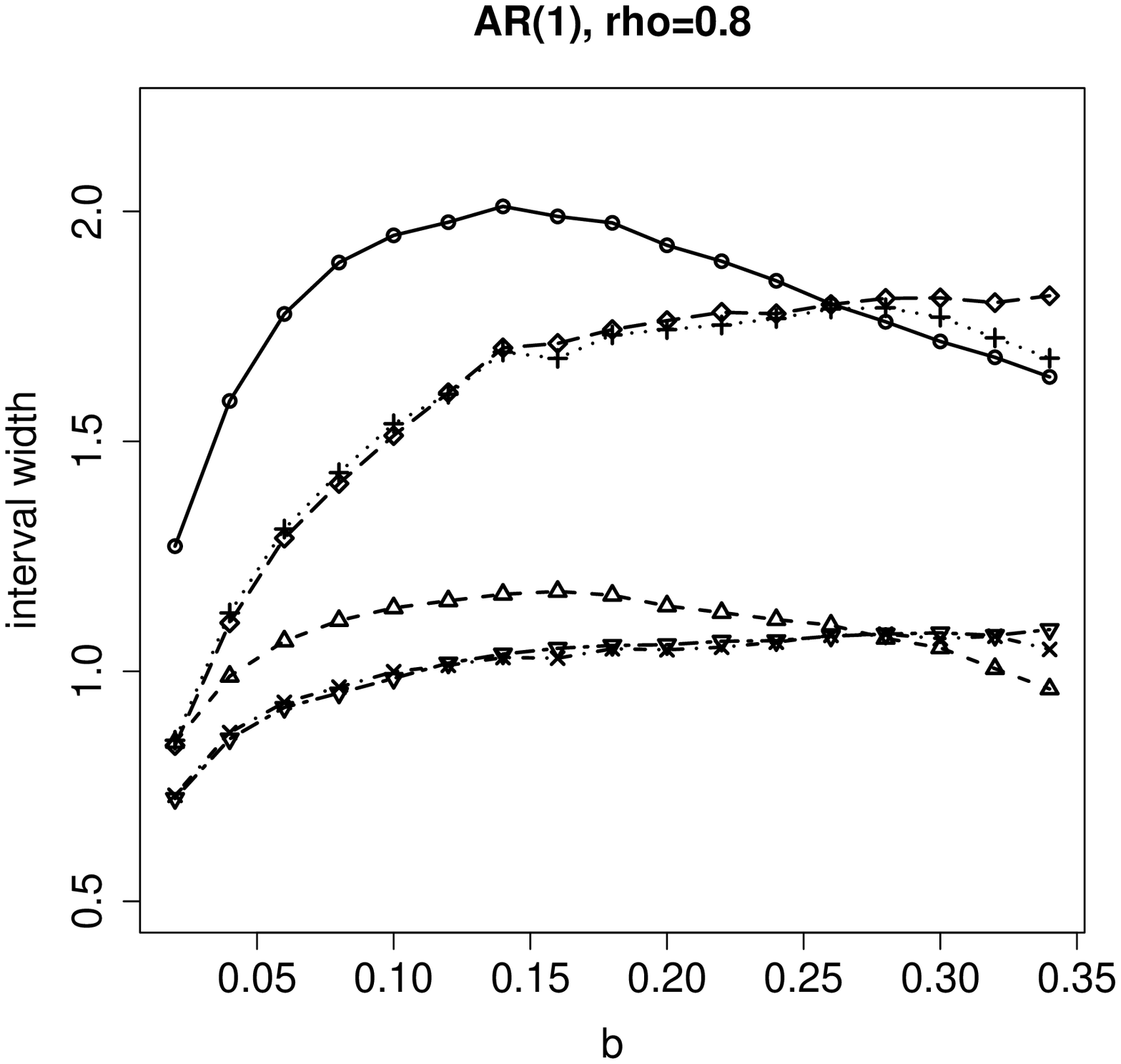}
\includegraphics[height=5.2cm,width=6cm]{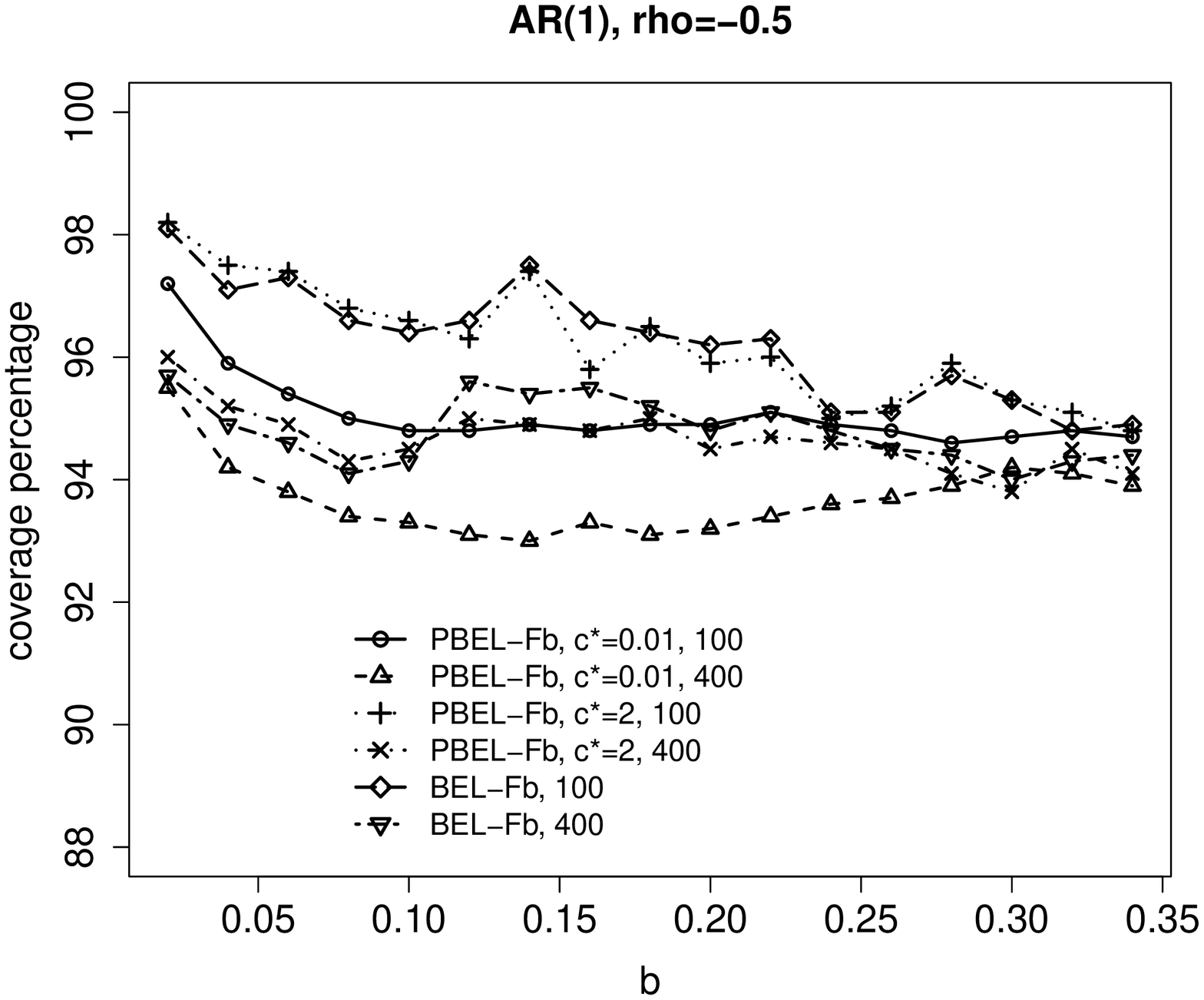}
\includegraphics[height=5.2cm,width=6cm]{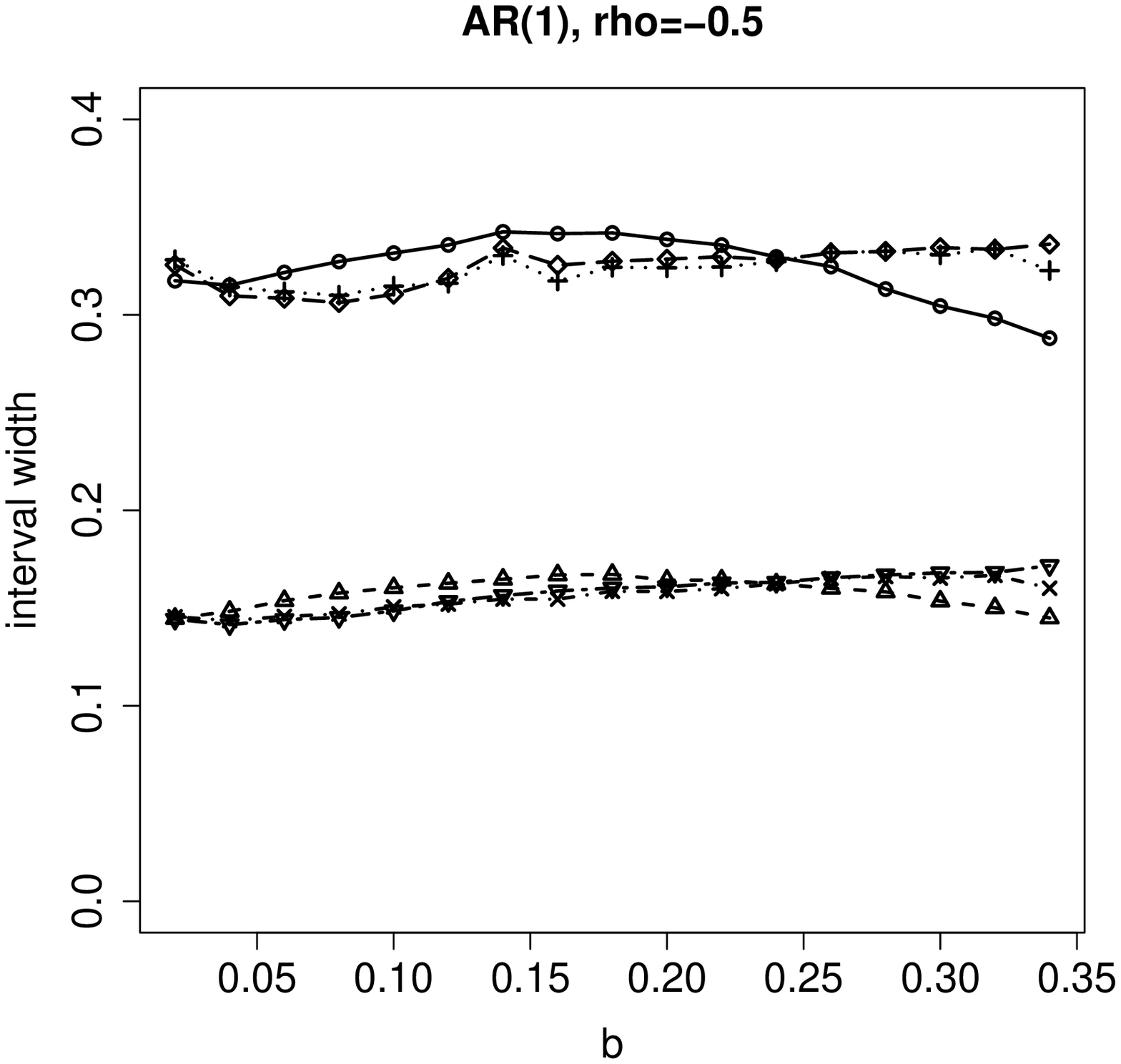}
\caption{Coverage probabilities (left panels) and interval widths
(right panels) for the mean delivered by the PBEL with
$Q(r,s)=(1-|r-s|)\mathbf{I}\{|r-s|\leq 1\}$, and BEL, where $k=1$.
The nominal level is 95\% and the number of Monte Carlo replications
is 1,000.}\label{fig:pel-ar-k1}
\end{figure}

\newpage

\begin{figure}[H]
\centering
\includegraphics[height=5.2cm,width=6cm]{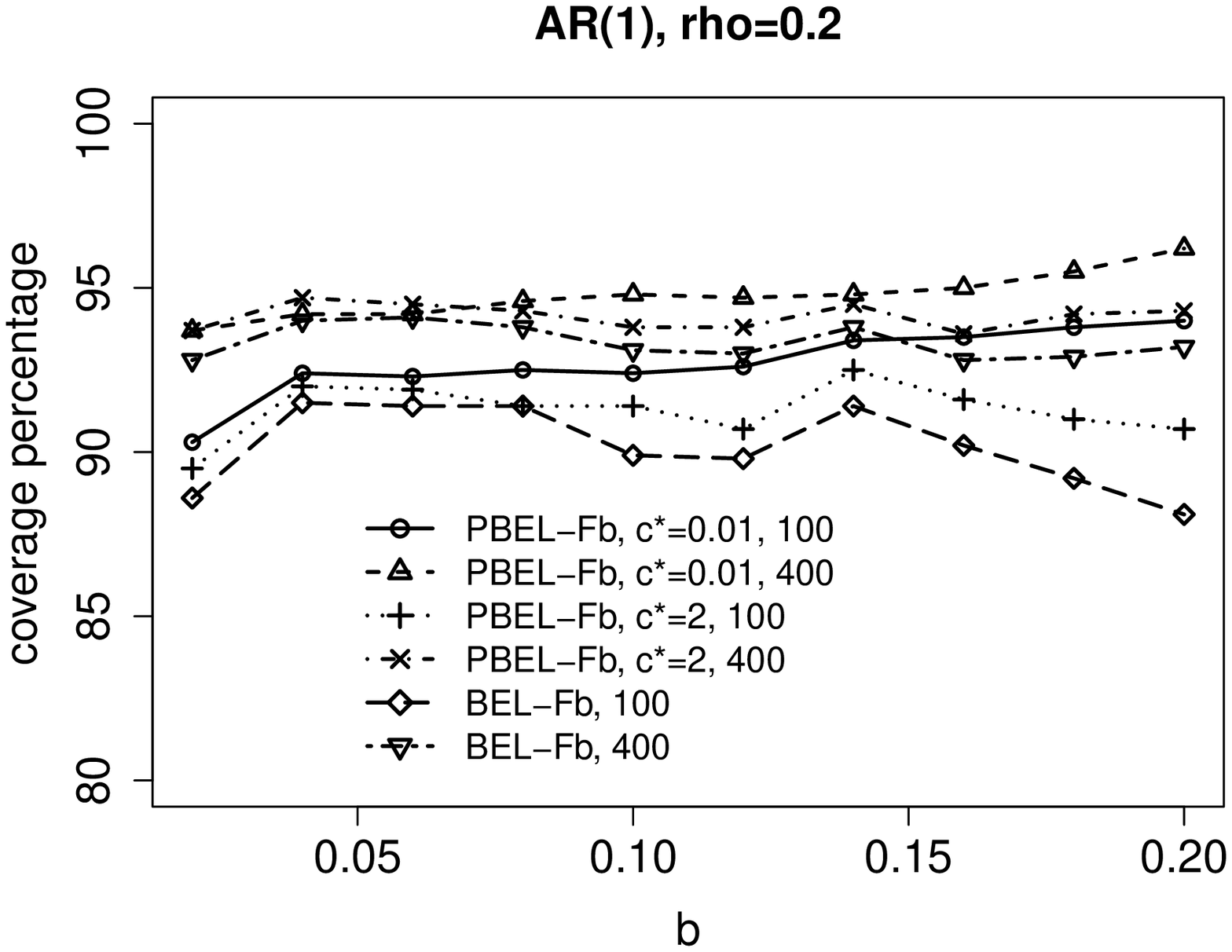}
\includegraphics[height=5.2cm,width=6cm]{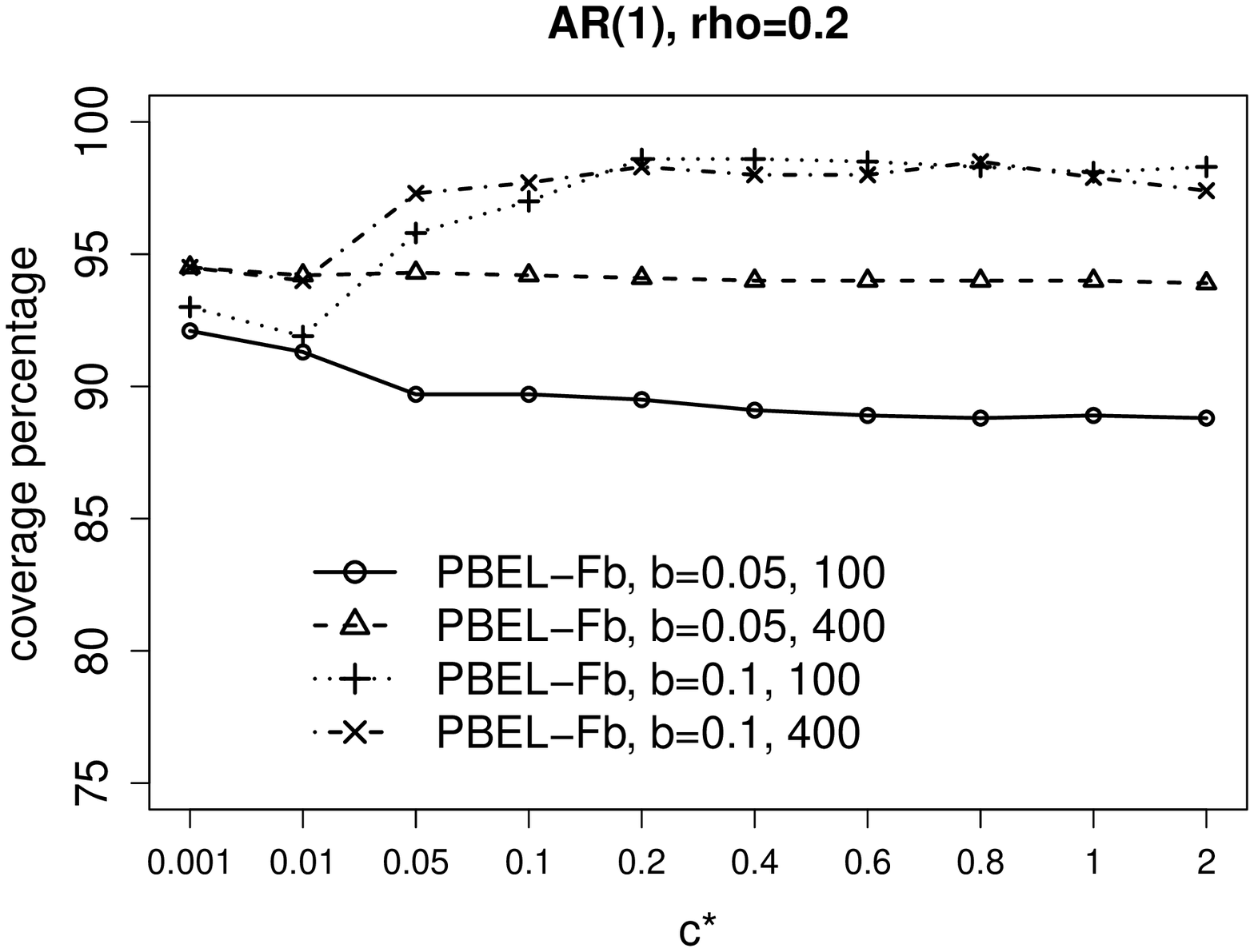}
\includegraphics[height=5.2cm,width=6cm]{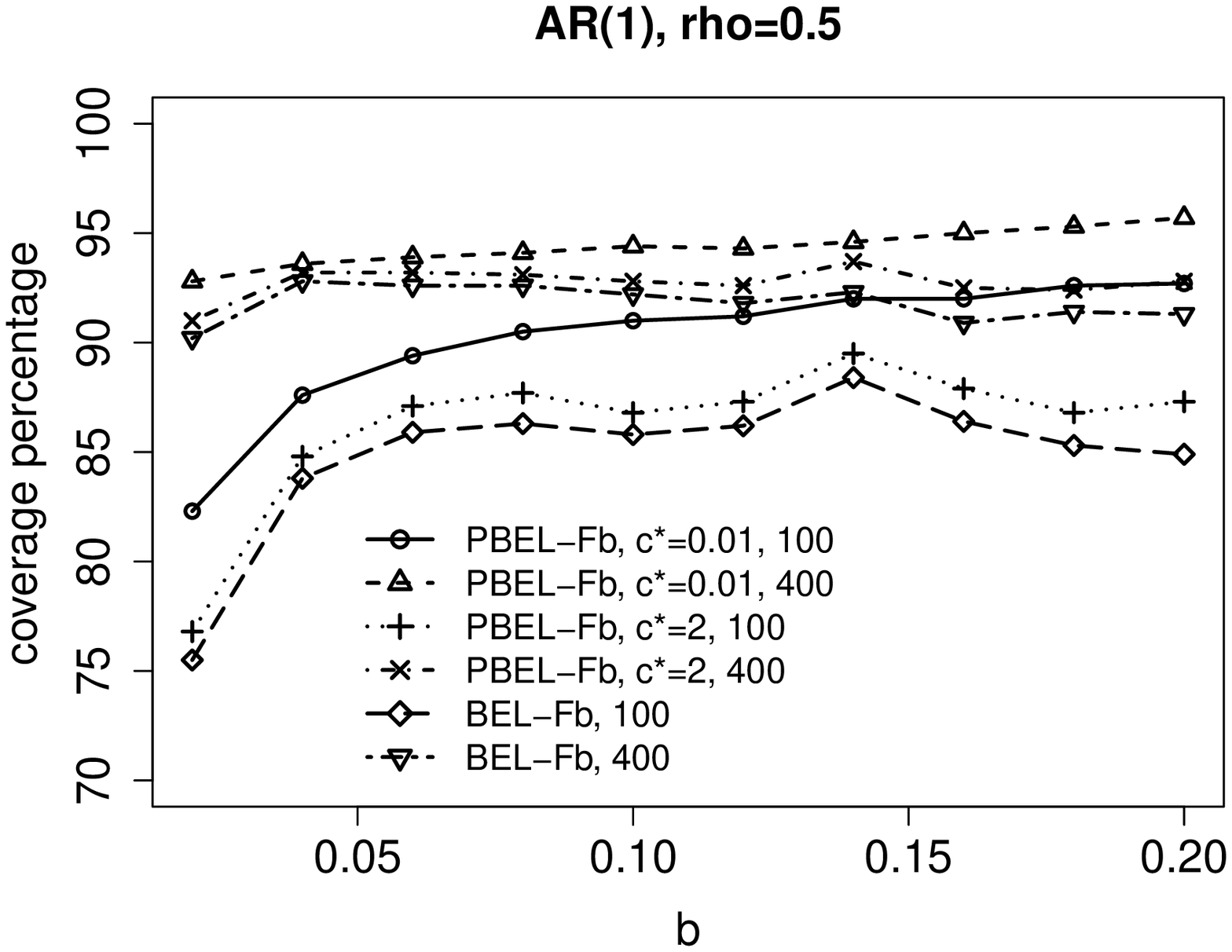}
\includegraphics[height=5.2cm,width=6cm]{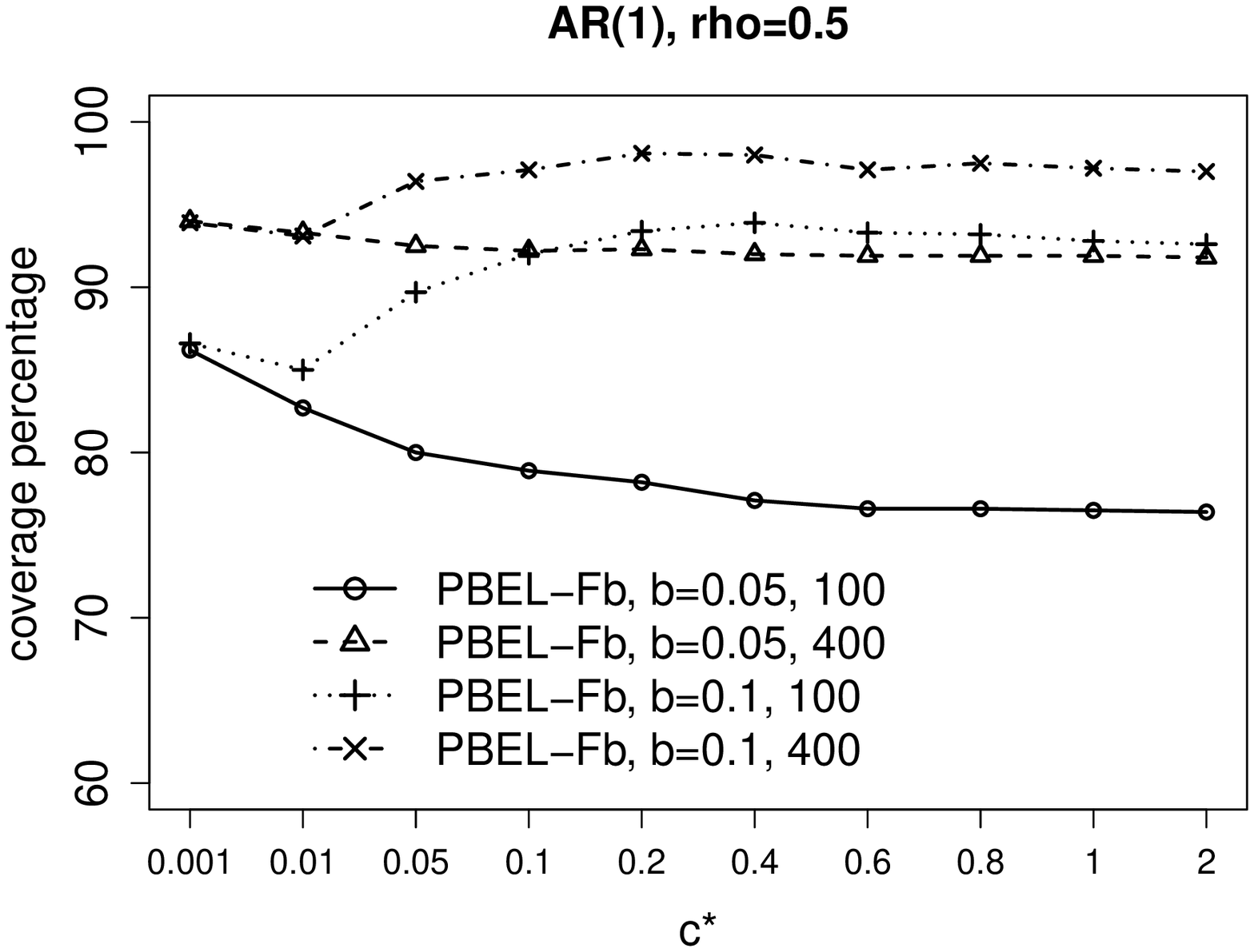}
\includegraphics[height=5.2cm,width=6cm]{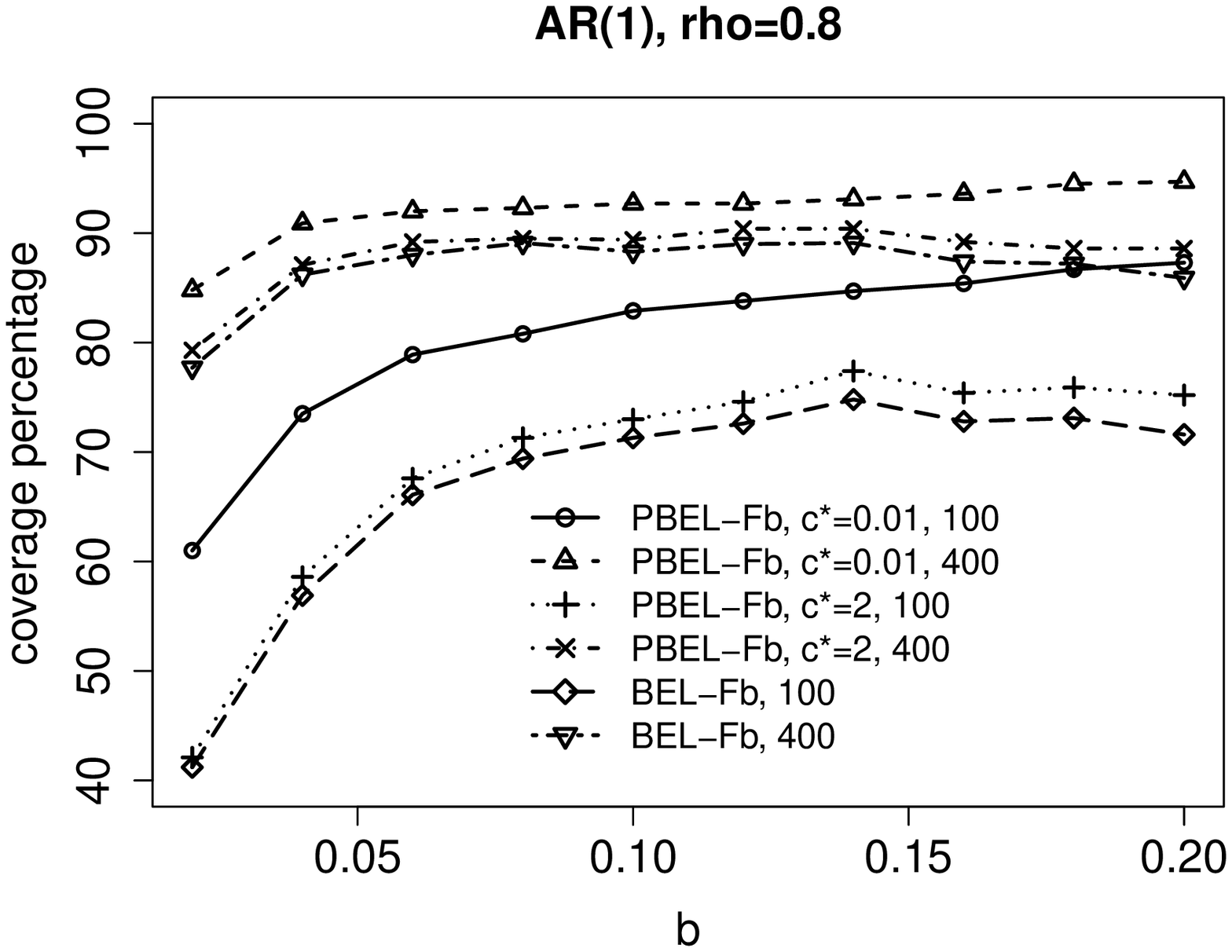}
\includegraphics[height=5.2cm,width=6cm]{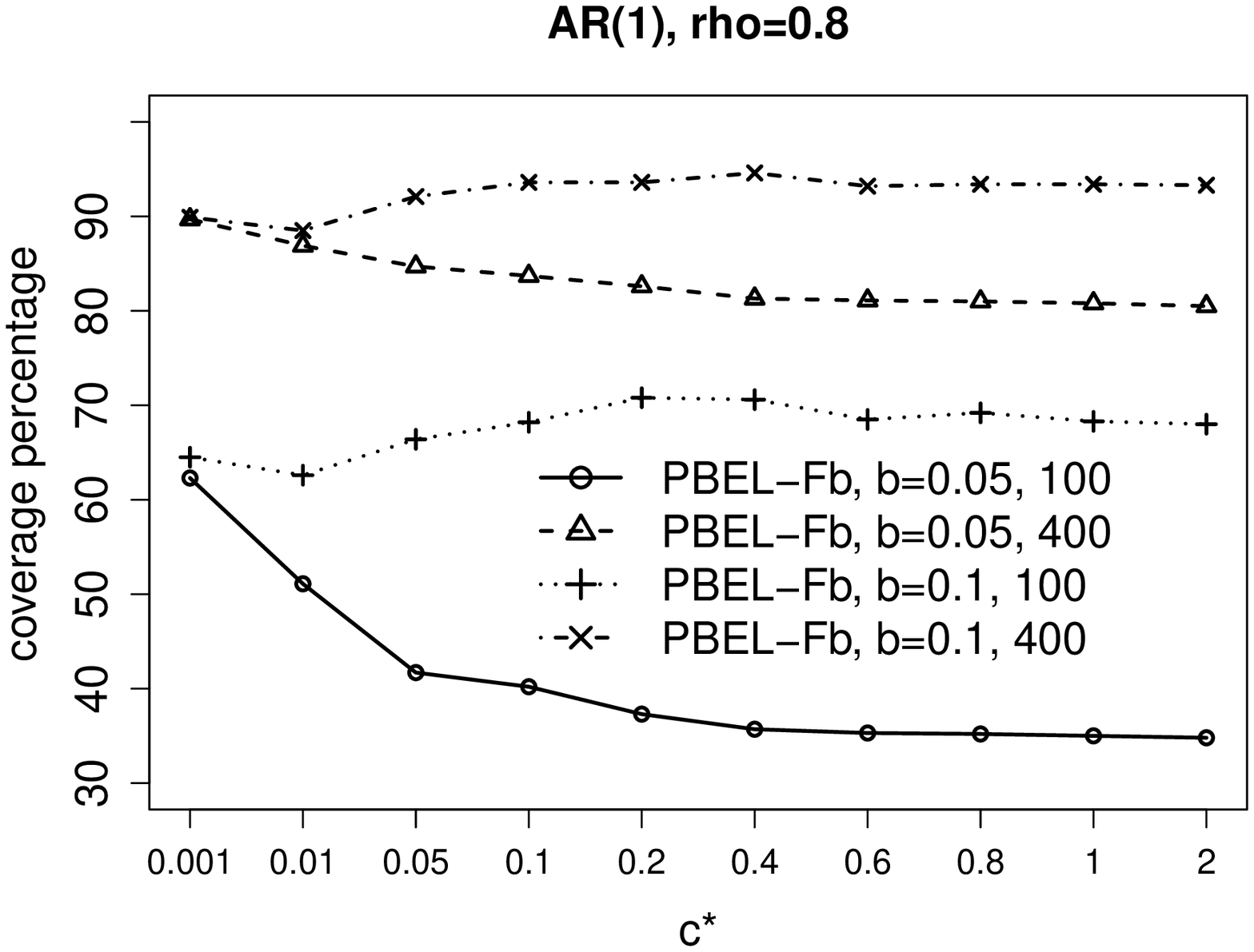}
\includegraphics[height=5.2cm,width=6cm]{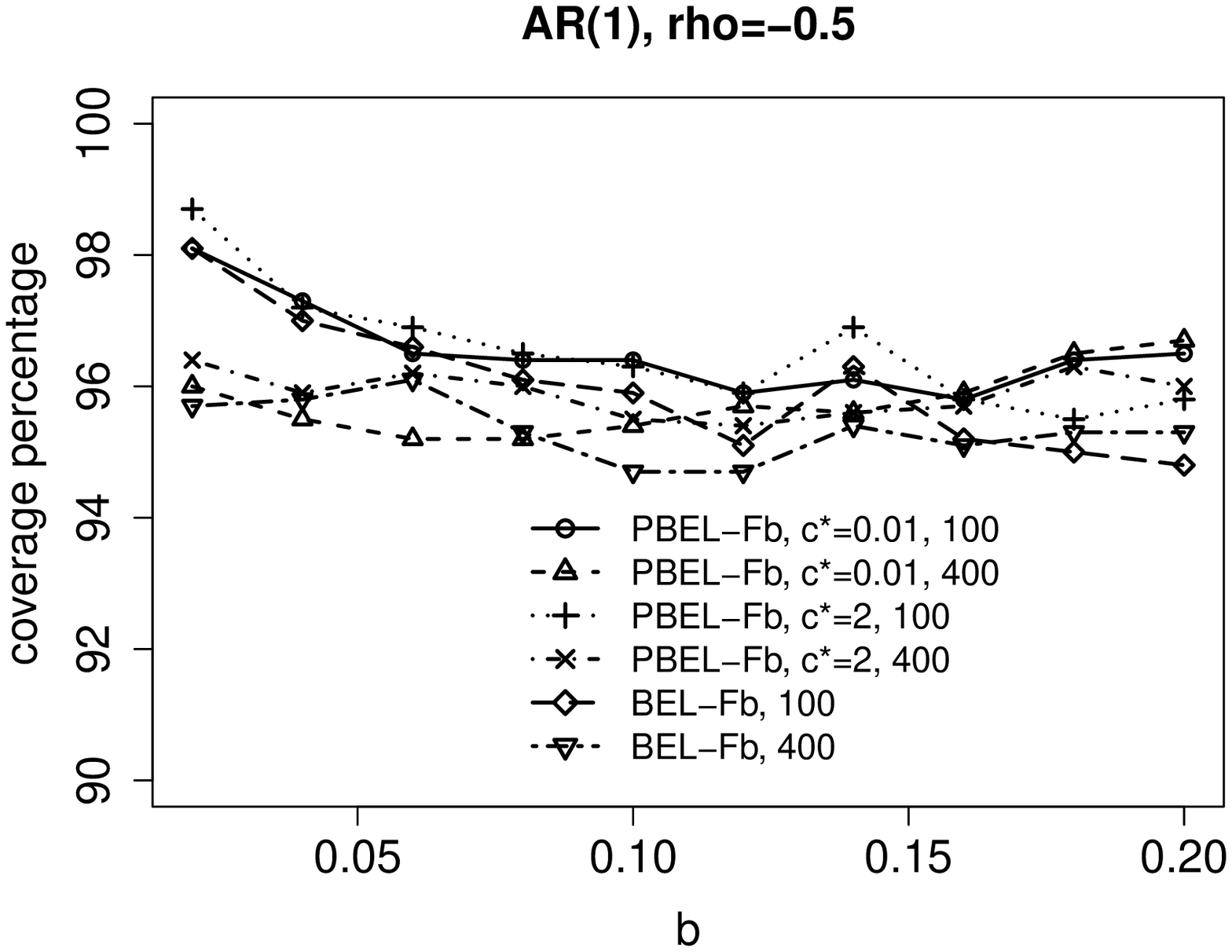}
\includegraphics[height=5.2cm,width=6cm]{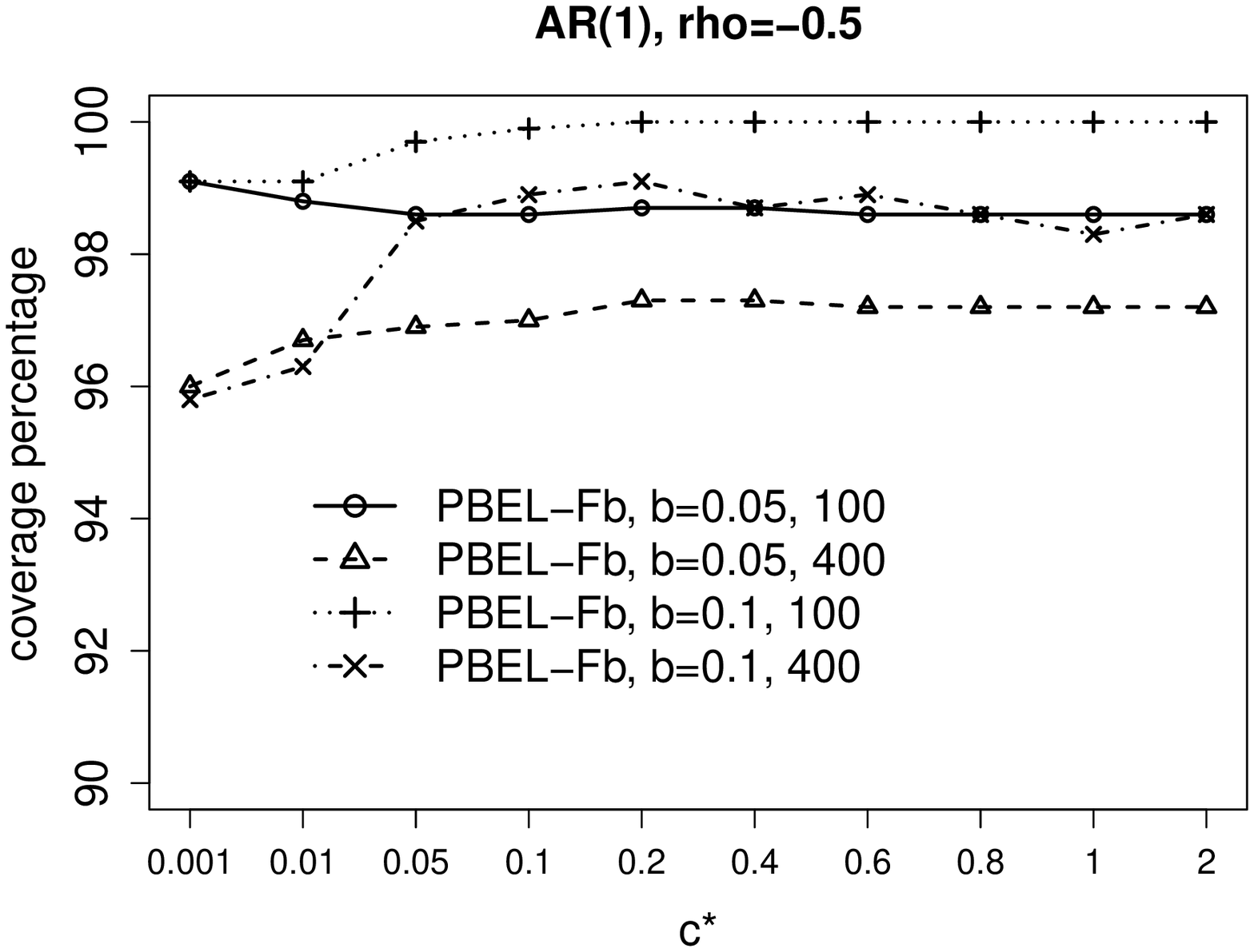}
\caption{Coverage probabilities for the mean delivered by the PBEL
with $Q(r,s)=(1-|r-s|)\mathbf{I}\{|r-s|\leq 1\}$, and BEL, where
$k=2$ for the left column and $k=5$ for the right column. The
nominal level is 95\% and the number of Monte Carlo replications is
1,000.}\label{fig:pel-ar}
\end{figure}

%\newpage
%\begin{figure}
%\centering
%\includegraphics[height=5cm,width=6cm]{MA-theta=0.2-p=1-PEL.eps}
%\includegraphics[height=5cm,width=6cm]{MA-theta=0.2-p=1-PEL-wd.eps}
%\includegraphics[height=5cm,width=6cm]{MA-theta=0.5-p=1-PEL.eps}
%\includegraphics[height=5cm,width=6cm]{MA-theta=0.5-p=1-PEL-wd.eps}
%\includegraphics[height=5cm,width=6cm]{MA-theta=0.95-p=1-PEL.eps}
%\includegraphics[height=5cm,width=6cm]{MA-theta=0.95-p=1-PEL-wd.eps}
%\includegraphics[height=5cm,width=6cm]{MA-theta=-0.5-p=1-PEL.eps}
%\includegraphics[height=5cm,width=6cm]{MA-theta=-0.5-p=1-PEL-wd.eps}
%\caption{Coverage probabilities and interval widths for
%the mean delivered by the PBEL with $c^*=0.01,0.2$ and
%$Q(r,s)=(1-|r-s|)\mathbf{I}\{|r-s|\leq 1\}$, and BEL, where
%$k=1$. The nominal level is 95\% and the number of Monte Carlo
%replications is 1,000.}\label{fig:pel-ma-1}
%\end{figure}

\newpage

\begin{figure}[H]
\centering
\includegraphics[height=5.2cm,width=6cm]{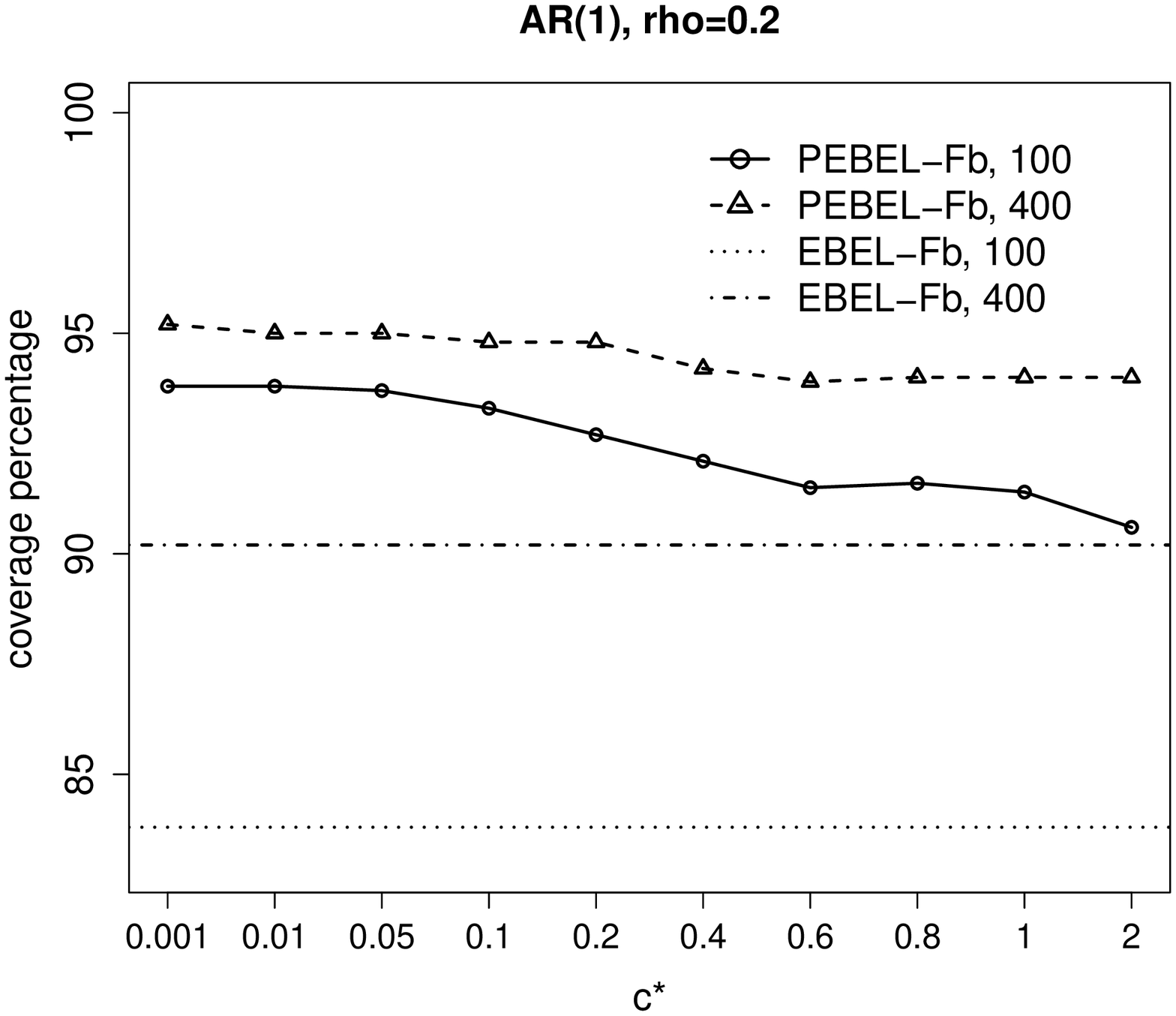}
\includegraphics[height=5.2cm,width=6cm]{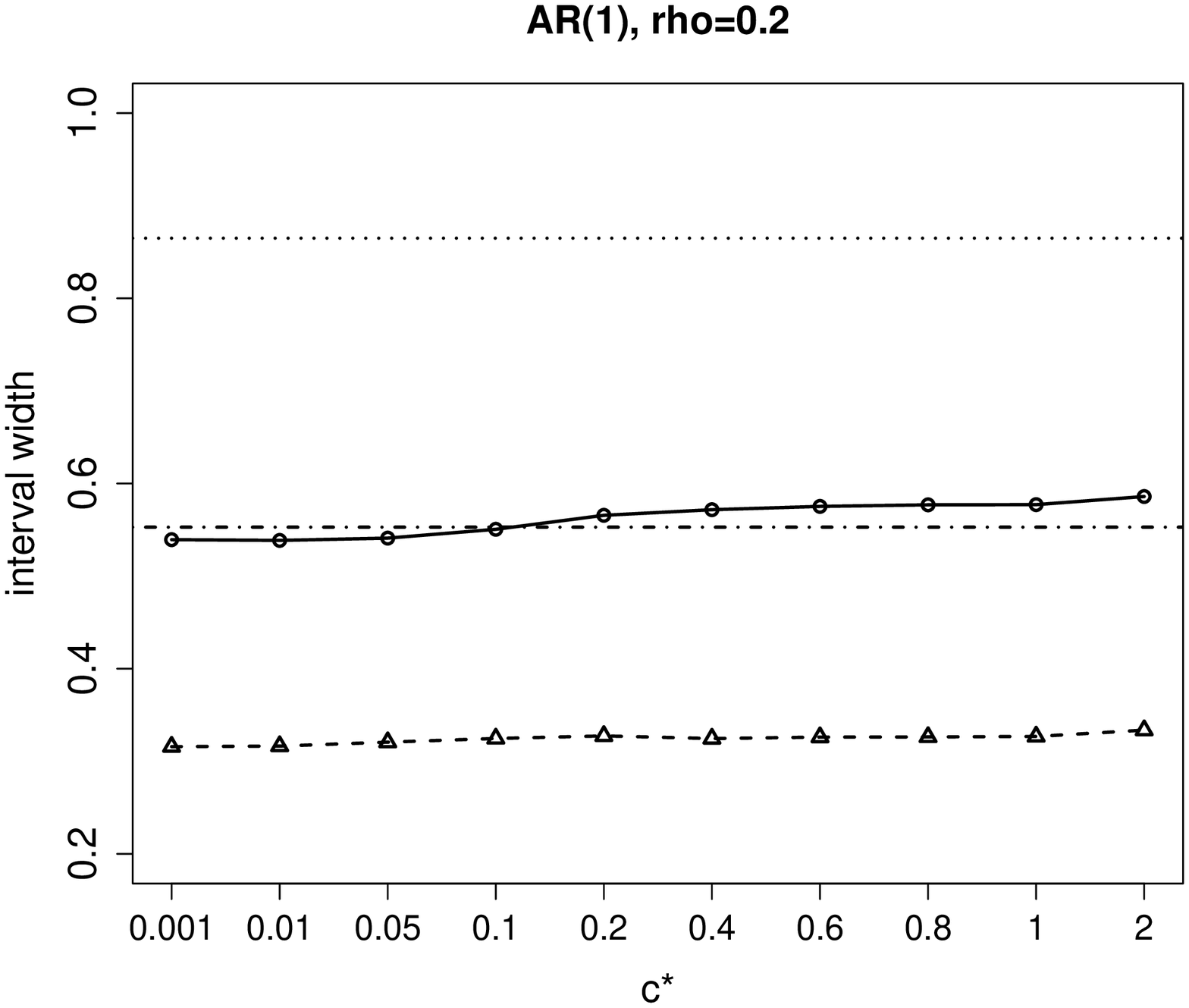}
\includegraphics[height=5.2cm,width=6cm]{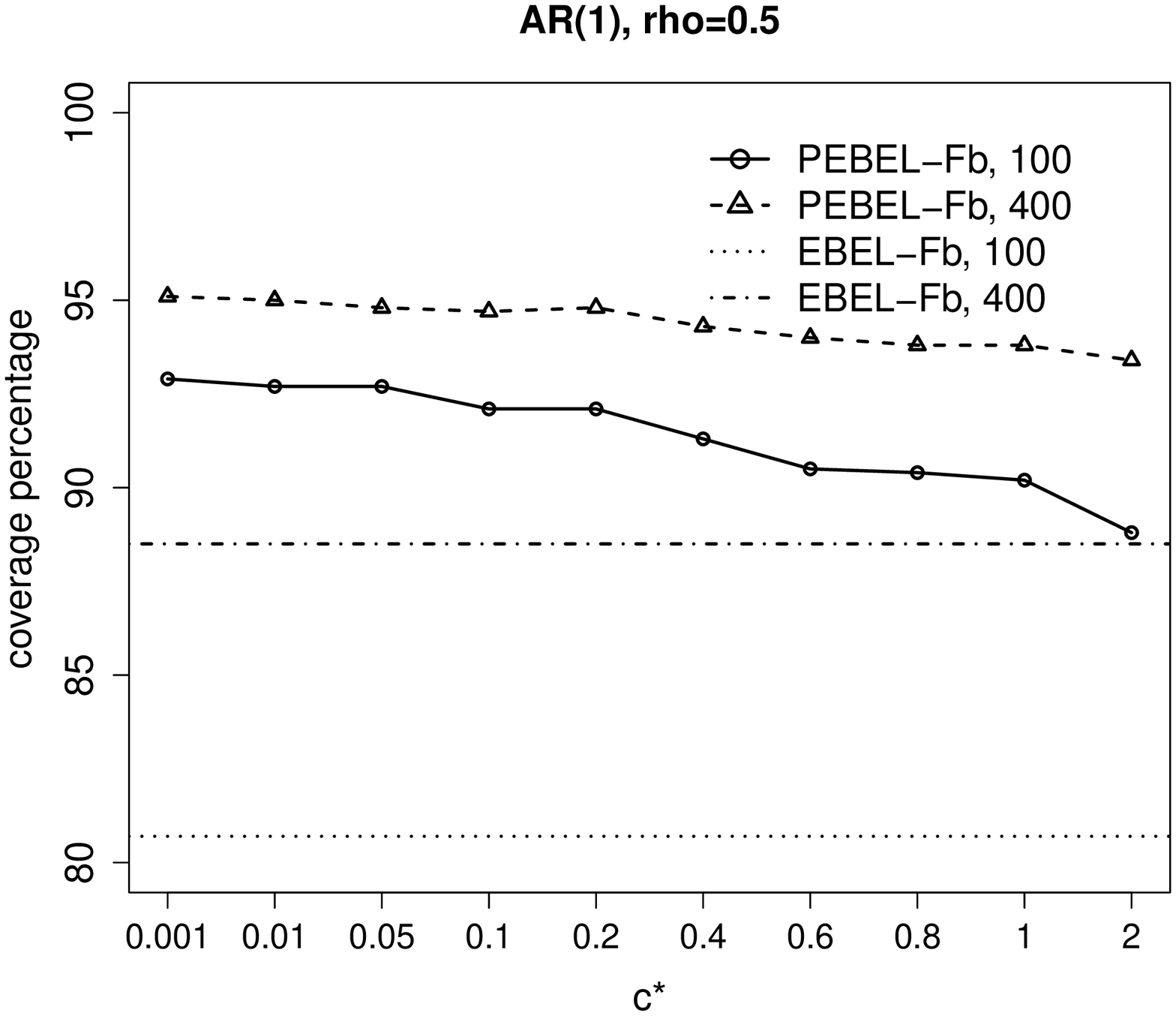}
\includegraphics[height=5.2cm,width=6cm]{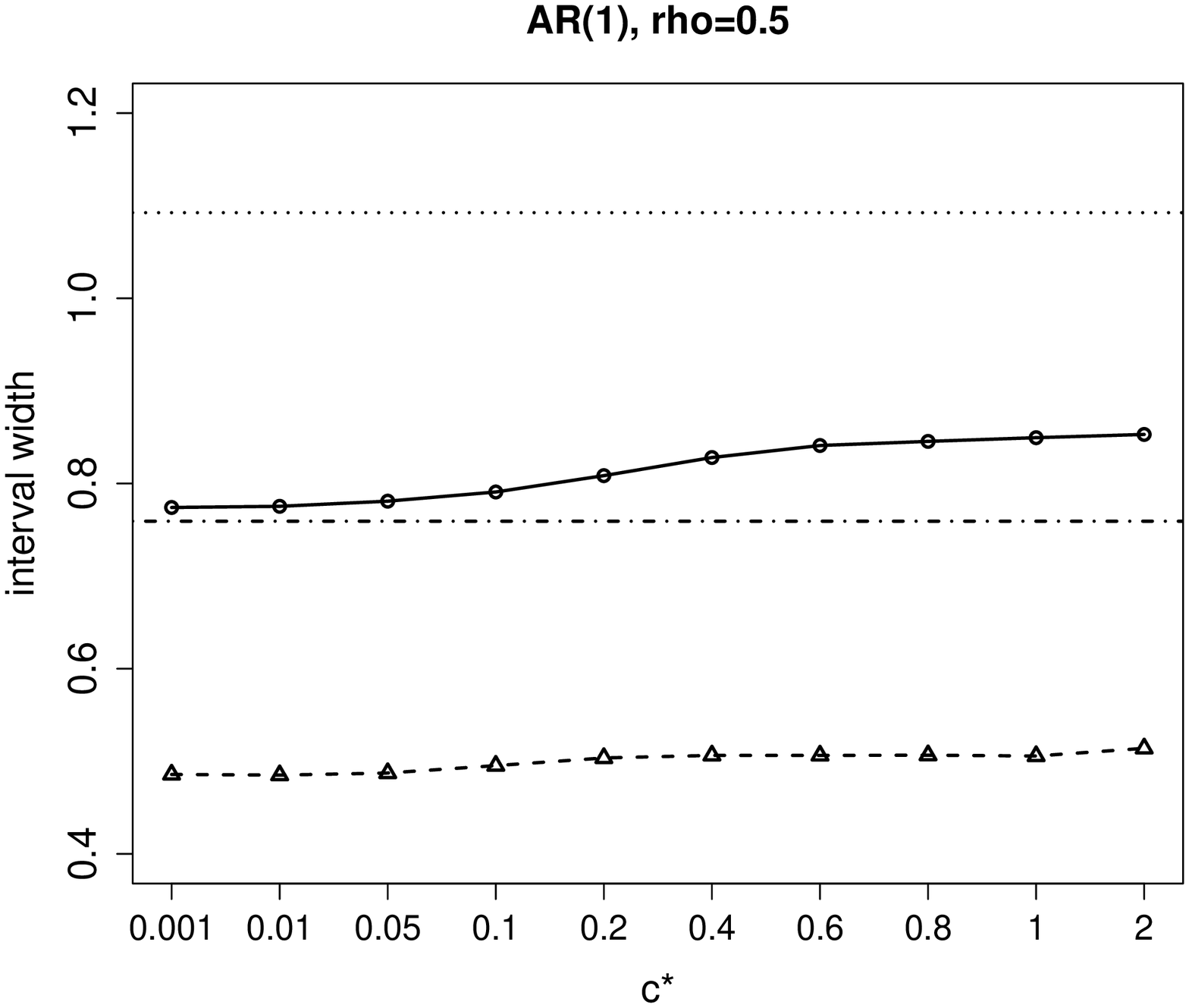}
\includegraphics[height=5.2cm,width=6cm]{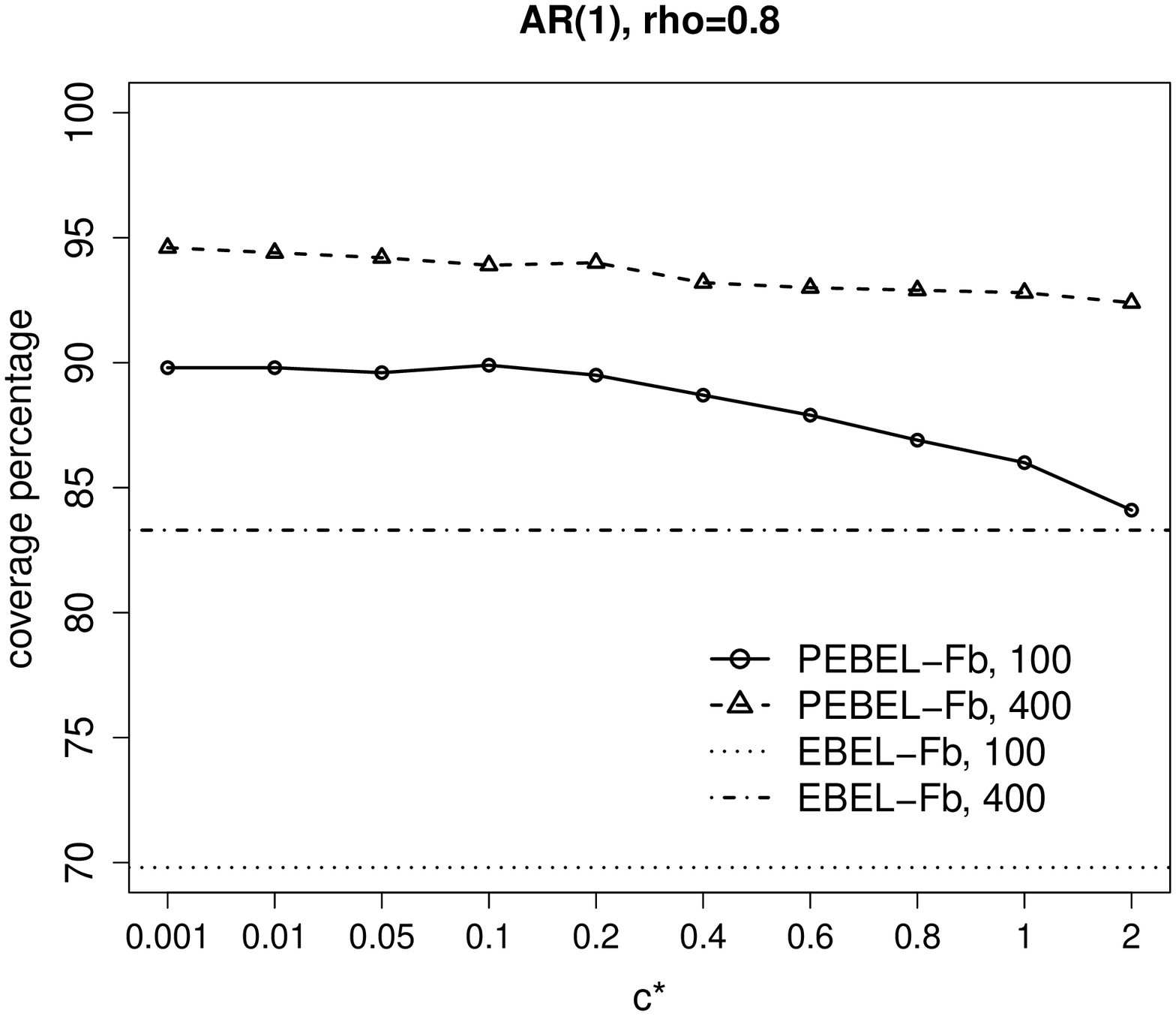}
\includegraphics[height=5.2cm,width=6cm]{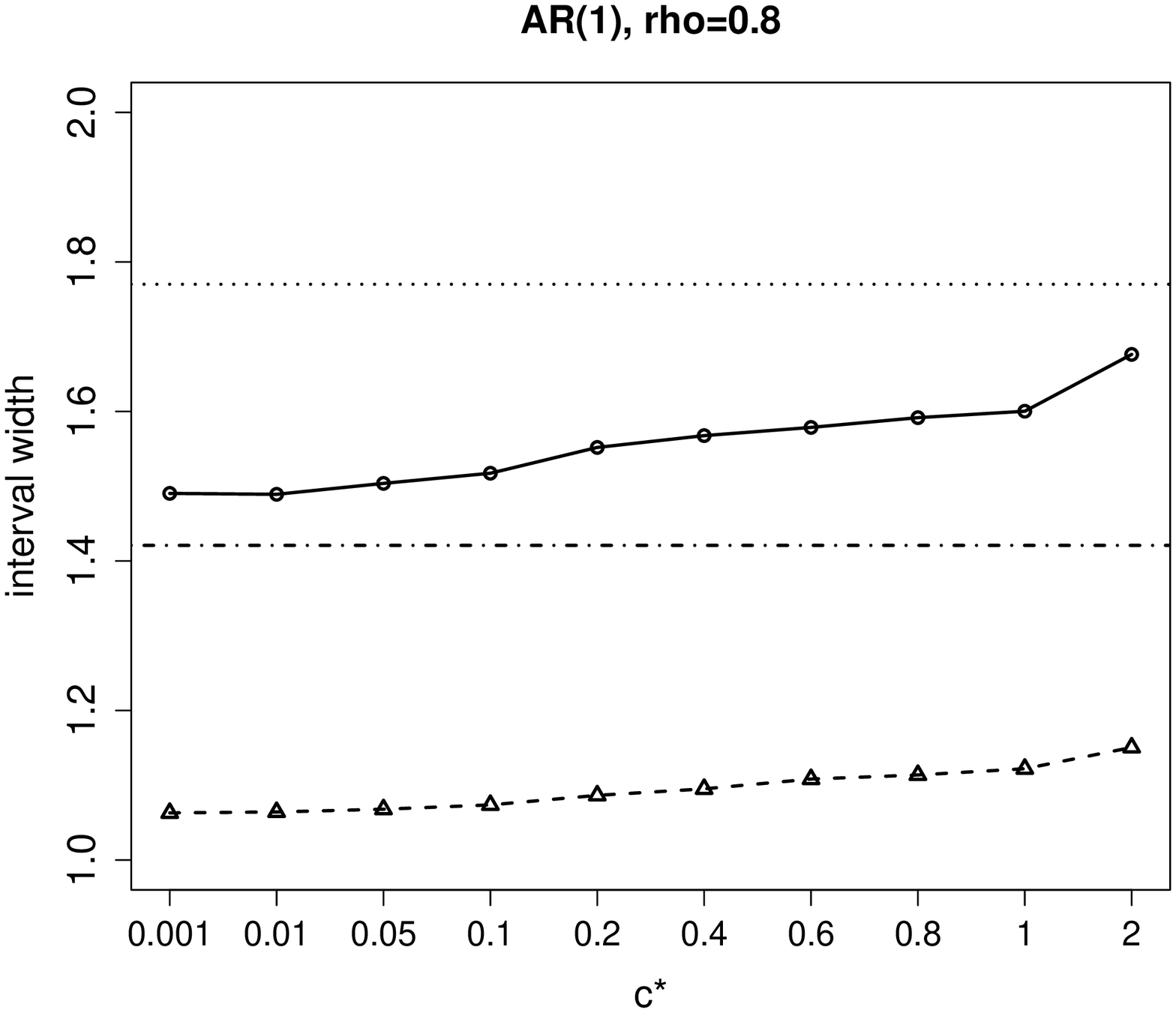}
\includegraphics[height=5.2cm,width=6cm]{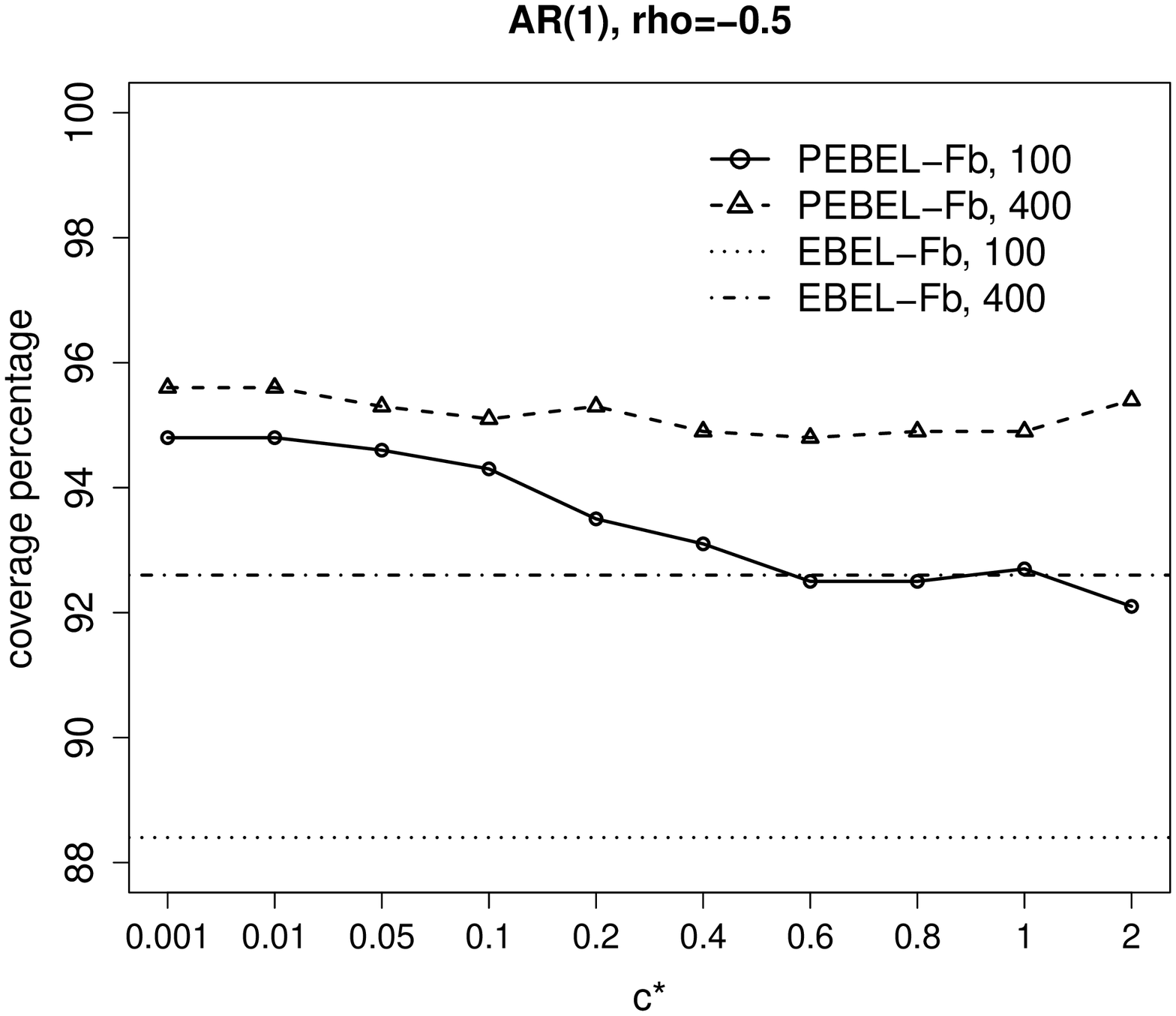}
\includegraphics[height=5.2cm,width=6cm]{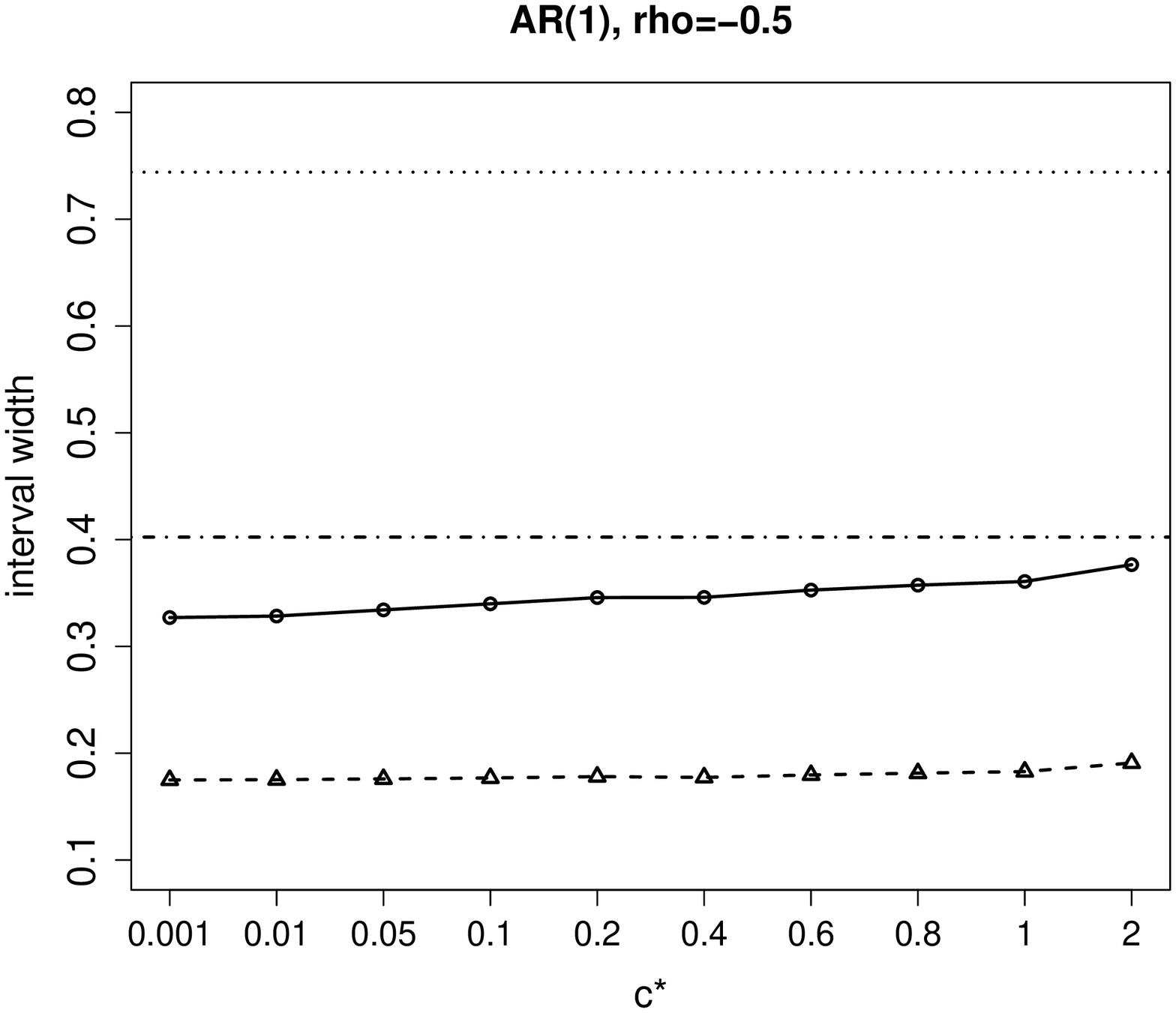}
\caption{Coverage probabilities (left panels) and interval widths
(right panels) for the mean delivered by the PEBEL with various
$c^*$ and $Q(r,s)=(1-|r-s|)\mathbf{I}\{|r-s|\leq 1\}$, and EBEL,
where $k=1$. The nominal level is 95\% and the number of Monte Carlo
replications is 1,000.}\label{fig:pebel-ar-k1}
\end{figure}

\newpage

\begin{figure}[H]
\centering
\includegraphics[height=5.2cm,width=6cm]{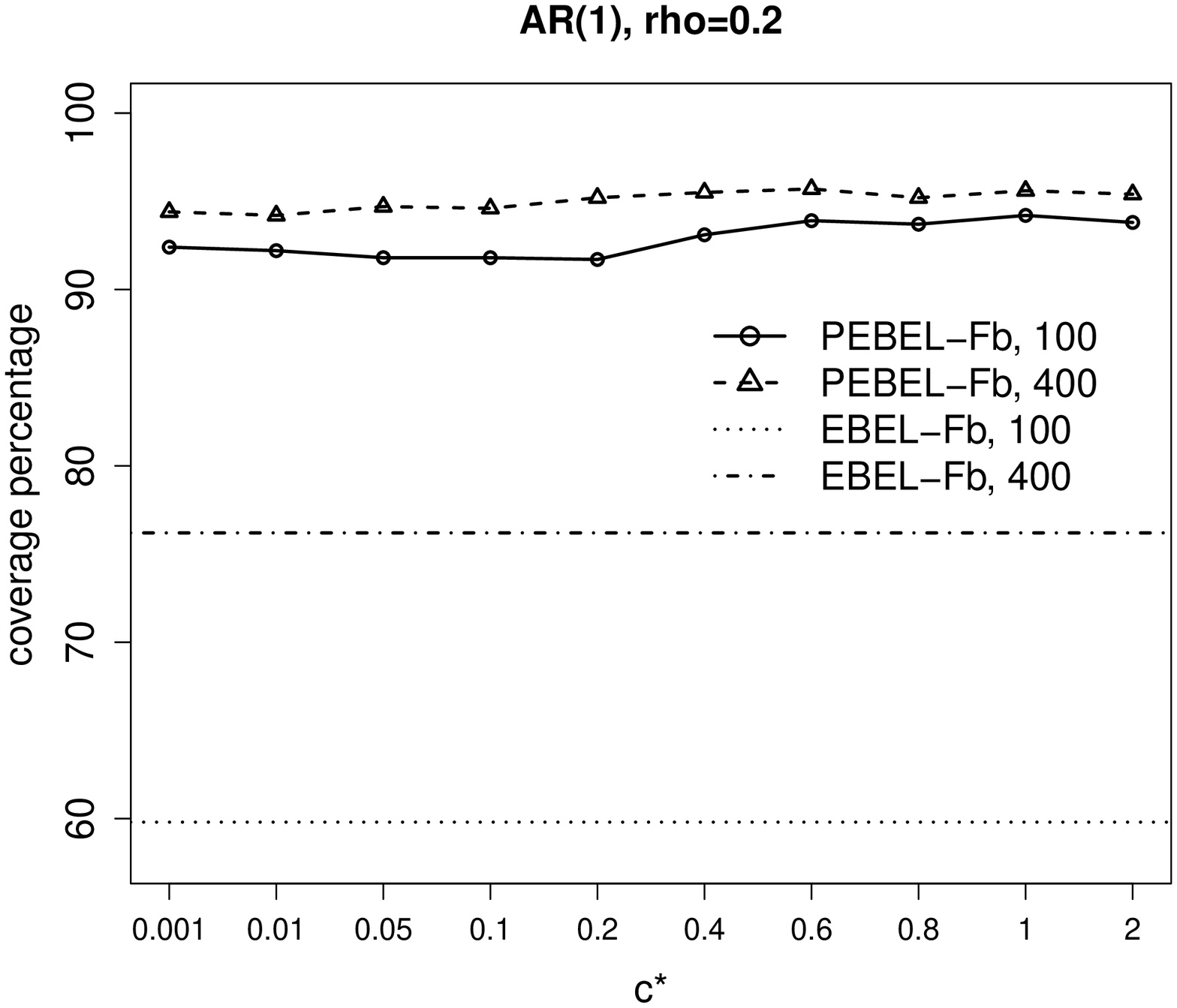}
\includegraphics[height=5.2cm,width=6cm]{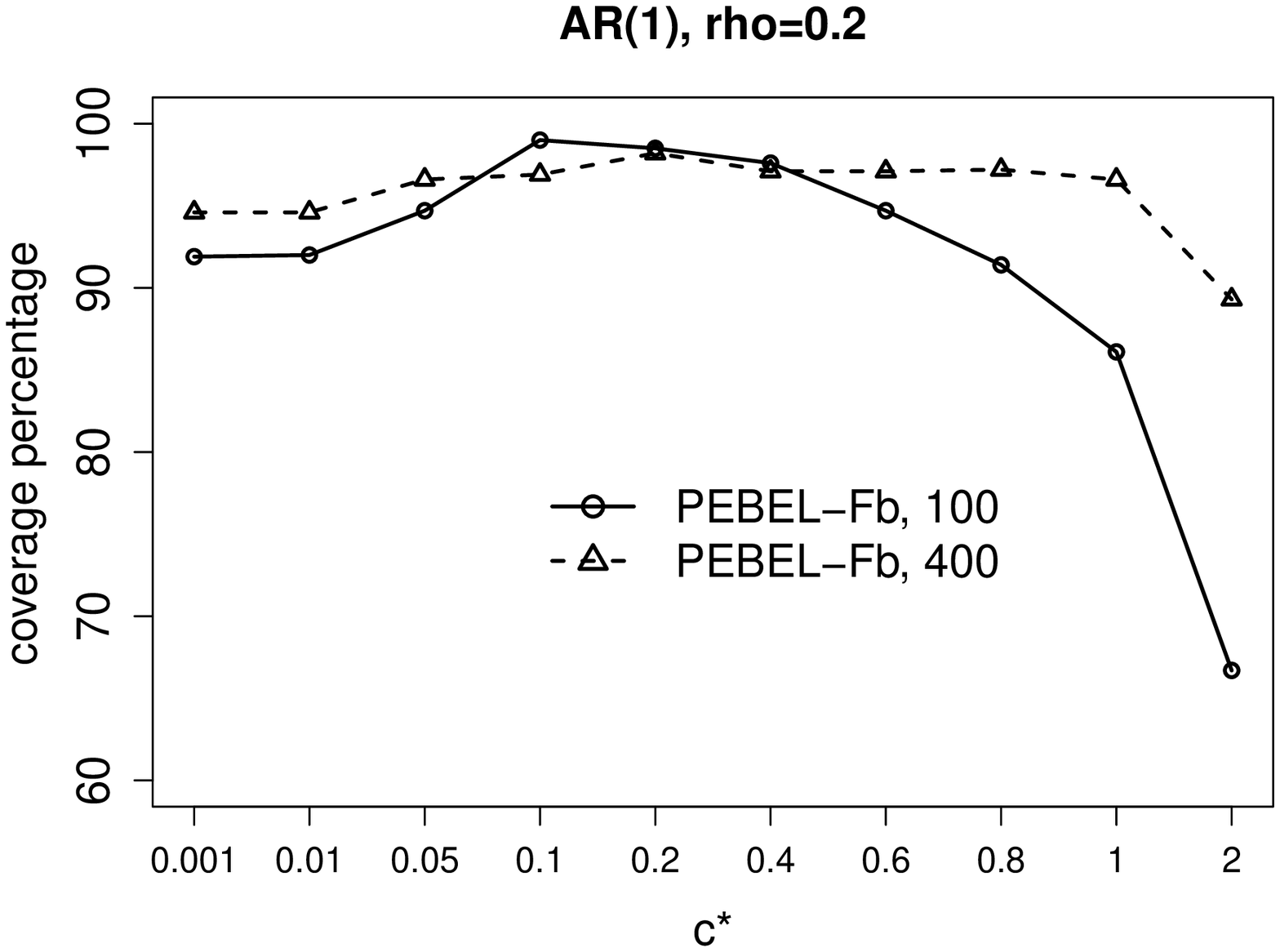}
\includegraphics[height=5.2cm,width=6cm]{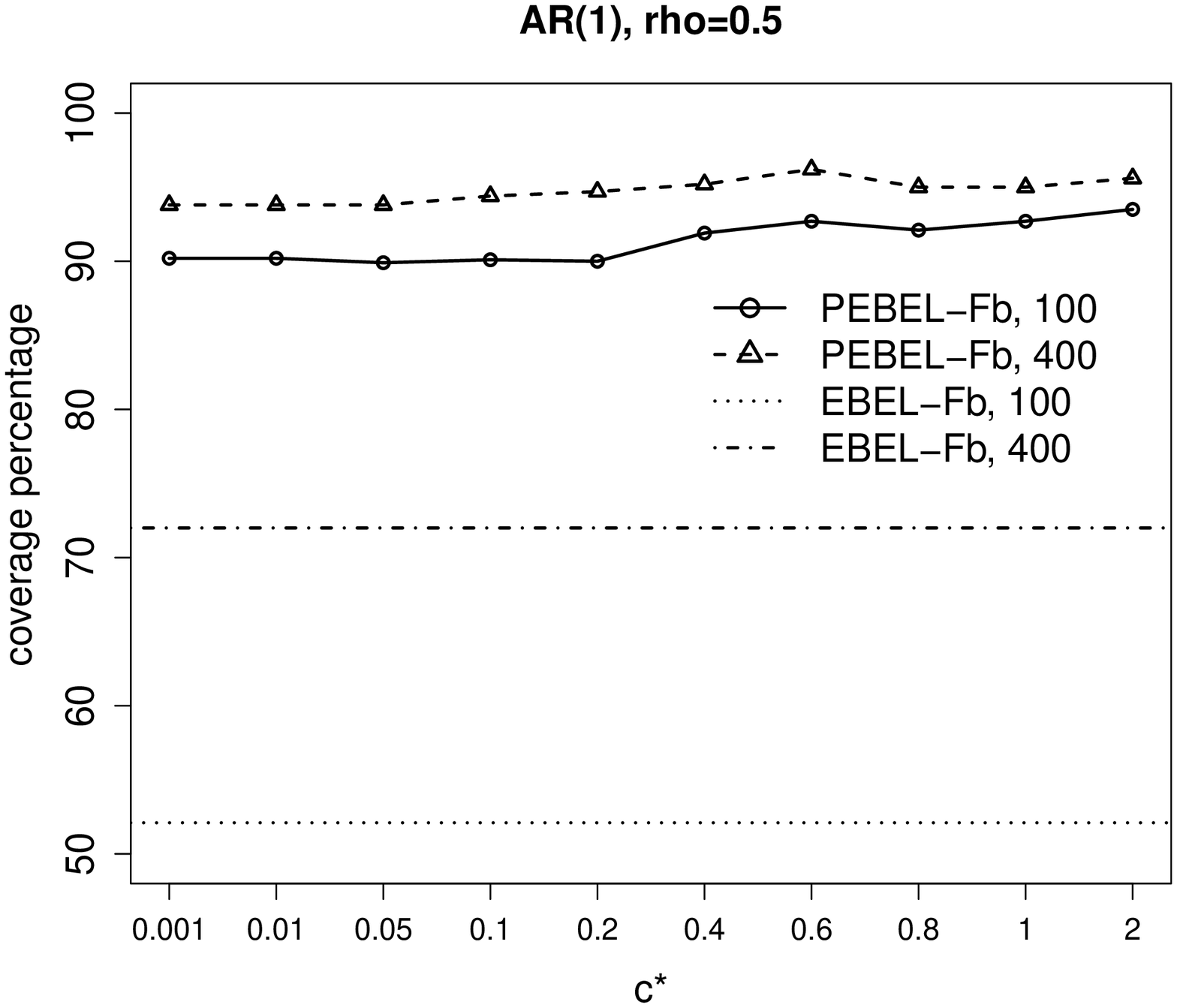}
\includegraphics[height=5.2cm,width=6cm]{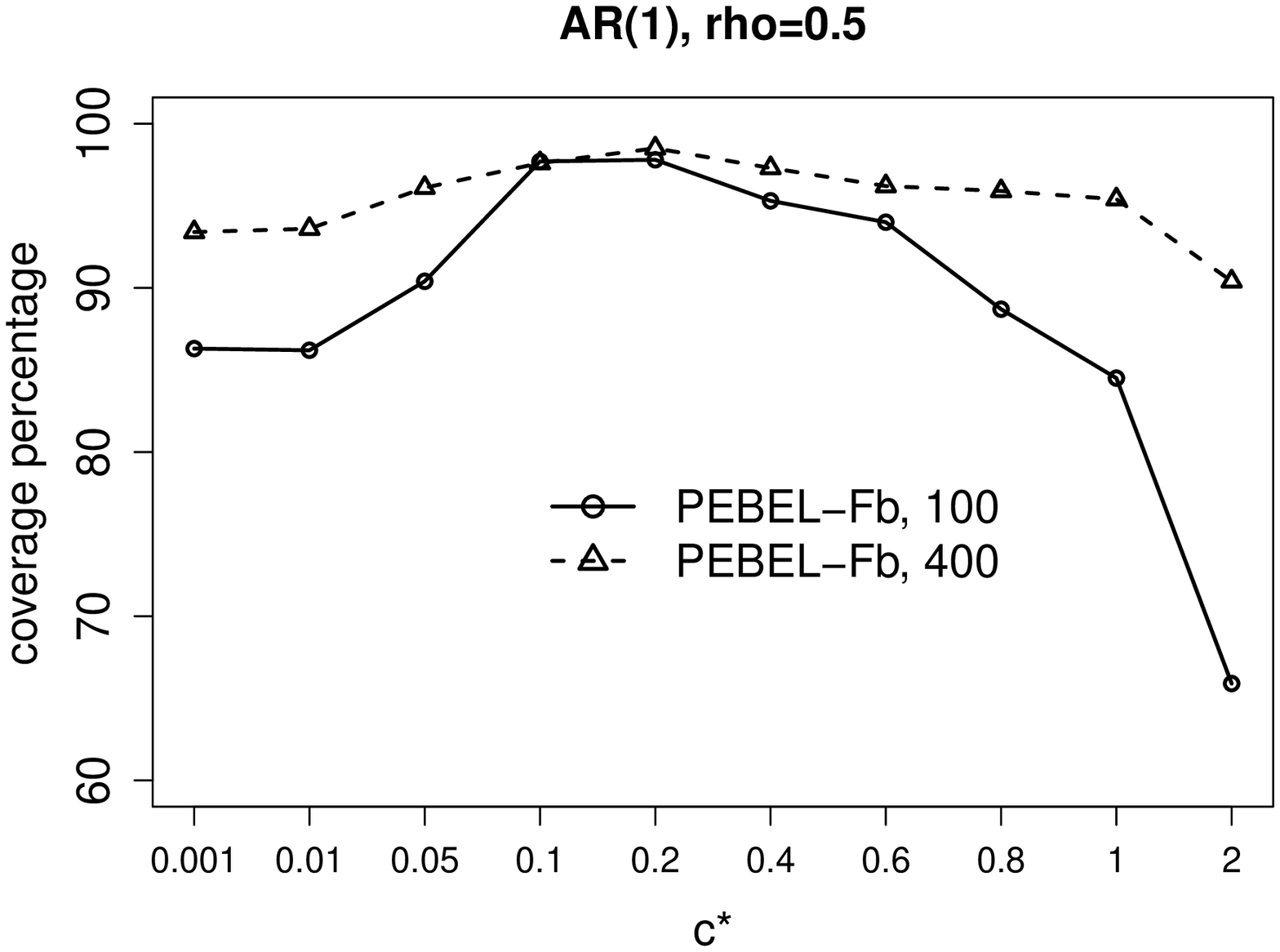}
\includegraphics[height=5.2cm,width=6cm]{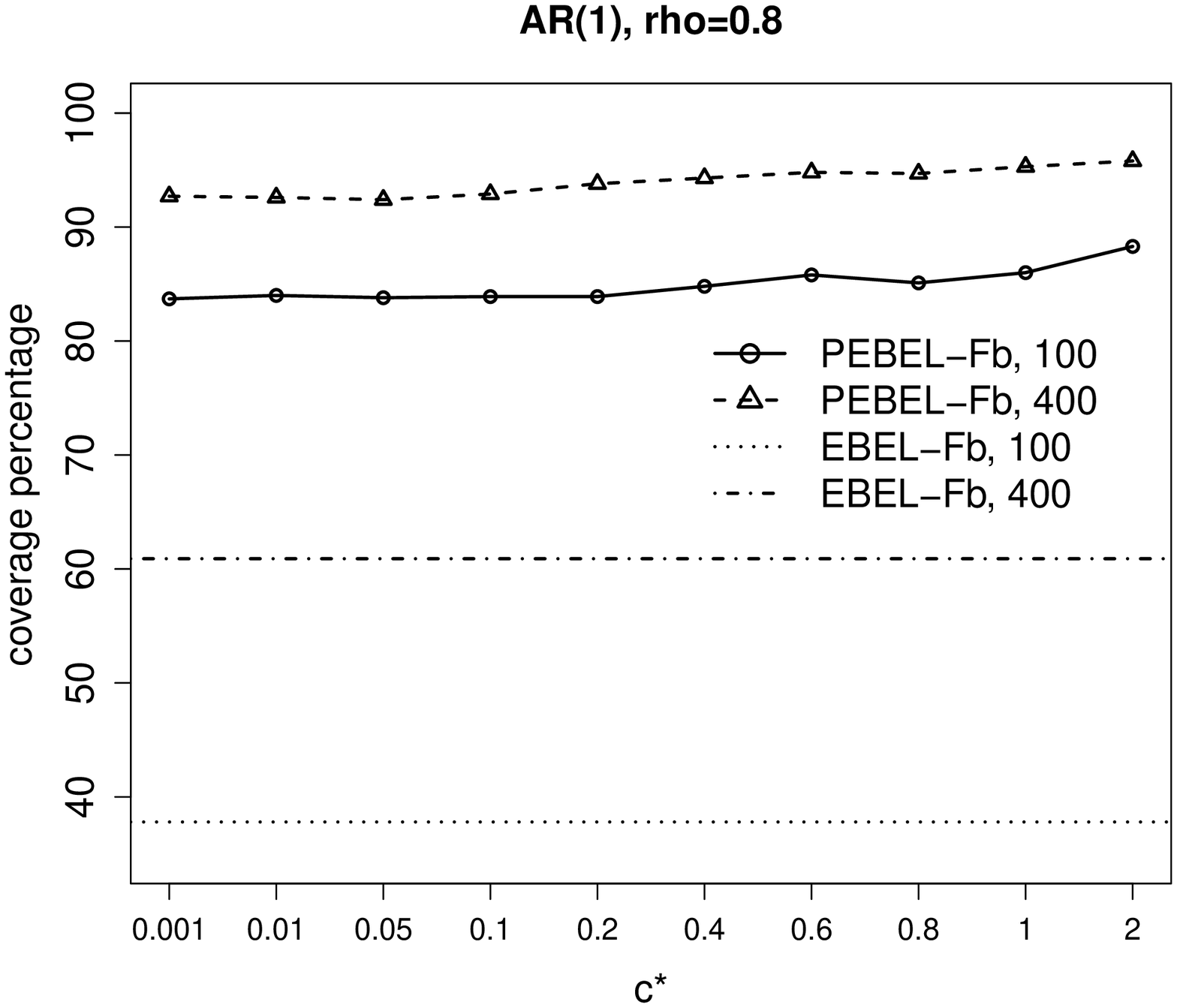}
\includegraphics[height=5.2cm,width=6cm]{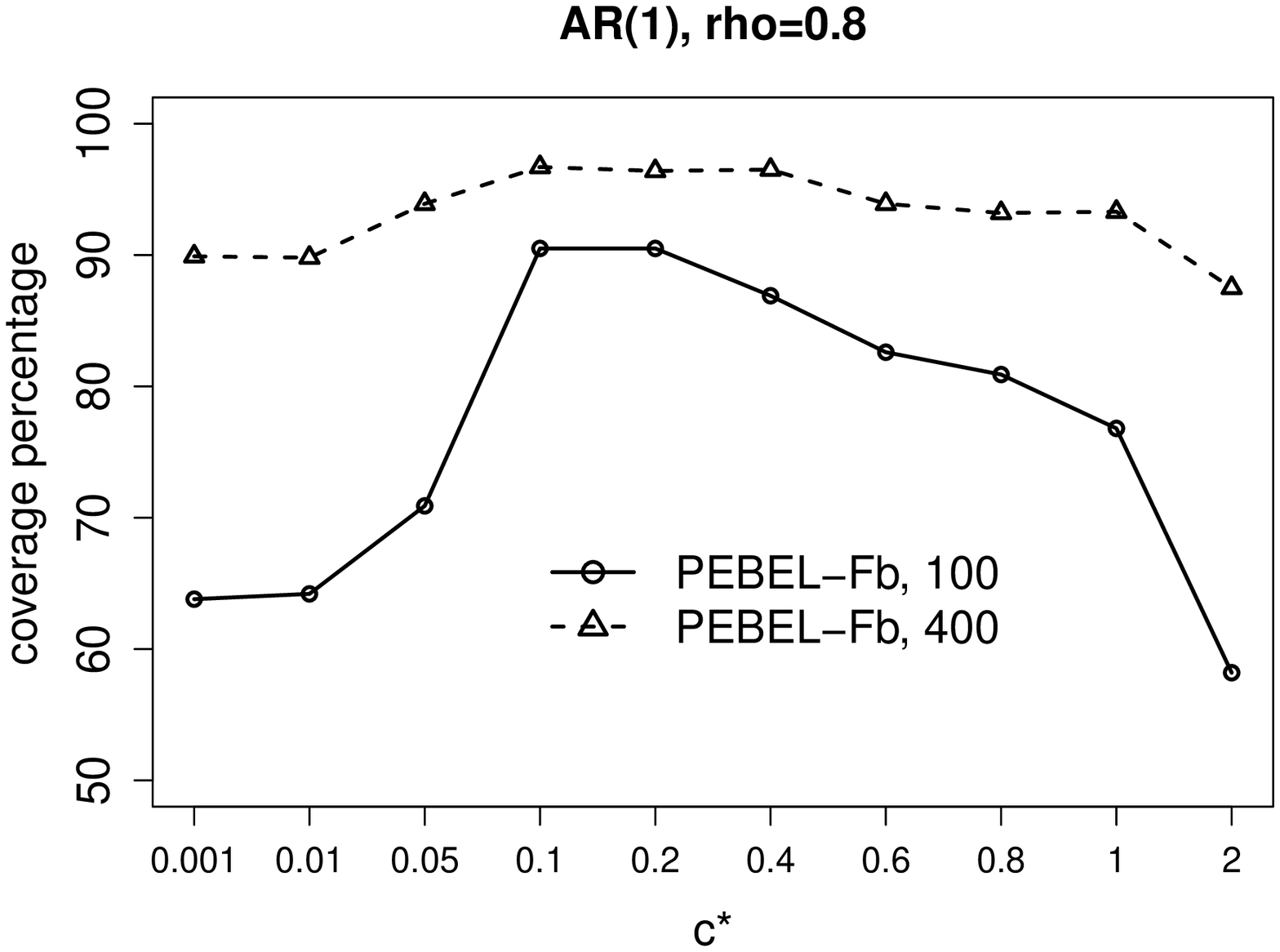}
\includegraphics[height=5.2cm,width=6cm]{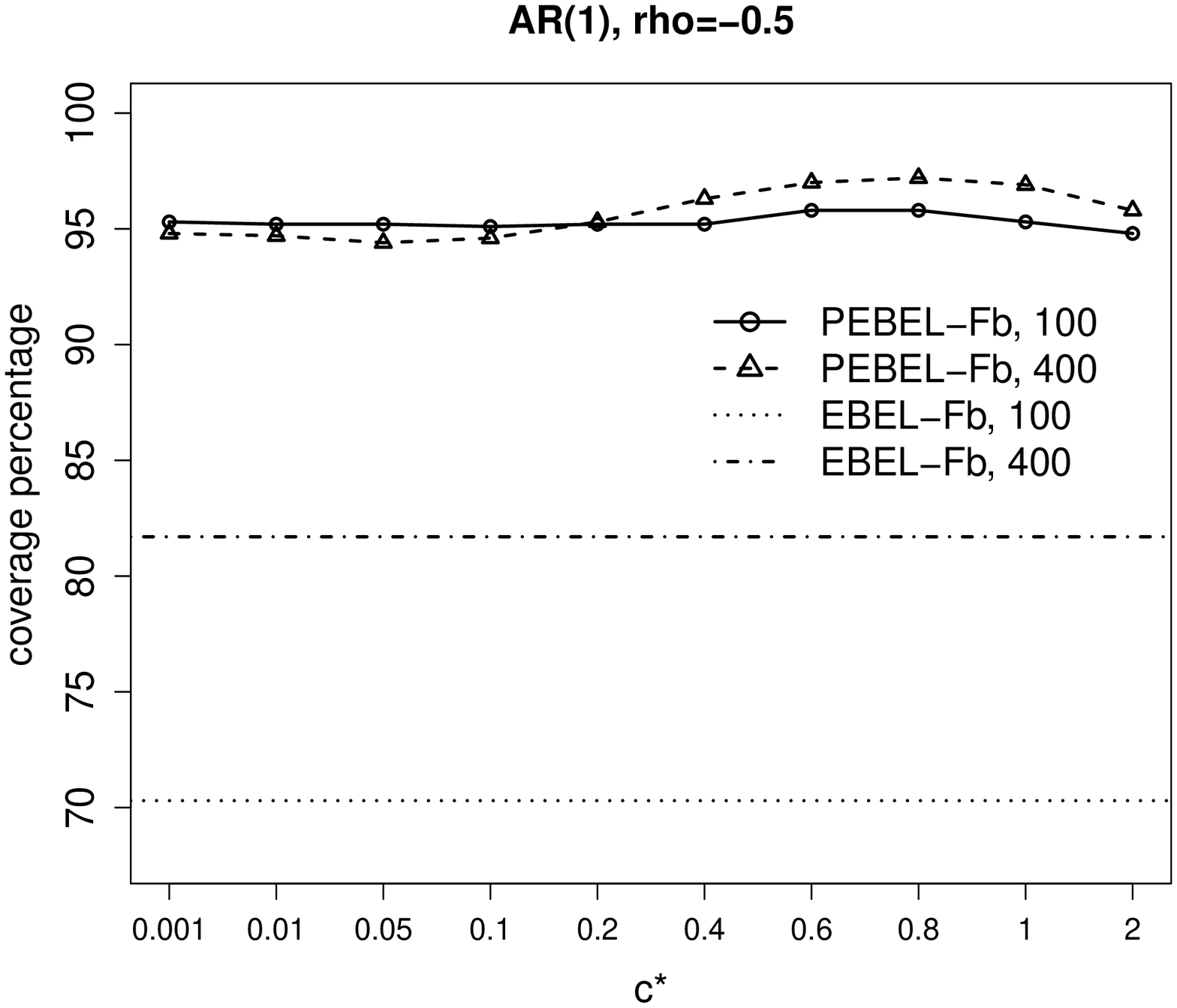}
\includegraphics[height=5.2cm,width=6cm]{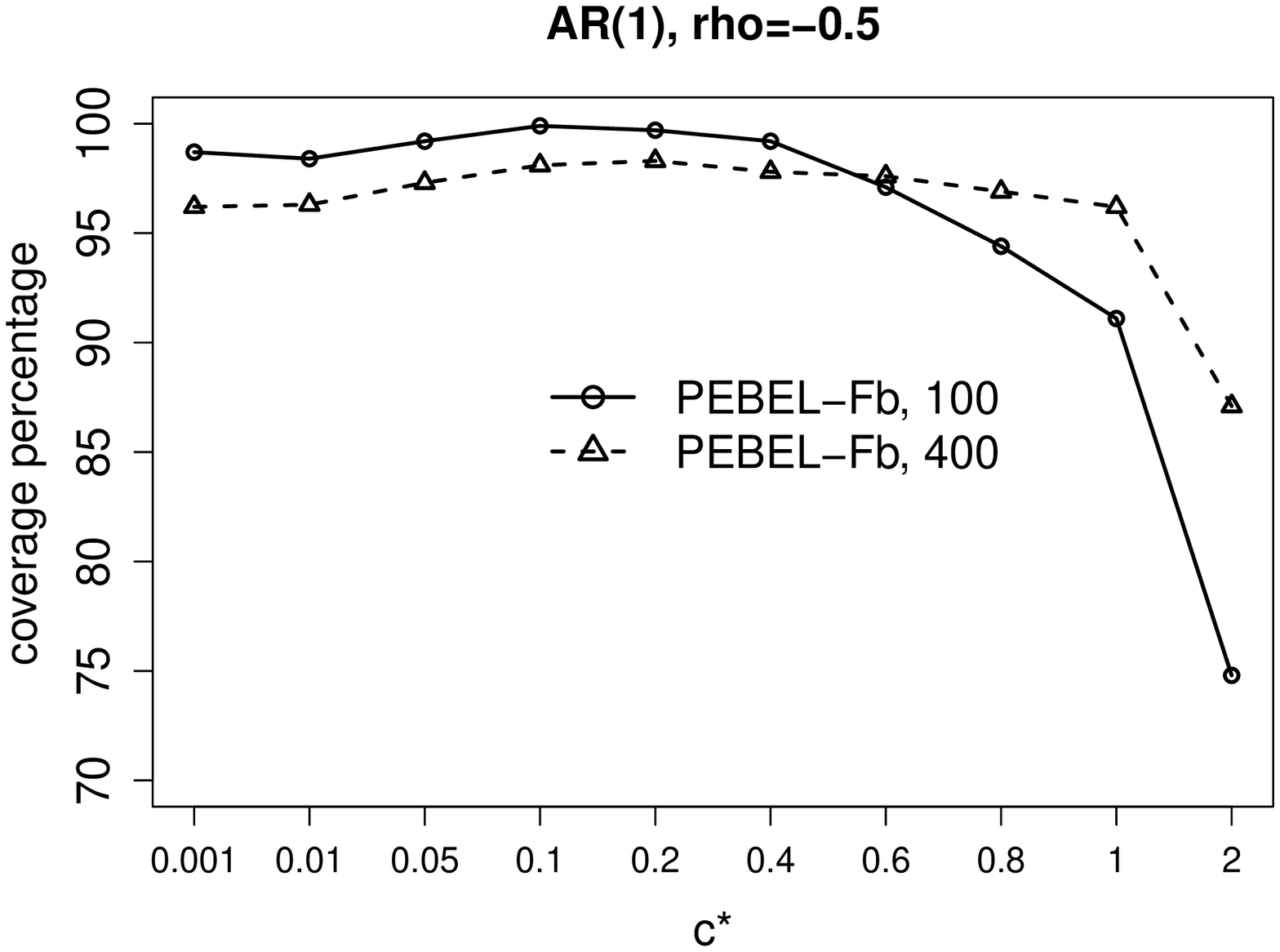}
\caption{Coverage probabilities for the mean delivered by the PEBEL
with various $c^*$ and $Q(r,s)=(1-|r-s|)\mathbf{I}\{|r-s|\leq 1\}$,
and EBEL, where $k=2$ for the left column and $k=5$ for the right
column. The nominal level is 95\% and the number of Monte Carlo
replications is 1,000.}\label{fig:pebel-ar}
\end{figure}

\newpage

%\begin{figure}[!ht]
%\centering
%\includegraphics[height=10cm,width=10cm]{nile.eps}
%\caption{Demeaned annual Nile flow}\label{fig:nile}
%\end{figure}

\begin{figure}[H]
\centering
\includegraphics[height=10cm,width=10cm]{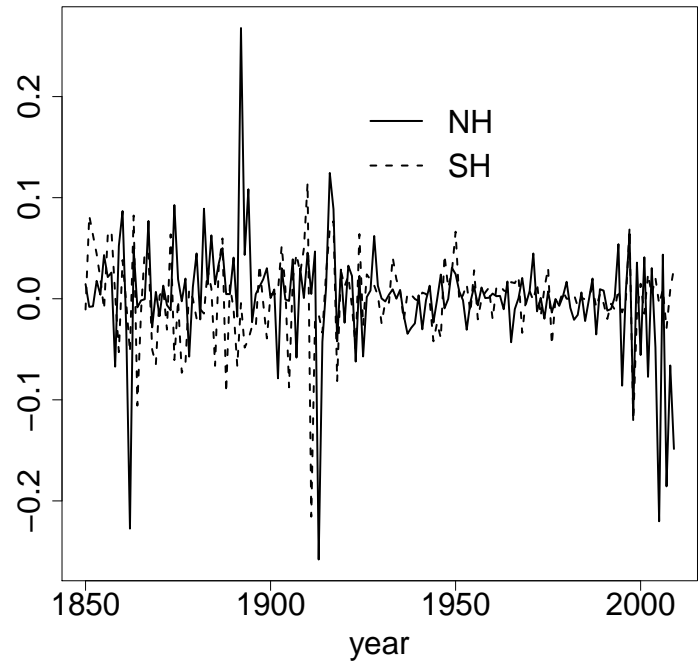}
\caption{Plot of $f(Z_t,\hat{\theta})=X_t(Y_t-X_t\hat{\theta})$ with
$Z_t=(X_t,Y_t)$. NH: northern hemisphere; SH: southern
hemisphere.}\label{fig:temp}
\end{figure}

\newpage

\vskip 1em \centerline{\LARGE \bf Supplementary material} \vskip 1em
\setcounter{section}{0}
\renewcommand{\thesection}{S\arabic{section}}
\setcounter{subsection}{0}
\renewcommand{\thesubsection}{S\arabic{subsection}}
\setcounter{equation}{0}
\renewcommand{\theequation}{S\arabic{equation}}
\setcounter{lemma}{0}
\renewcommand{\thelemma}{S\arabic{lemma}}
\setcounter{theorem}{0}
\renewcommand{\thetheorem}{S\arabic{theorem}}
\setcounter{assumption}{0}
\renewcommand{\theassumption}{S\arabic{assumption}}
\setcounter{proposition}{0}
\renewcommand{\theproposition}{S\arabic{proposition}}
\setcounter{example}{0}
\renewcommand{\theexample}{S\arabic{example}}
\setcounter{figure}{0}
\renewcommand{\thefigure}{S\arabic{figure}}
\setcounter{table}{0}
\renewcommand{\thetable}{S\arabic{table}}
\setcounter{remark}{0}
\renewcommand{\theremark}{S\arabic{remark}}

\section{Additional numerical results}
%\subsection{Time series regression}
Consider the time series regression model which is given by
$$y_t=\beta_0+\beta_{1}x_{t1}+\cdots+\beta_{m_0}x_{tm_0}+\eta_t,\quad t=1,2,\dots,n,$$
where $x_t=(x_{t1},\dots,x_{tm_0})'$ is generated from VAR(1) process with
the coefficient matrix $\rho I_{m_0}$ and standard multivariate normal errors, $\{\eta_t\}$ is an AR(1) process with
coefficient $\rho$ and standard normal errors. We are interested in constructing confidence contour for the regression coefficients
$\beta=(\beta_0,\beta_1,\dots,\beta_{m_0})'$ whose true value is set to be $\beta^*=(0,0,\dots,0)'$. The moment condition is given
by $f(z_t,\beta)=\tilde{x}_t(y_t-\tilde{x}_t'\beta)$ with $\tilde{x}_t=(1,x_t')'$ and $\beta\in\mathbb{R}^{m_0+1}$. To implement the penalized methods, we consider the self-normalization matrix
\begin{equation*}
\Psi_n(\hat{\theta}_n)=\frac{1}{n}\sum^{n}_{i=1}\sum^{n}_{j=1}\left(1-\left|\frac{i-j}{n}\right|\right)f(z_i,\hat{\beta}_n)f(z_j,\hat{\beta}_n)',
\end{equation*}
with $\hat{\beta}_n$ being the least square estimate. We set $\rho=-0.5,0.2,0.5,0.8$, and $m_0=1,4.$
Figures \ref{fig:pel-ar-reg}-\ref{fig:pebel-ar-reg} present respectively the coverage probabilities for PBEL and PEBEL, and their unpenalized counterparts at the 95\% nominal level.
Note that for $m_0=4$, we only present the results for the penalized methods as the unpenalized counterparts severely suffer from the coverage upper bound problem.
Overall, the results are qualitatively similar to those for the mean case. When $m_0=1$ (i.e. $k=2$), the PBEL provides better coverage uniformly over $b$ as compared
to the BEL when $\rho=0.5$ and 0.8. The improvement becomes more significant as the block size grows. When $m_0=4$ (i.e. $k=5$), we note that the choice of $c^*$ that delivers the most accurate
coverage is delicate as it depends on $b$ and the strength of dependence. For PEBEL, the improvement on the coverage probabilities is again significant for
the range of $c^*$ being considered. When $m_0=4$, the coverage probability is sensitive to the choice of $c^*$ and we expect the block bootstrap method described in Section \ref{sub:sim-pebel}
to be useful in this case.

\section{Fixed-$b$ asymptotics versus small-$b$ asymptotics}
For the coverage upper bound problem, the fixed-$b$
method  is more appropriate than the small-$b$ method in terms of describing the finite sample
situation because one cause of the coverage upper bound problem in
the dependent case is the choice of the blocking strategy and block
size which is explicitly reflected in the fixed-$b$ limit. For instance, when the finite sample bound problem occurs,
the fixed-$b$ method correctly
reflects such a phenomenon in the asymptotics, while the original
BEL under the small-$b$ asymptotics is somewhat ``over-optimistic'' as the corresponding upper bound in the limit
is always one regardless of what the finite sample bound is.

%The advantages of the fixed-$b$ approach have been found in other contexts.

The basic philosophy behind the fixed-$b$ method is to embed the
finite sample situation in a different limiting thought experiment,
where $b$ (the fraction of block size relative to sample size) is
held fixed as the sample size grows. In the small-$b$ asymptotics,
$b$ goes to zero as sample size $n\rightarrow +\infty$, which is a
convenient assumption for deriving a limiting distribution. However,
as pointed out in Kiefer and Vogelsang (2005) [also see Neave
(1970)], this assumption can be unrealistic when the deduced results
are used as approximations to the finite sample case where the value of $b$
can never be zero. Thus, the finite sample coverage upper bound
problem in general does not go away since $b$ cannot be zero.
The usefulness of the fixed-$b$ approach has been demonstrated in
many other contexts; see Sun (2013); Kiefer and Vogelsang (2005); Shao
(2010), Shao and Politis (2013), among others. Using the higher-order Edgeworth expansions,
Jansson (2004), Sun et al. (2008), Sun (2014) and Zhang and Shao
(2013) rigorously proved that the fixed-$b$ type asymptotics
provides a high order refinement over the traditional small-$b$ type
asymptotics in the Gaussian location model. It is also worth
pointing out that the fixed-$b$ method and small-$b$ method are
consistent for relatively small $b$ in the BEL context [see e.g. Remark 2 of Zhang and
Shao (2014)]. Given the connection between BEL and generalized method of moments
framework, for which the fixed-$b$ approach has been successfully
extended to [see Kiefer and Vogelsang (2005), Vogelsang (2003)], we
believe that the nice properties of fixed-$b$ approach as found in other
contexts carry over to the BEL case, which has been partially confirmed in Zhang and Shao (2014).

The fixed-$b$ asymptotics not only provides better approximation for
the original BEL but it also tends to provide better approximation
for the penalized counterpart. To further illustrate the superiority
of the fixed-$b$ approach over the small-$b$ approach in the PBEL
context, we shall present some simulation results. Following the
setup in Section \ref{sec:numerical}, we focus on the confidence
region for the mean of univariate/multivariate time series. In the
univariate case, we consider the AR(1) process $z_t=\rho
z_{t-1}+\varepsilon_t$ with $\rho=-0.5, 0.2, 0.5, 0.8$, where
$\{\varepsilon_t\}$ is a sequence of i.i.d standard normal errors.
In the multidimensional case (i.e. $k>1$), we generate multivariate
time series with each component being independent AR(1) process. The
sample size $n=100$ and the nominal level is 95\%. To construct
confidence interval for the mean of the time series, we consider the
BEL and PBEL under both fixed-$b$ and small-$b$ asymptotics. Recall
the definition of the PBEL ratio test statistic which is given by,
\begin{equation*}
elr_{pbel}(\theta)=\min_{\mu\in\mathbb{R}^k}\left\{\frac{2}{nb}\max_{\lambda\in\mathbb{R}^k}\sum^{N}_{t=1}\log
\left\{1+\lambda'(f_{tn}(\theta)-\mu)\right\}+\frac{\tau}{b}\delta_n\left(\mu\right)\right\}.
\end{equation*}
For PBEL under the small-$b$ asymptotics, we shall use the critical
value from $\chi^2$ distribution to conduct inference. The $\chi^2$
approximation is only valid when $b$ is small and $c^*$ is large
because for large $c^*$, the PBEL ratio statistic behaves like the
BEL ratio statistic which has a $\chi^2$ limit when $b$ is small.
Note that when $c^*$ is small, the penalty term dominates and the
limiting distribution is no longer $\chi^2$. Thus for relatively
small $c^*$ (e.g. $c^*=0.01$), we do not present the results for
PBEL under the small-$b$ asymptotics. Figure \ref{fig:pbel-add}
plots the coverage probabilities for the four methods: small-$b$
based BEL, fixed-$b$ based BEL, small-$b$ based PBEL and fixed-$b$
based PBEL. Again, the fixed-$b$ based methods significantly outperform the
small-$b$ counterparts for relatively large $b$. We also observe
that the penalized methods in general provide improvement over the
unpenalized counterparts. The improvement is quite significant
especially when $c^*$ is small and $b$ is relatively large. Overall,
this small simulation illustrates the advantage of the fixed-$b$
method and also demonstrates the usefulness of the penalized method.

\newpage

\begin{figure}[H]
\centering
\includegraphics[height=5.2cm,width=6cm]{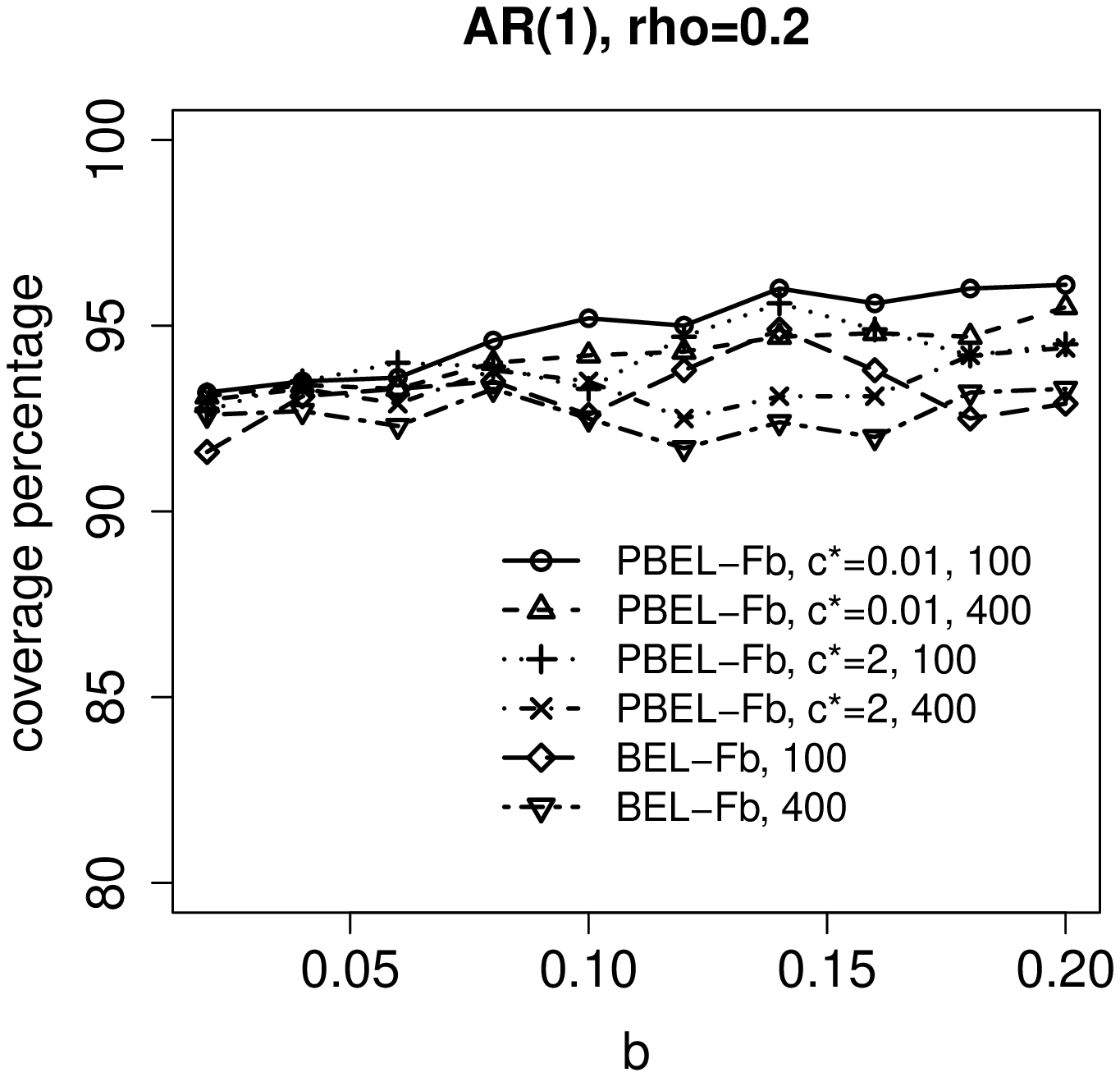}
\includegraphics[height=5.2cm,width=6cm]{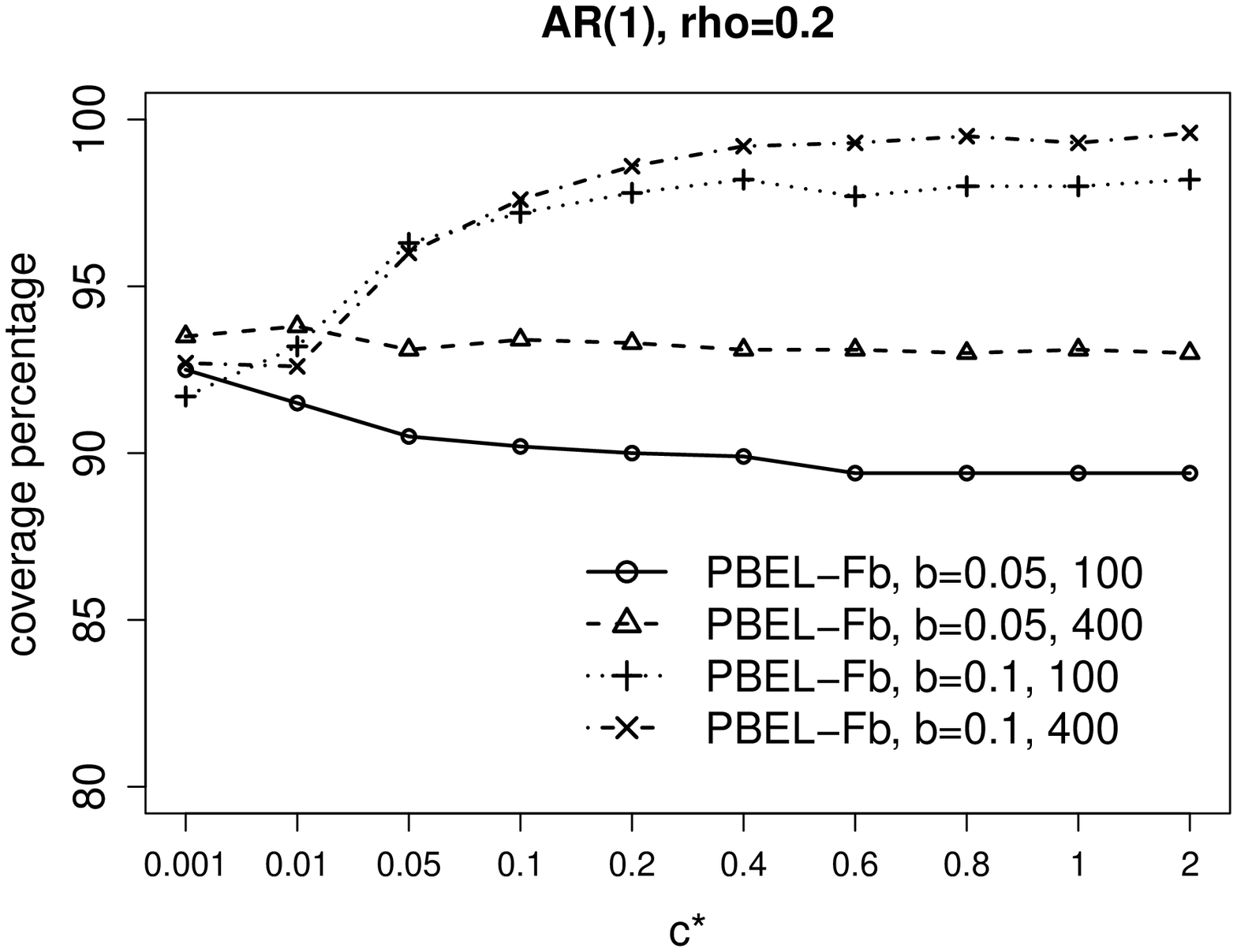}
\includegraphics[height=5.2cm,width=6cm]{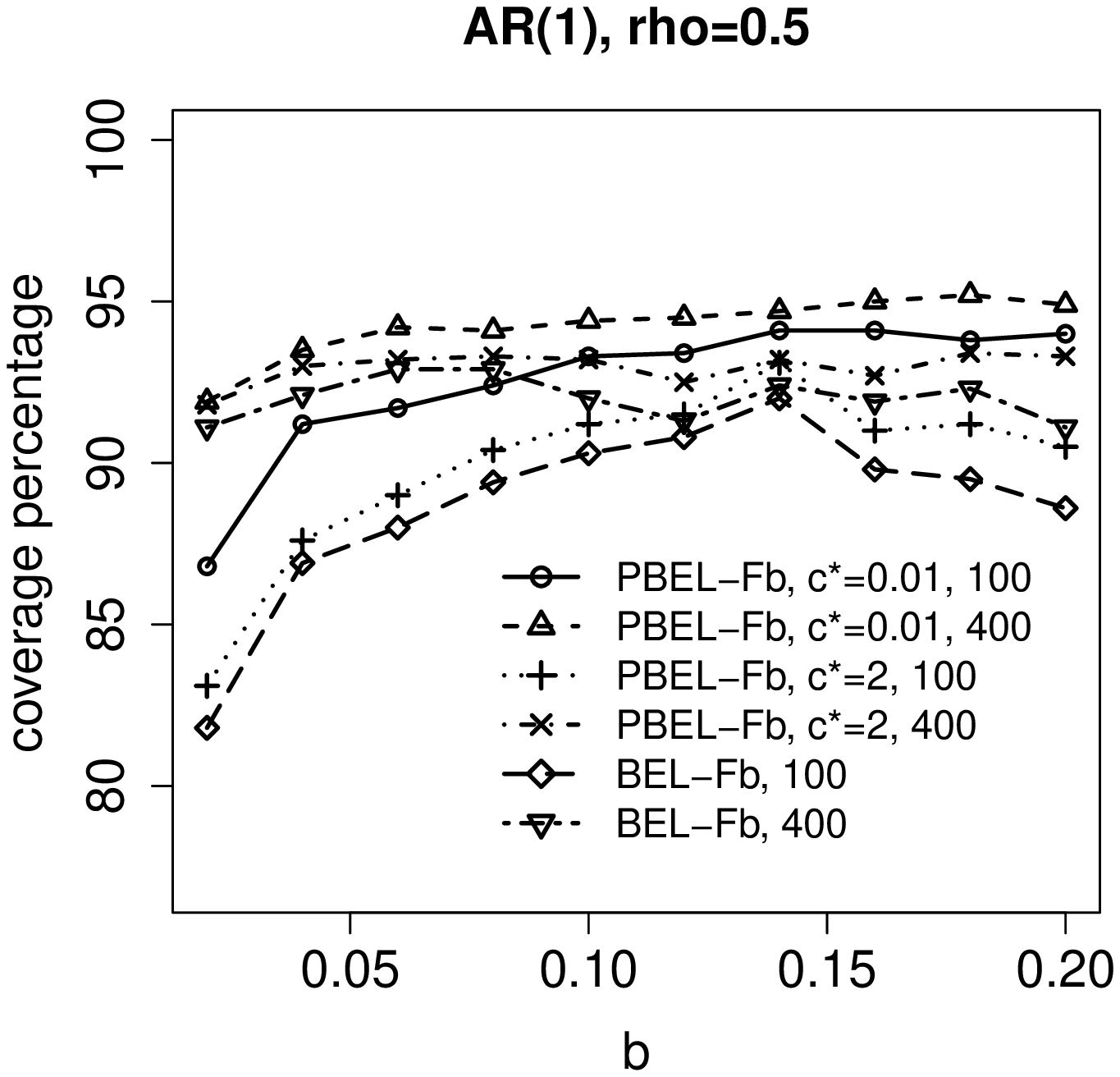}
\includegraphics[height=5.2cm,width=6cm]{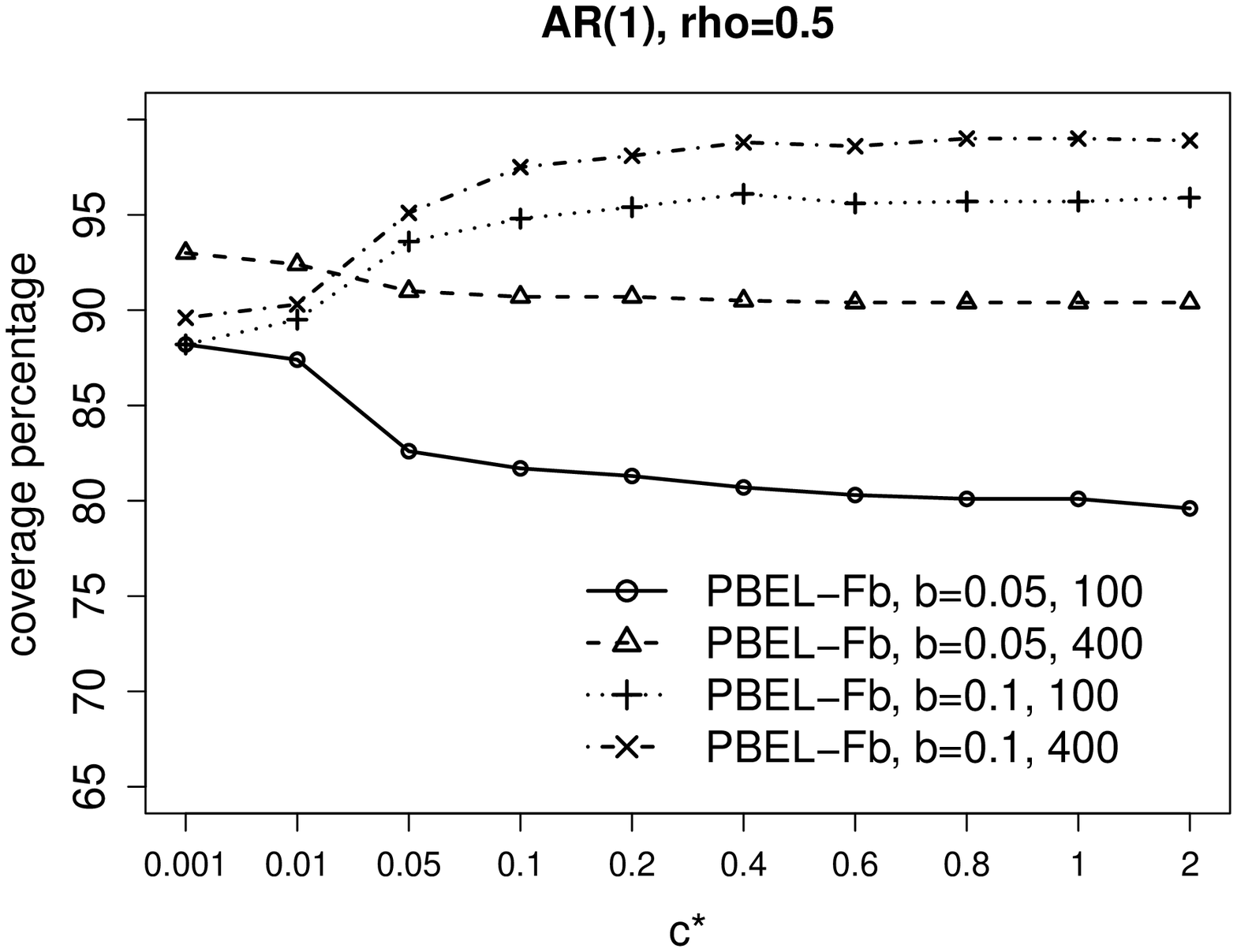}
\includegraphics[height=5.2cm,width=6cm]{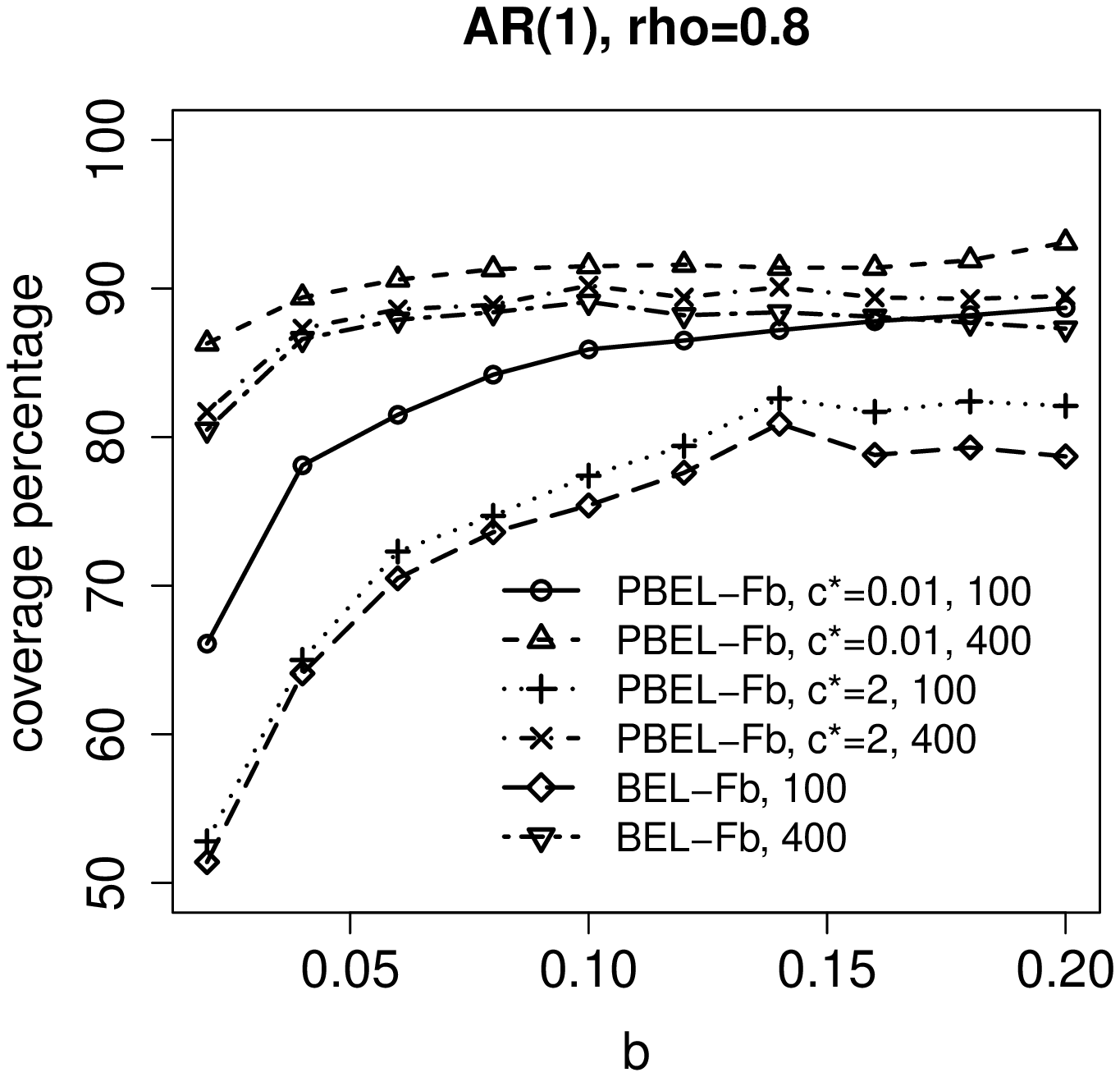}
\includegraphics[height=5.2cm,width=6cm]{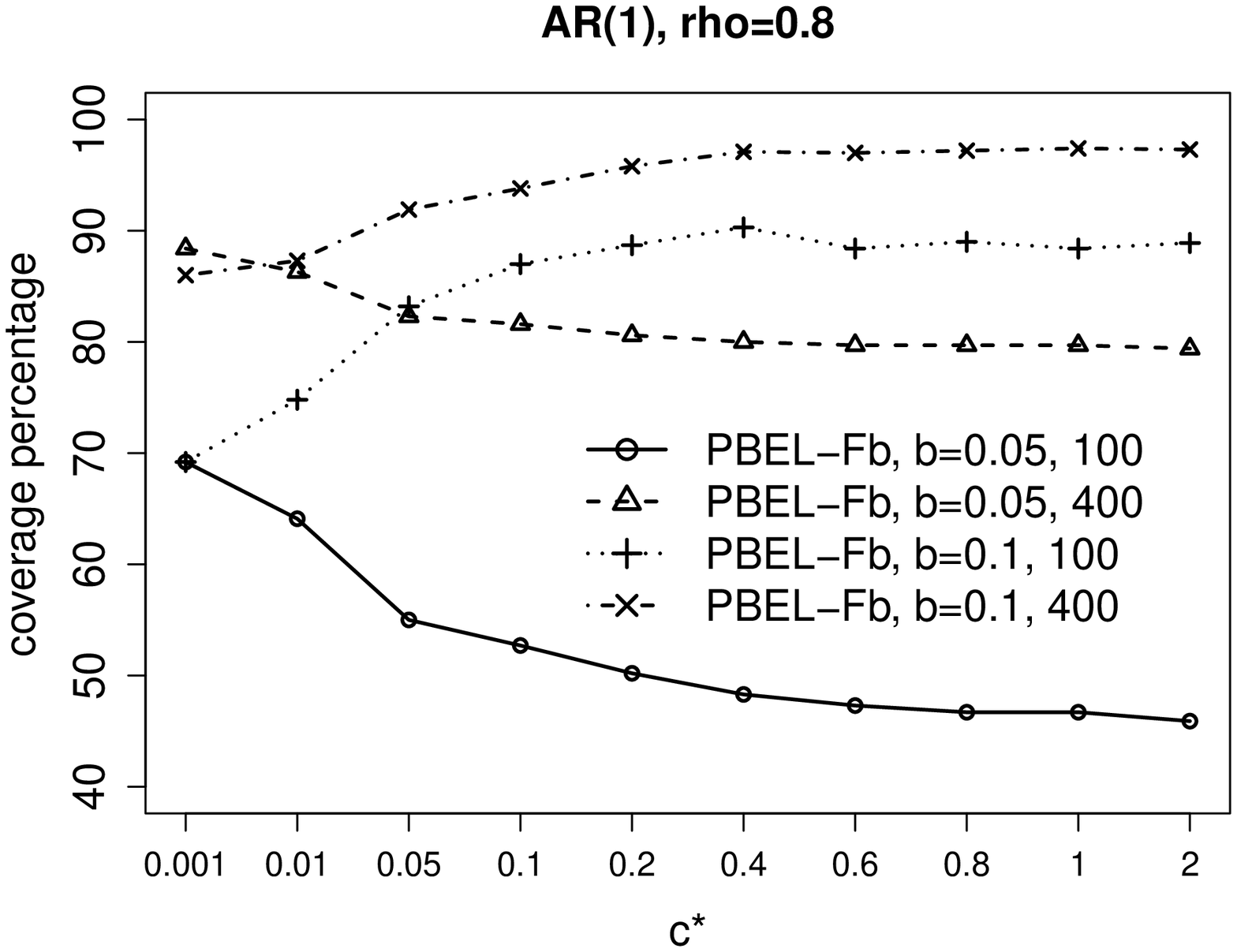}
\includegraphics[height=5.2cm,width=6cm]{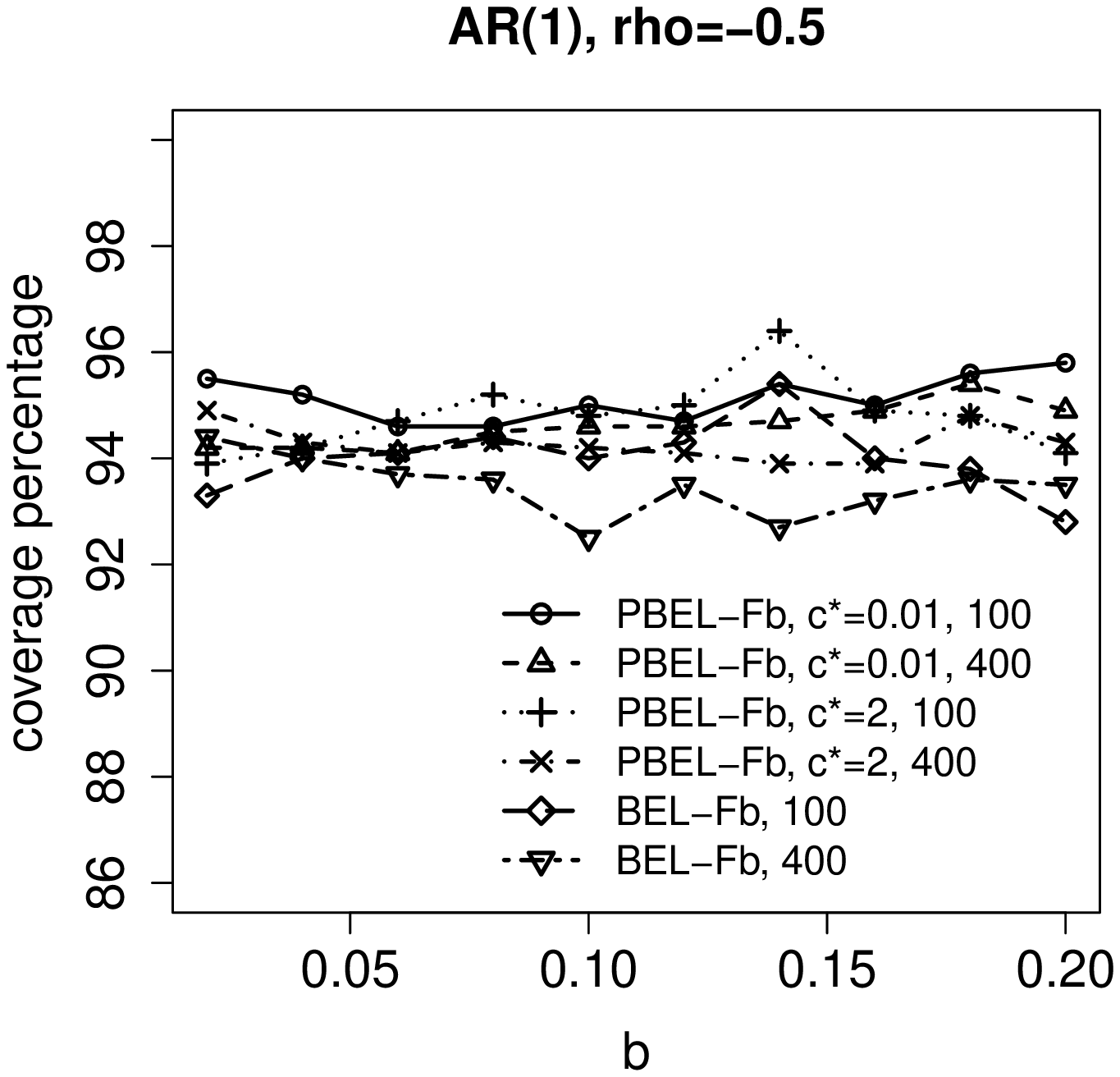}
\includegraphics[height=5.2cm,width=6cm]{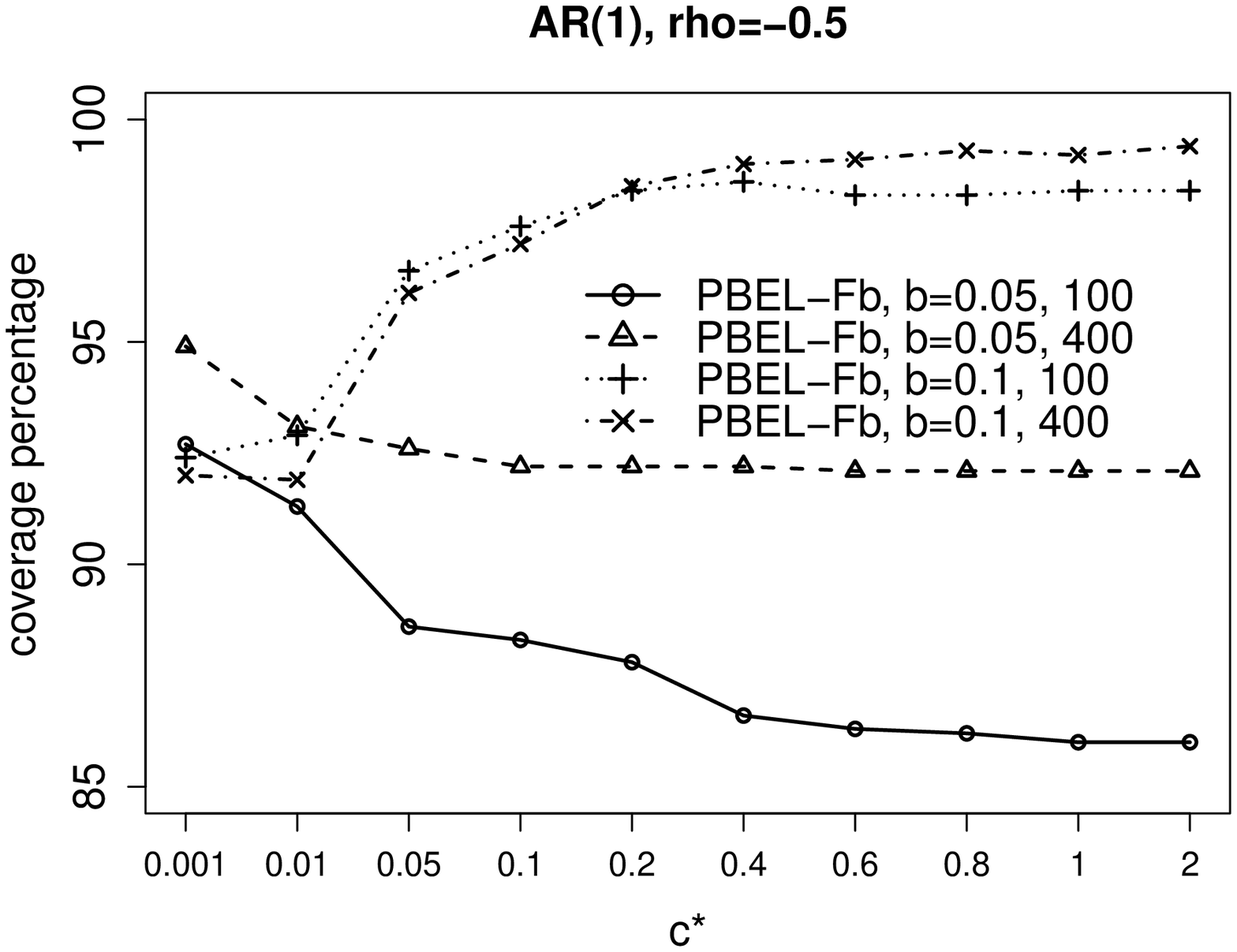}
\caption{Coverage probabilities for the regression coefficients delivered by the PBEL
with $Q(r,s)=(1-|r-s|)\mathbf{I}\{|r-s|\leq 1\}$,
and BEL, where $m_0=1$ for the left column and $m_0=4$ for the right
column. The nominal level is 95\% and the number of Monte Carlo
replications is 1,000.}\label{fig:pel-ar-reg}
\end{figure}

\newpage

\begin{figure}[H]
\centering
\includegraphics[height=5.2cm,width=6cm]{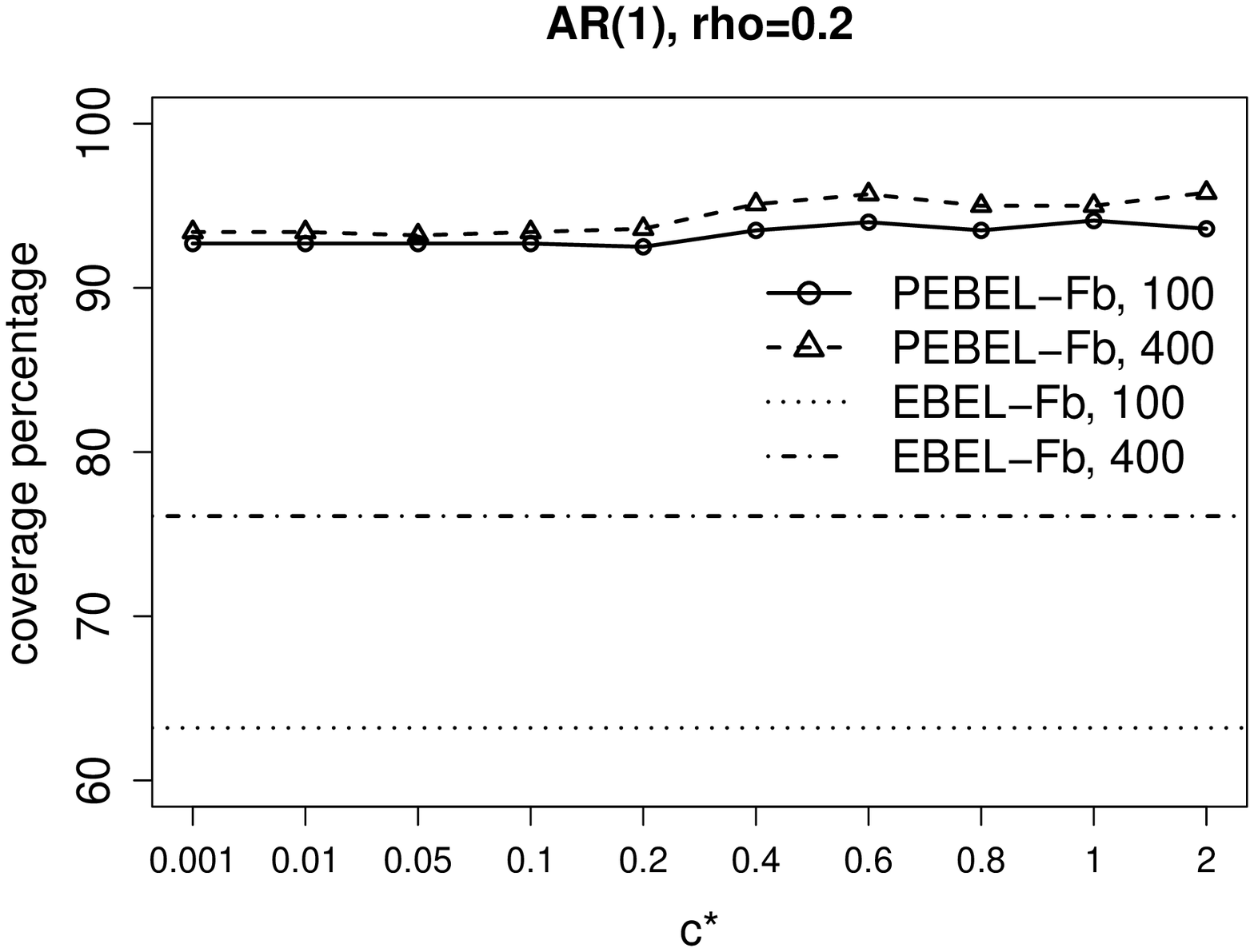}
\includegraphics[height=5.2cm,width=6cm]{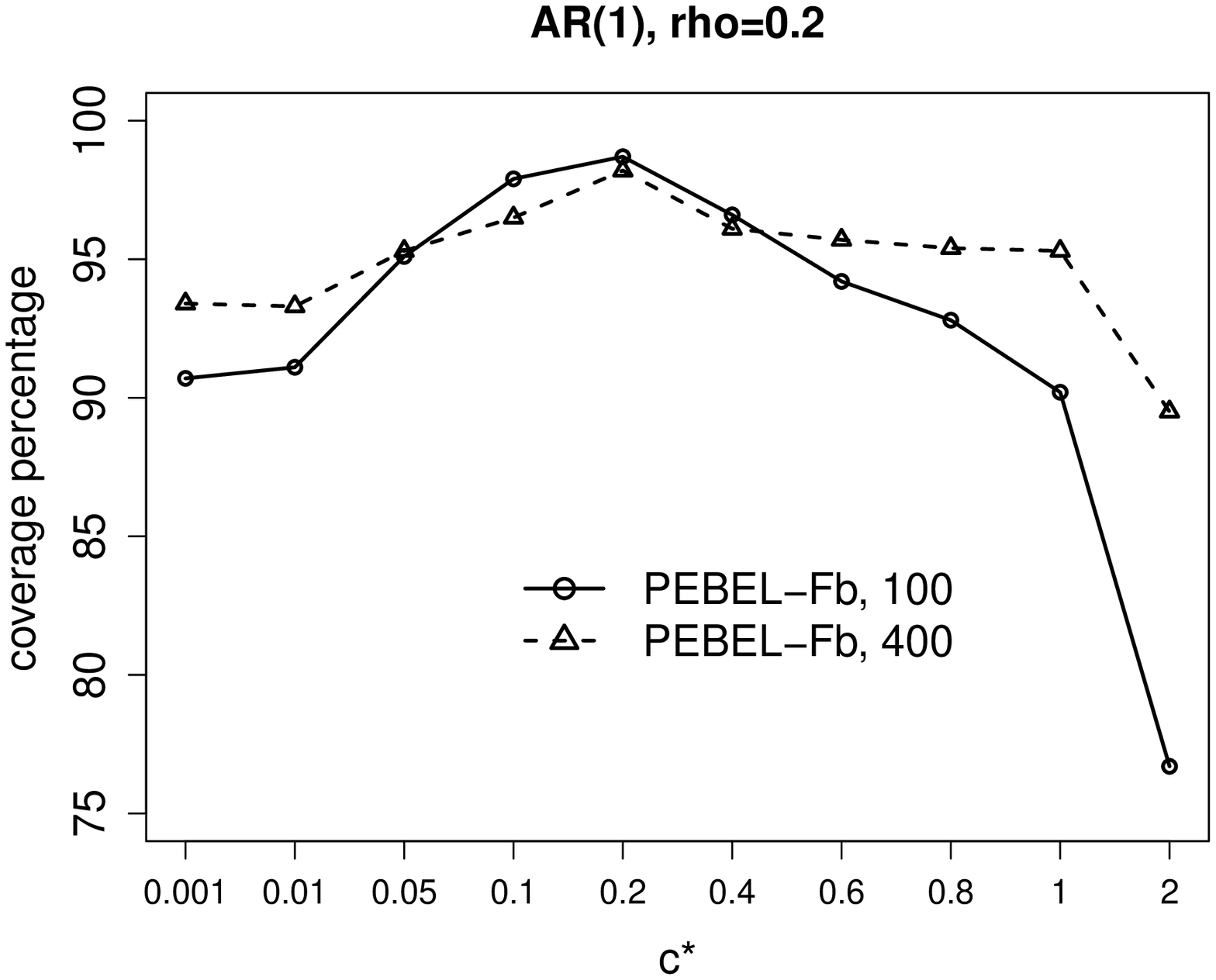}
\includegraphics[height=5.2cm,width=6cm]{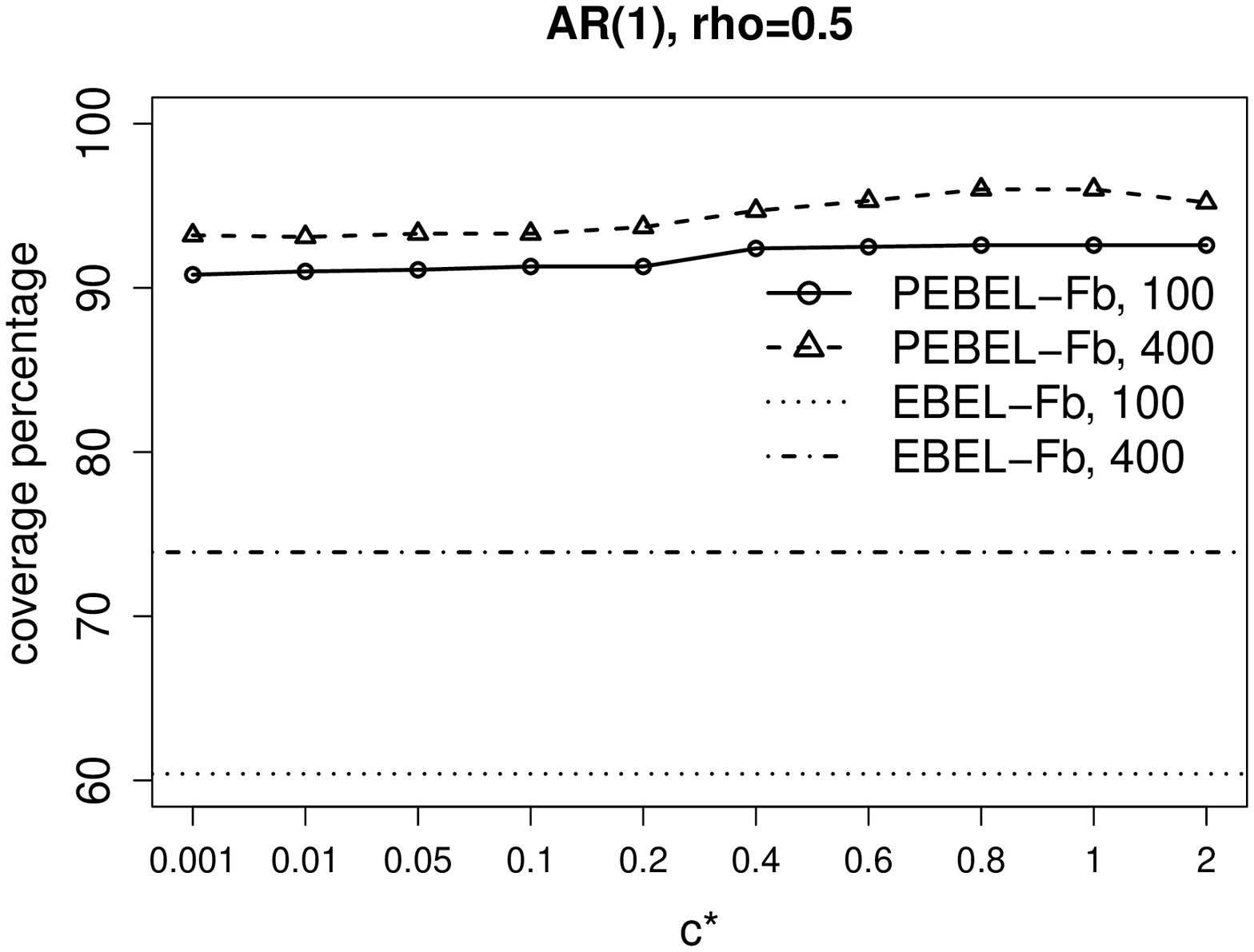}
\includegraphics[height=5.2cm,width=6cm]{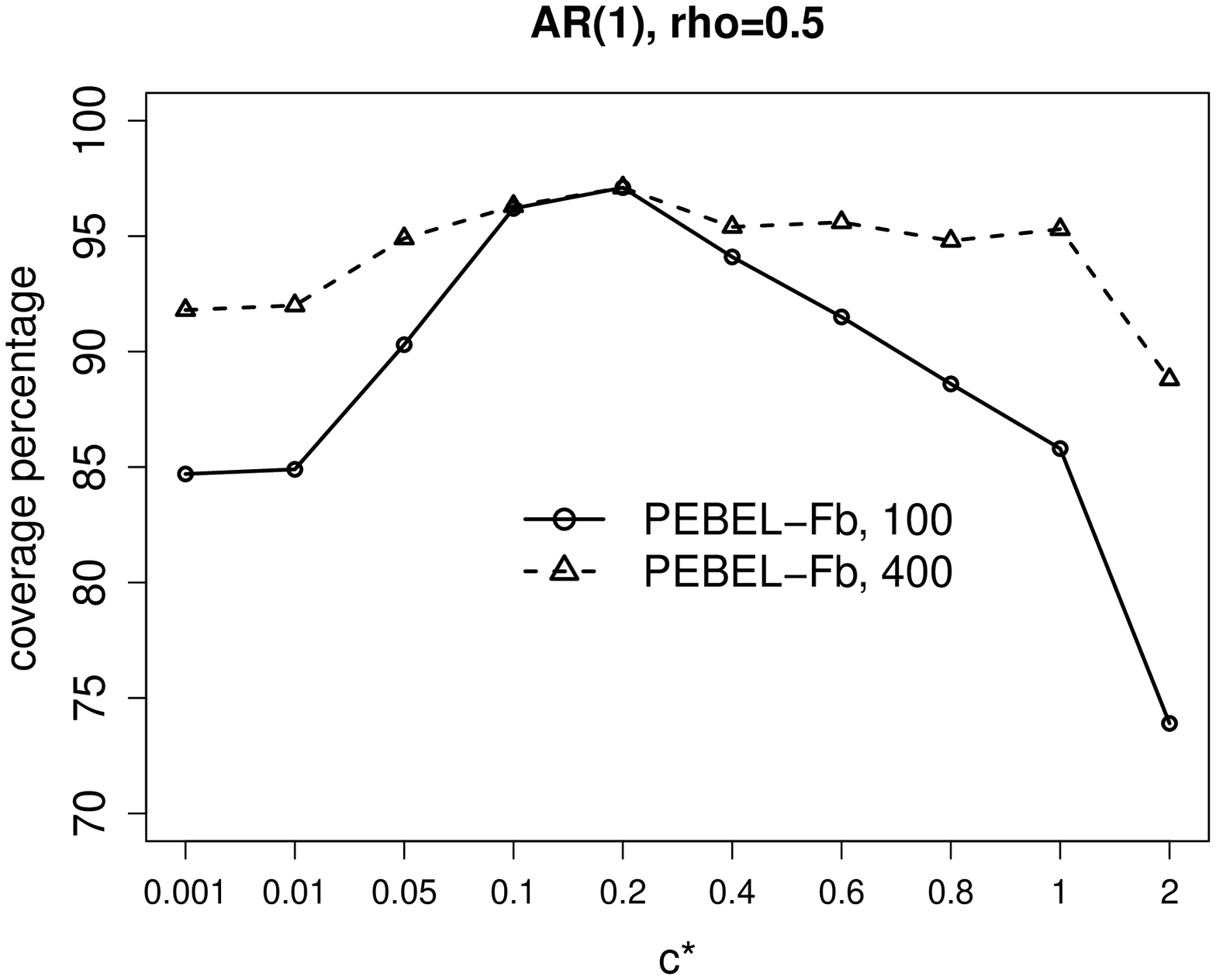}
\includegraphics[height=5.2cm,width=6cm]{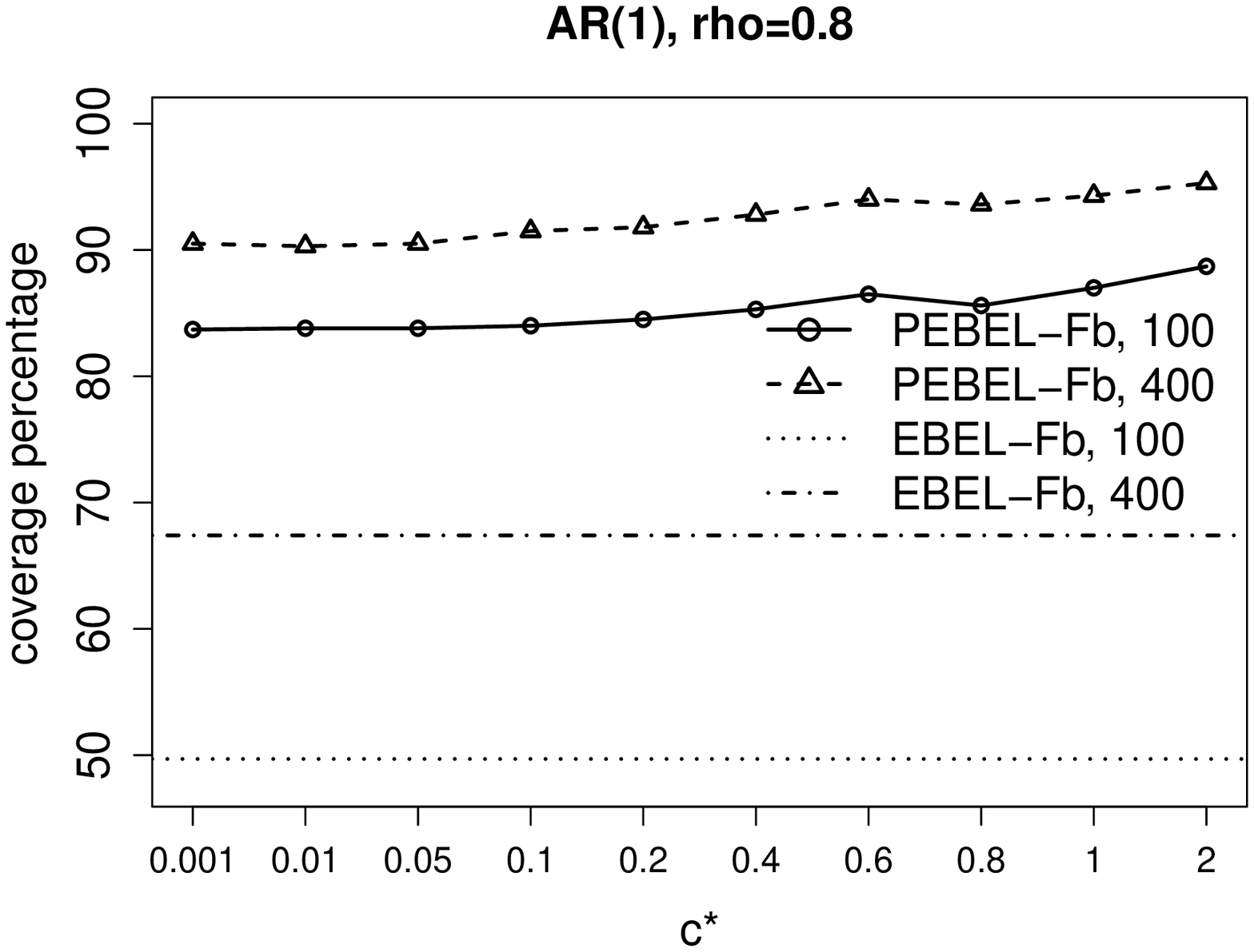}
\includegraphics[height=5.2cm,width=6cm]{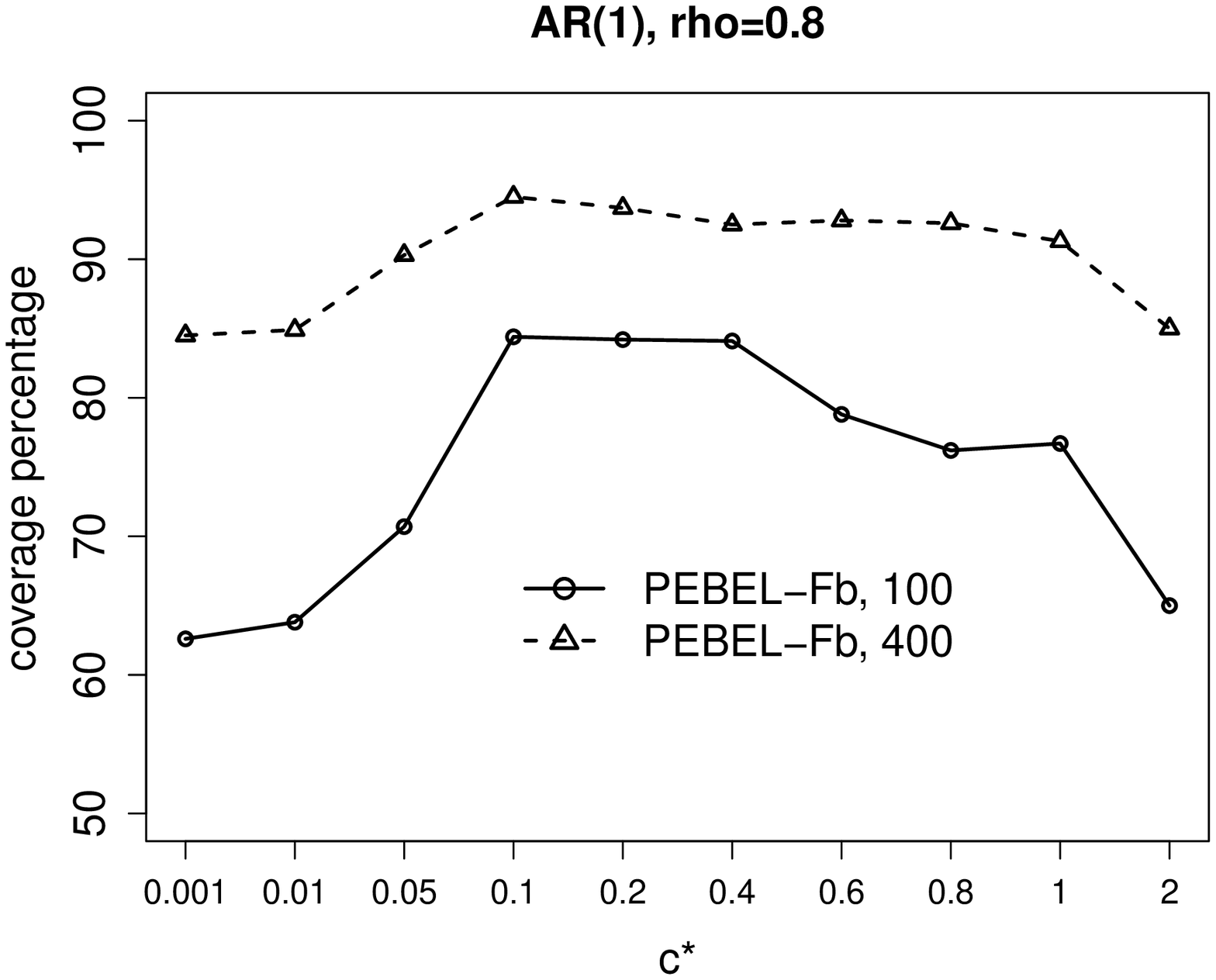}
\includegraphics[height=5.2cm,width=6cm]{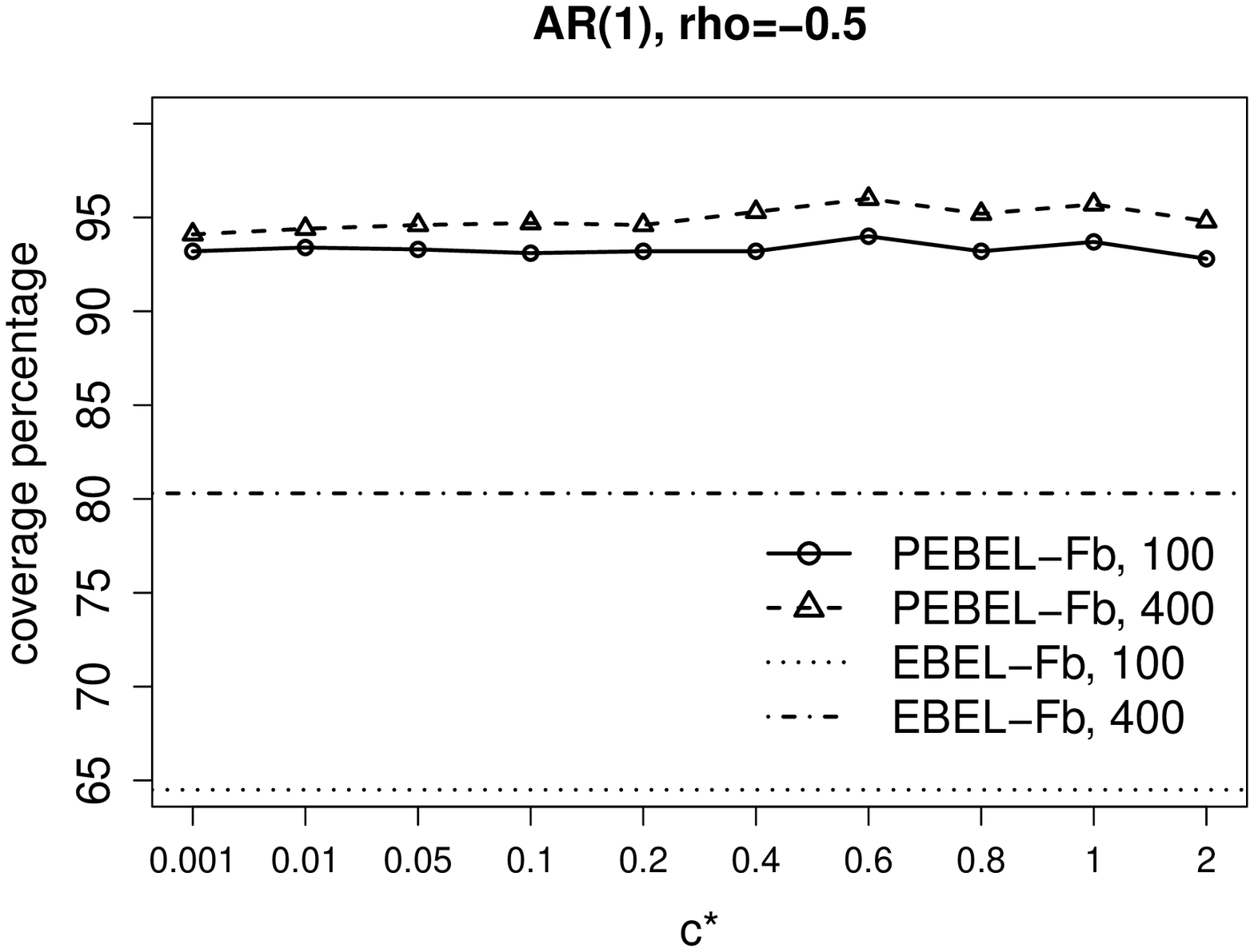}
\includegraphics[height=5.2cm,width=6cm]{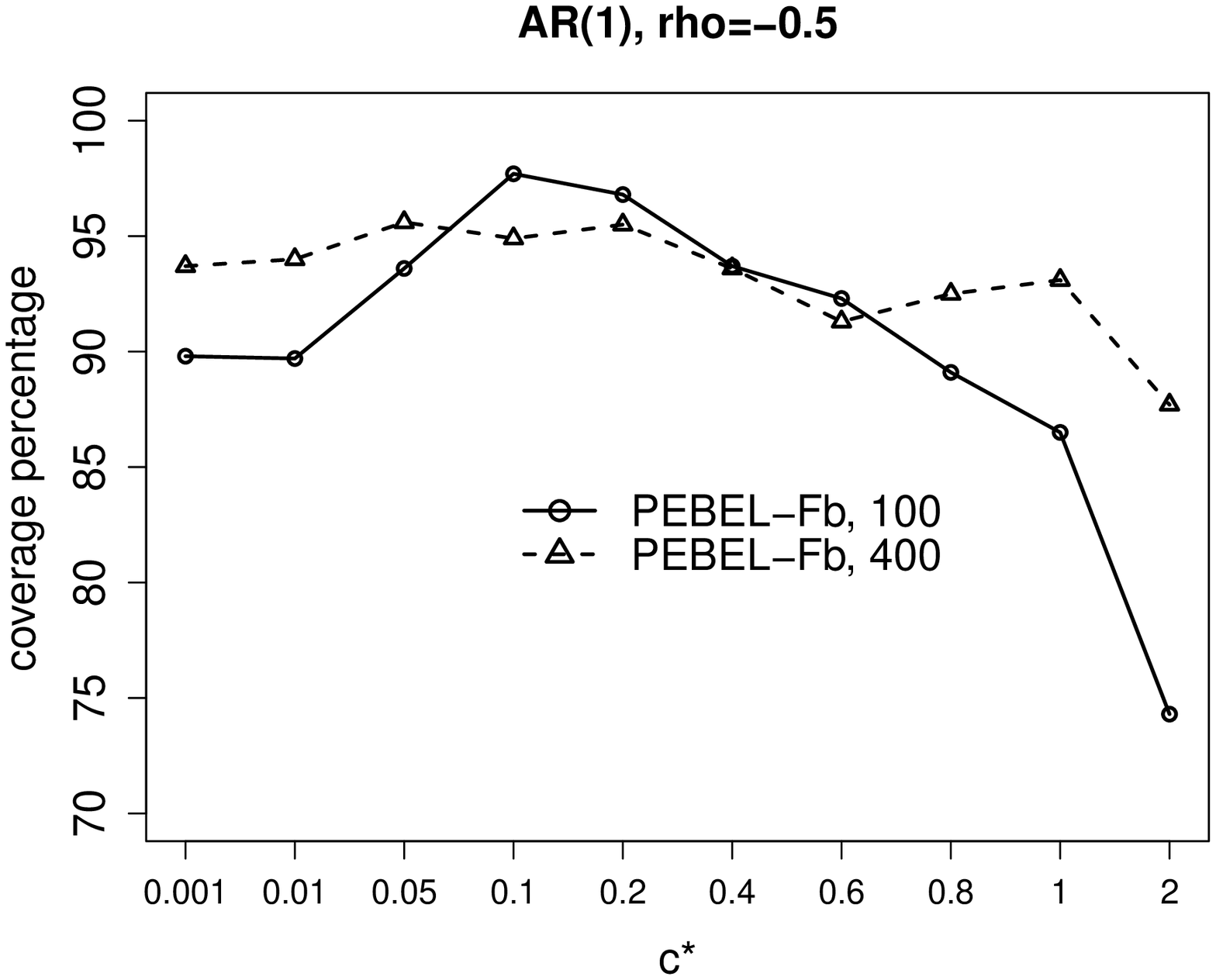}
\caption{Coverage probabilities for the regression coefficients delivered by the PEBEL
with various $c^*$ and $Q(r,s)=(1-|r-s|)\mathbf{I}\{|r-s|\leq 1\}$,
and EBEL, where $m_0=1$ for the left column and $m_0=4$ for the right
column. The nominal level is 95\% and the number of Monte Carlo
replications is 1,000.}\label{fig:pebel-ar-reg}
\end{figure}

\newpage

\begin{figure}[H]
\centering
\includegraphics[height=5.2cm,width=6cm]{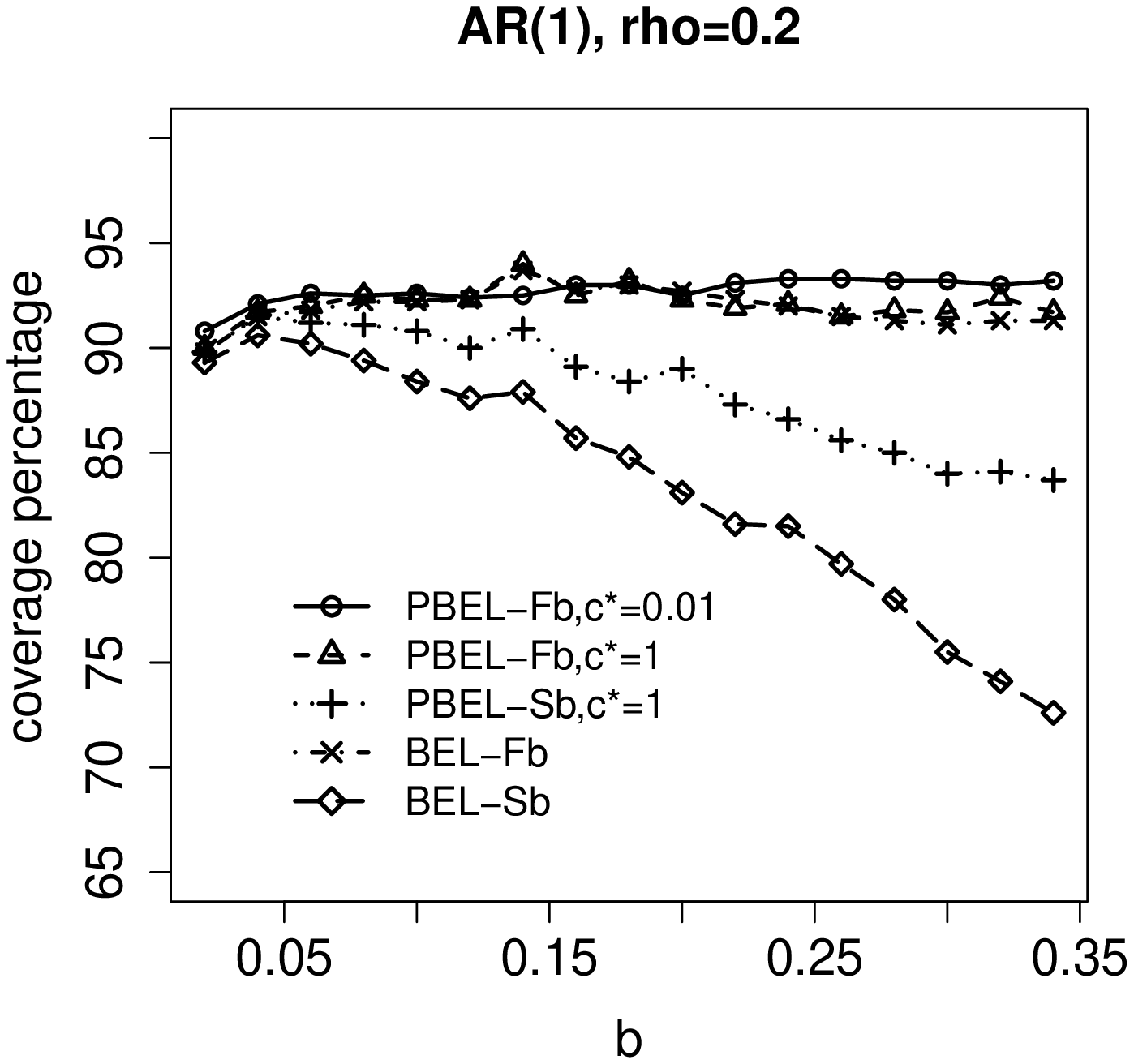}
\includegraphics[height=5.2cm,width=6cm]{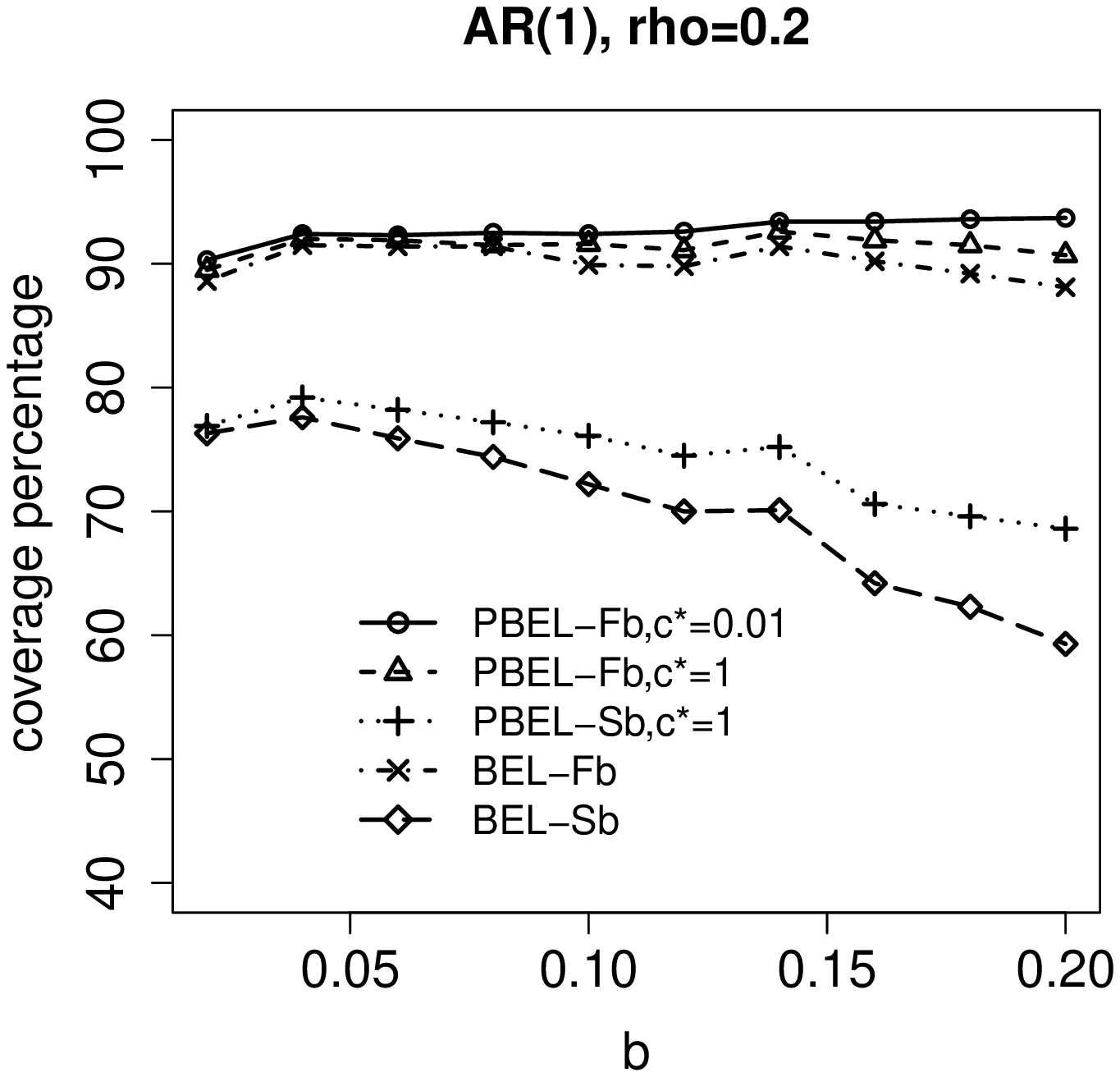}
\includegraphics[height=5.2cm,width=6cm]{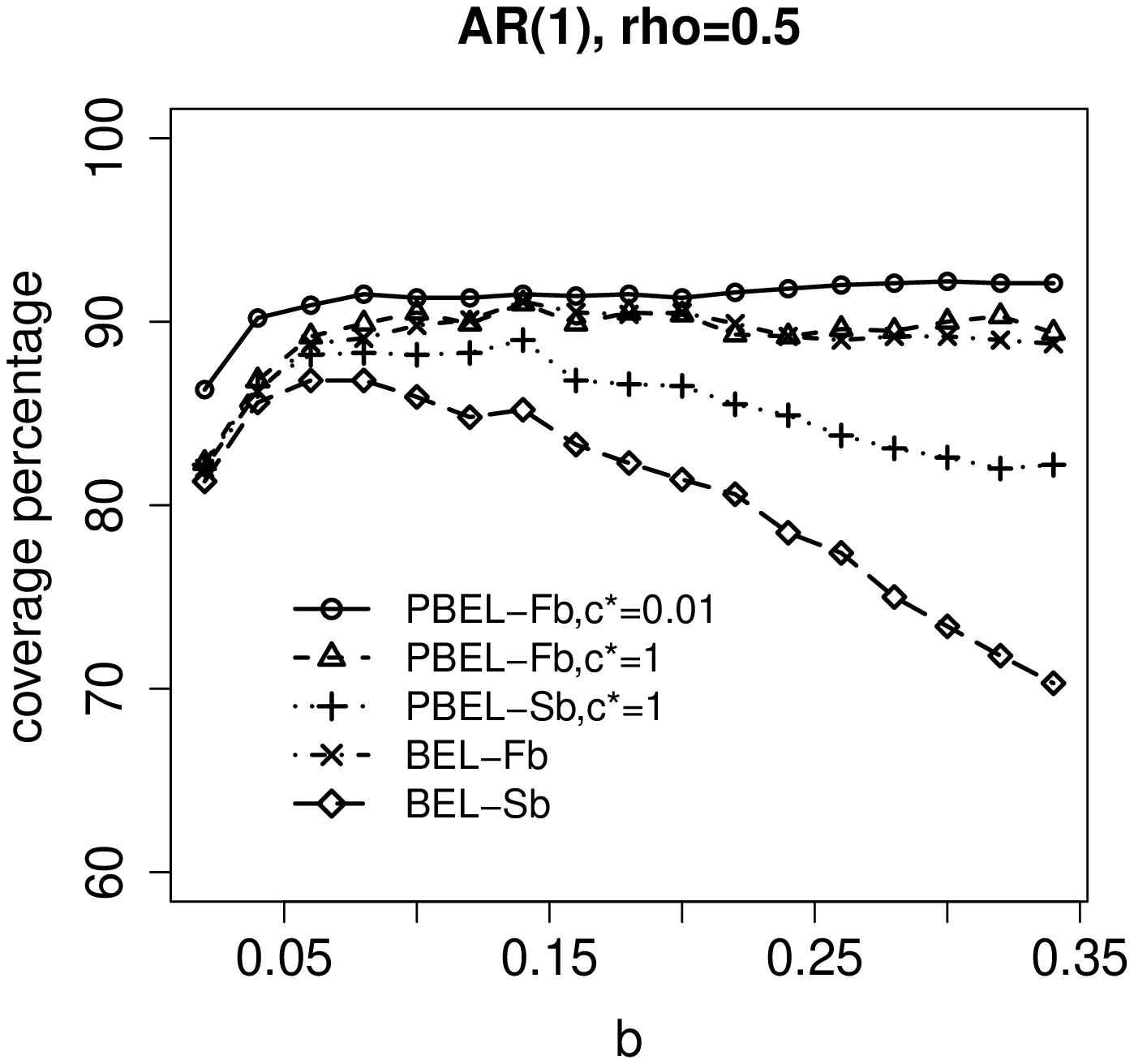}
\includegraphics[height=5.2cm,width=6cm]{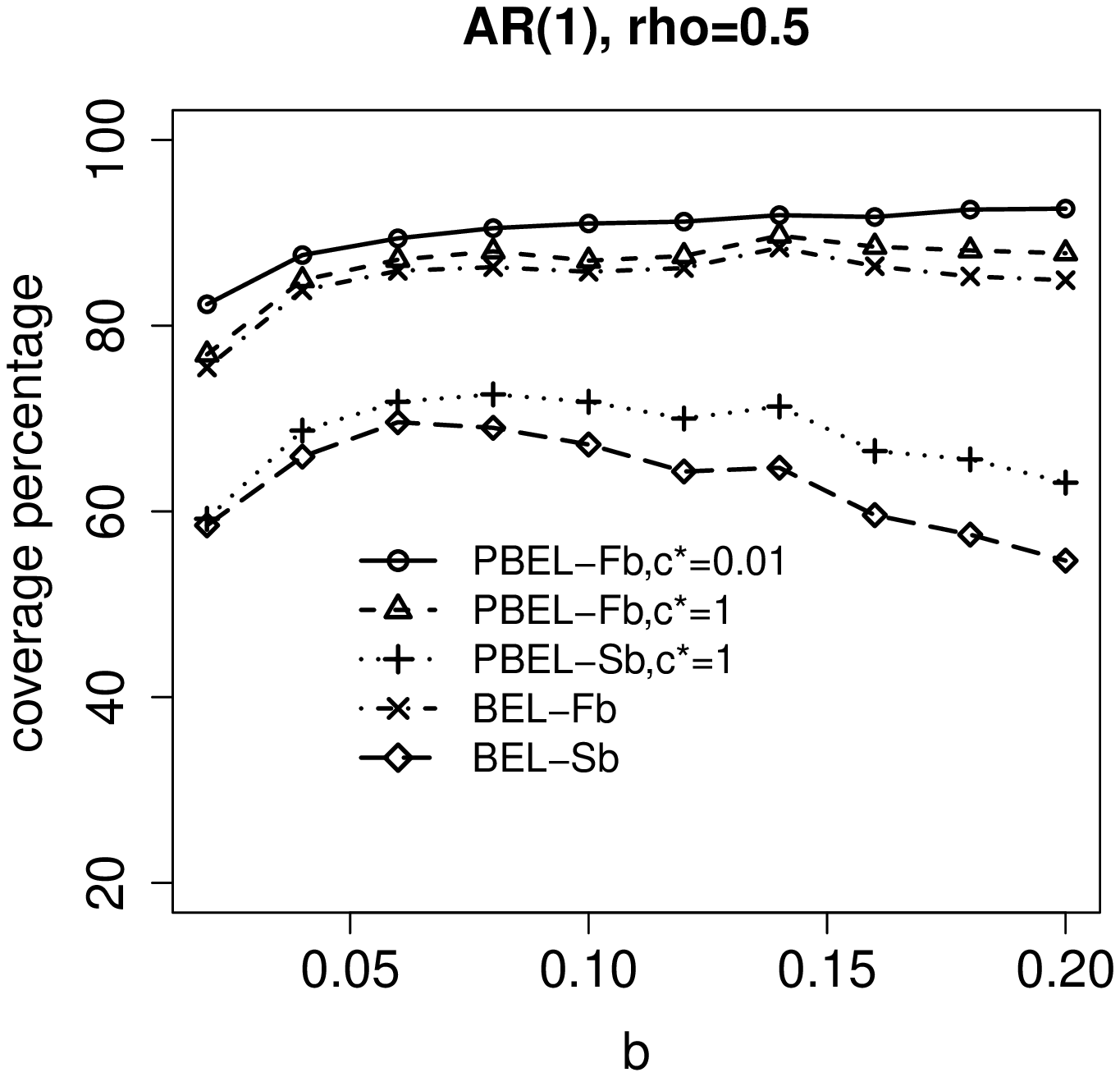}
\includegraphics[height=5.2cm,width=6cm]{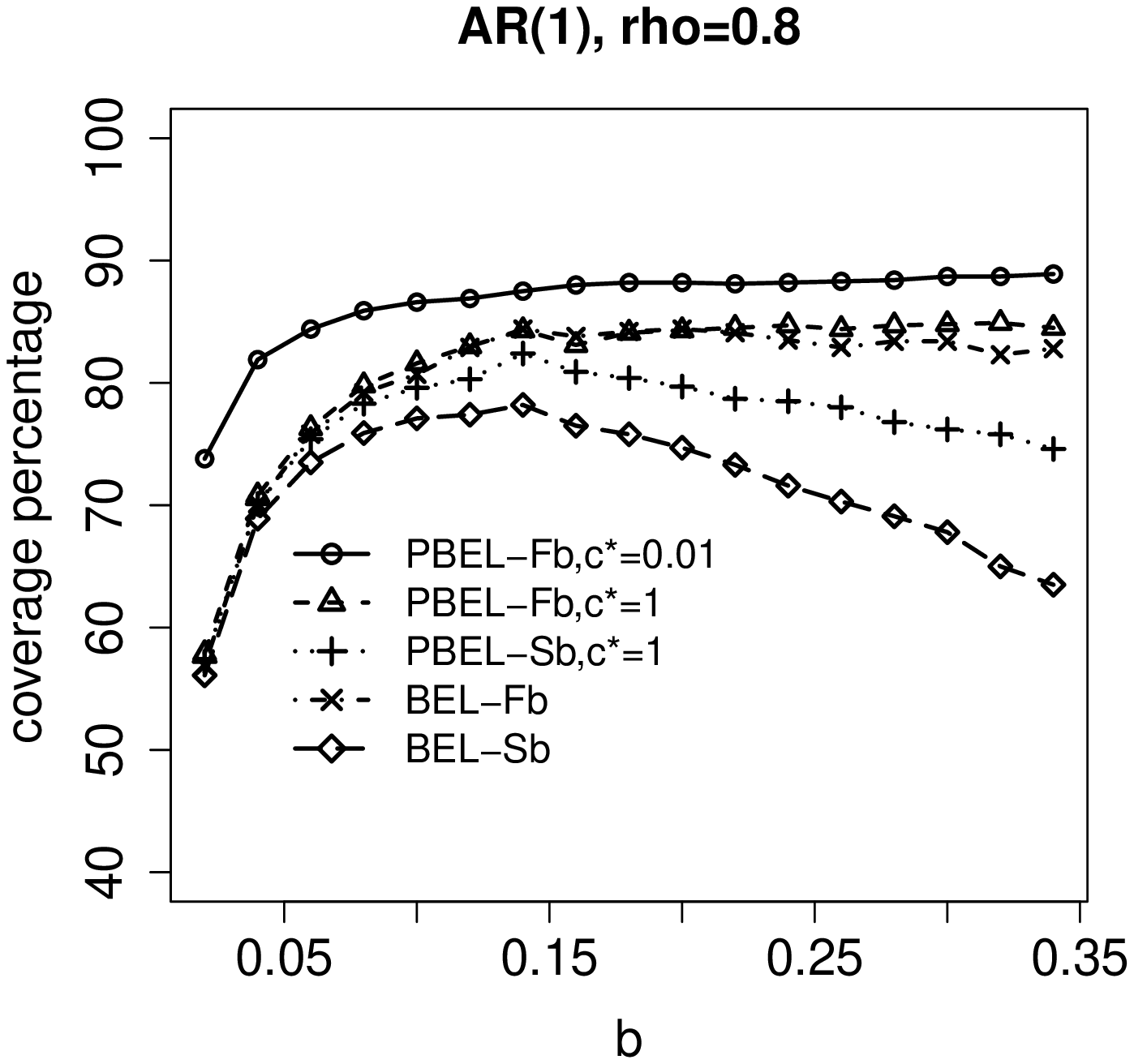}
\includegraphics[height=5.2cm,width=6cm]{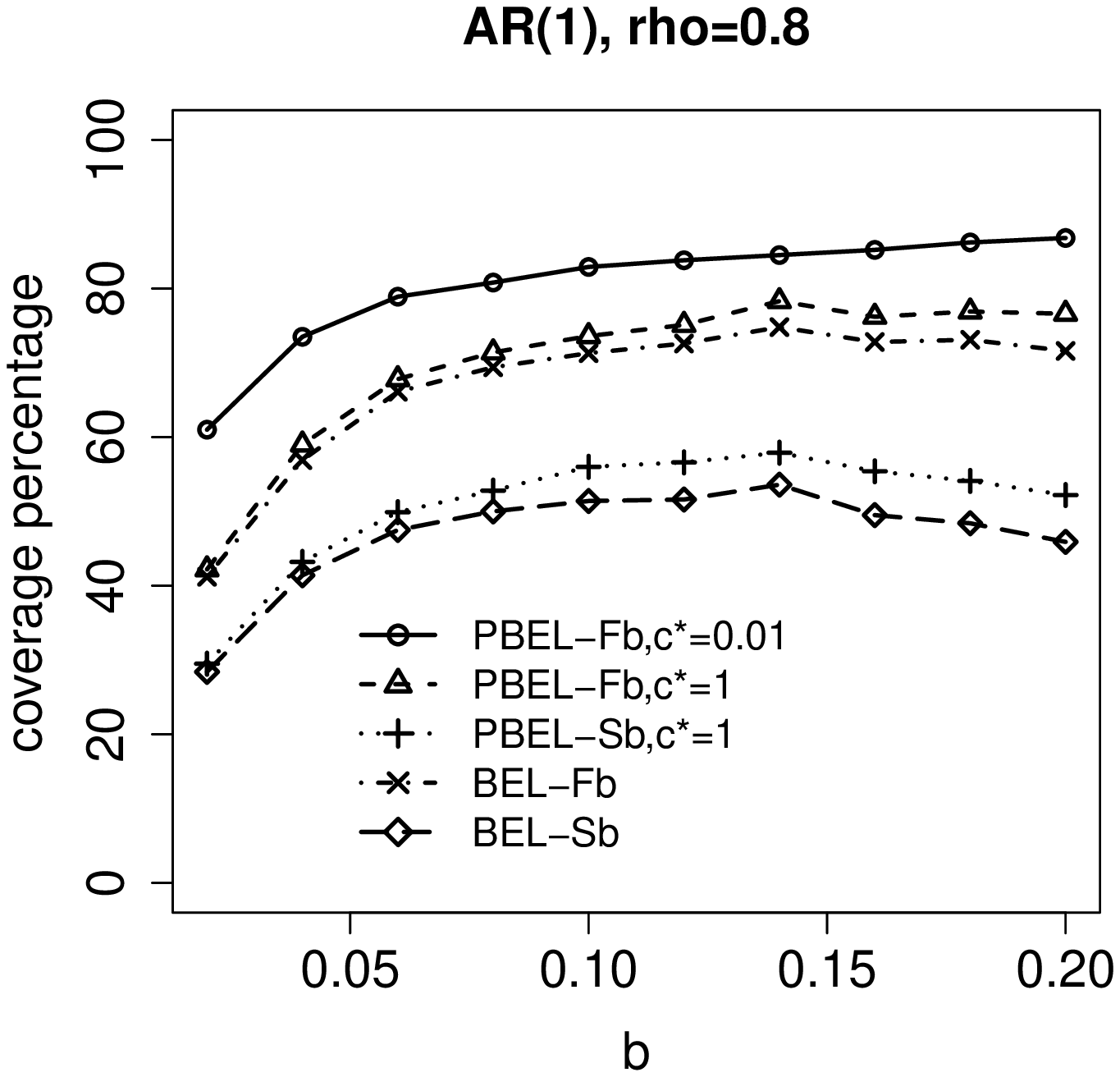}
\includegraphics[height=5.2cm,width=6cm]{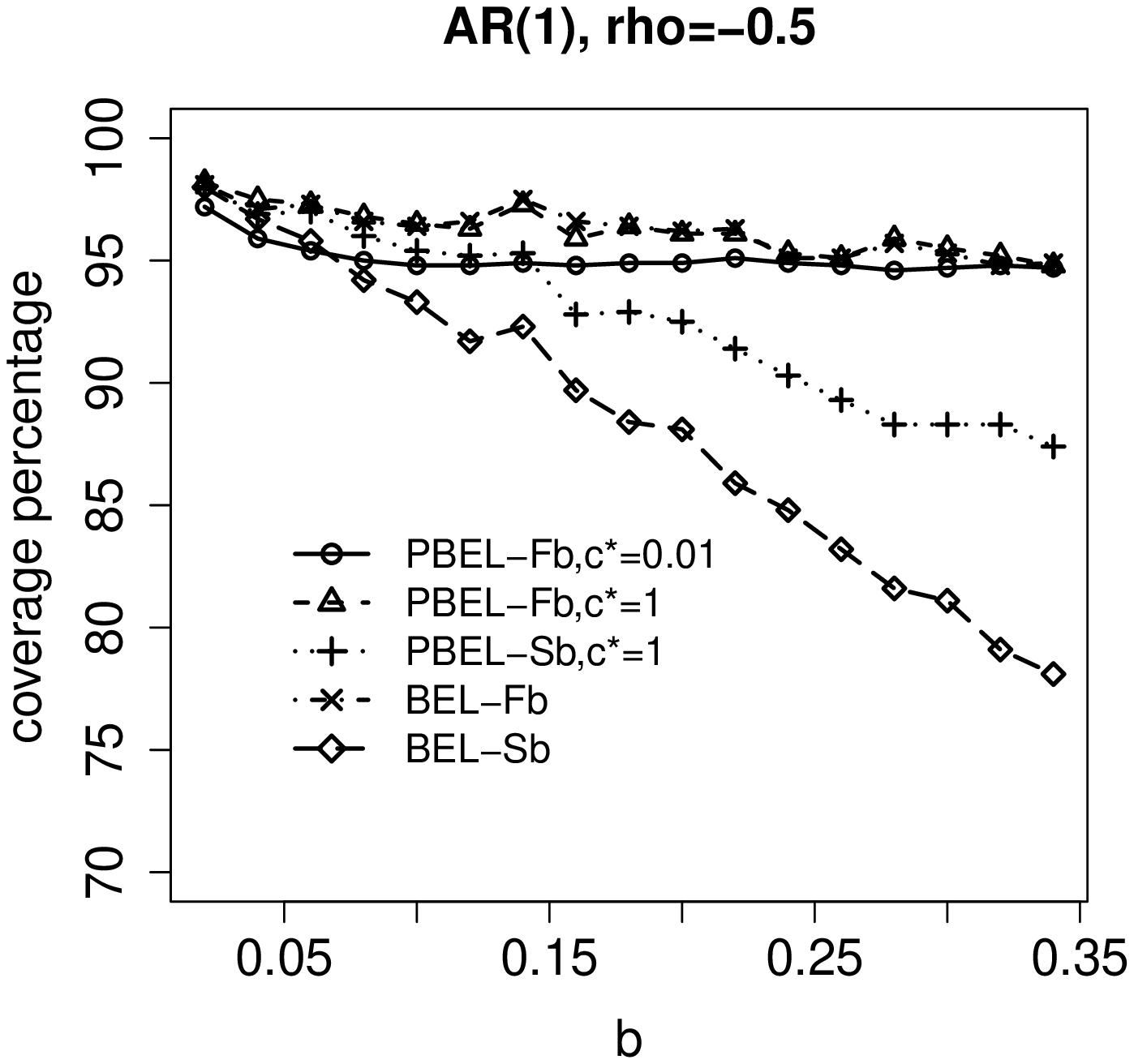}
\includegraphics[height=5.2cm,width=6cm]{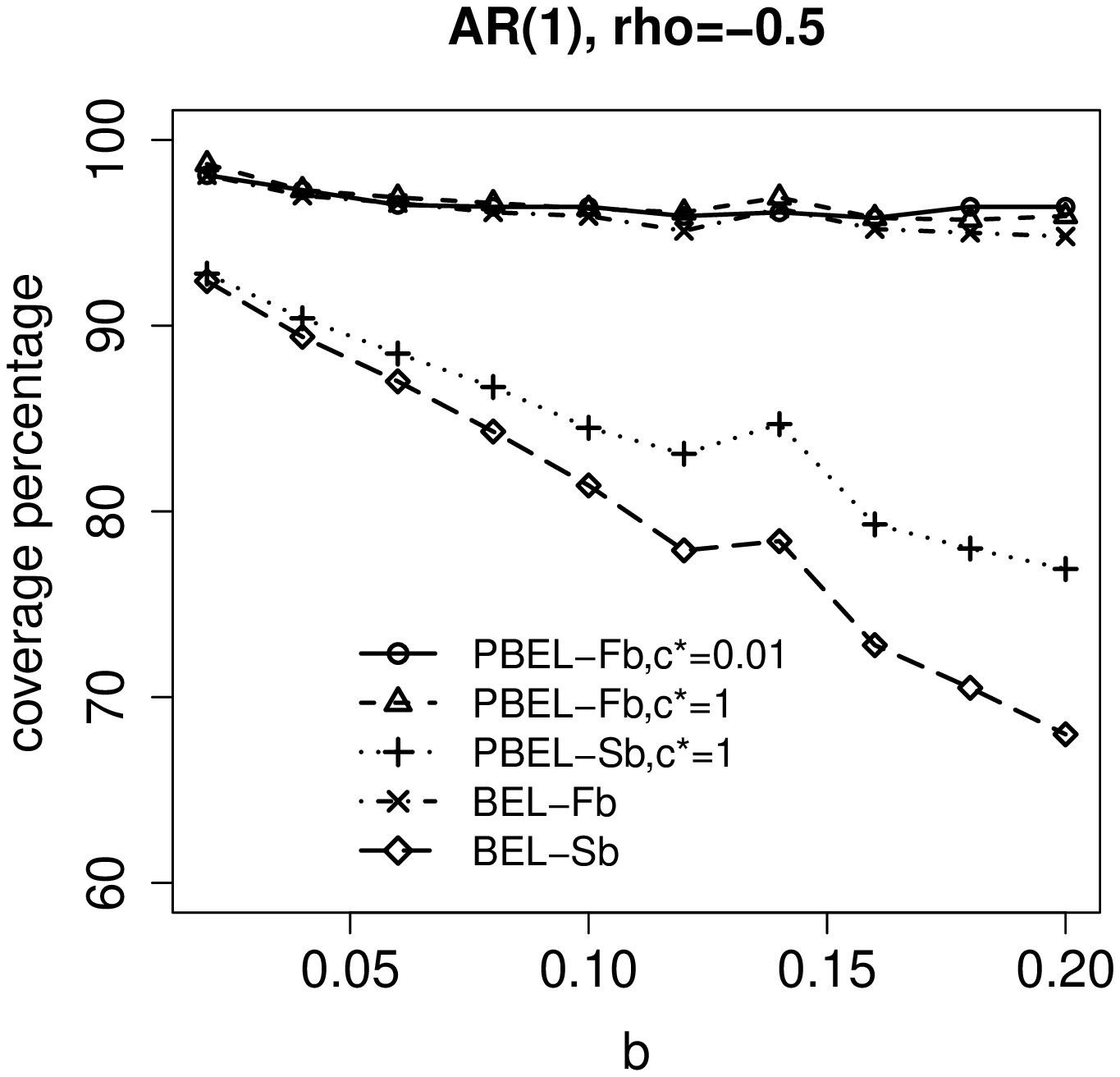}
\caption{Coverage probabilities for the mean delivered by the PBEL
with $Q(r,s)=(1-|r-s|)\mathbf{I}\{|r-s|\leq 1\}$, and BEL under both
small-$b$ and fixed-$b$ asymptotics, where $k=1$ for the left column
and $k=2$ for the right column. The nominal level is 95\% and the
number of Monte Carlo replications is 1,000.}\label{fig:pbel-add}
\end{figure}

\newpage

\begin{table}[H]
\caption{95\% quantiles of the limiting distributions for BEL
($u_{el,k}(b; 1-\alpha)$) and PBEL (with the Bartlett kernel) under
the fixed-$b$ asymptotics}\label{tab:crit-PBEL}
\begin{center}\footnotesize
\begin{tabular}{c rrrrrrrr}\toprule
& \multicolumn{2}{c}{BEL} & \multicolumn{2}{c}{PBEL, $c^*=0.01$} &
\multicolumn{2}{c}{PBEL, $c^*=0.2$} & \multicolumn{2}{c}{PBEL,
$c^*=2$}
\\ \cmidrule(r){2-3} \cmidrule(r){4-5} \cmidrule(r){6-7} \cmidrule(r){8-9 }
$b$ & $k=1$ & $k=2$ & $k=1$ & $k=2$ & $k=1$ & $k=2$  & $k=1$ & $k=2$
\\ \midrule
0.02&3.96&6.10&2.58&4.44&3.83&6.03&3.94&6.16\\
0.04&4.07&6.45&2.11&3.89&3.76&6.06&3.99&6.34\\
0.06&4.29&6.84&1.83&3.46&3.77&6.26&4.09&6.60\\
0.08&4.46&7.37&1.61&3.19&3.77&6.50&4.16&6.96\\
0.10&4.76&7.89&1.43&3.01&3.86&6.77&4.34&7.40\\
0.12&5.18&8.63&1.29&2.79&3.94&7.14&4.43&7.89\\
0.14&5.44&9.49&1.18&2.62&3.98&7.49&4.52&8.46\\
0.16&5.91&10.49&1.10&2.54&4.03&7.85&4.64&9.28\\
0.18&6.29&11.91&1.03&2.50&4.23&8.38&4.95&10.29\\
0.20&6.72&13.42&0.95&2.43&4.22&8.90&5.09&11.55\\
0.22&7.27&16.53&0.90&2.36&4.23&9.36&5.29&13.38\\
0.24&7.76&24.09&0.85&2.62&4.34&10.82&5.61&18.43\\
0.26&8.69&Inf&0.81&2.70&4.41&10.99&6.07&24.08\\
0.28&9.83&Inf&0.77&2.75&4.50&11.52&6.51&33.17\\
0.30&11.40&Inf&0.73&2.56&4.61&11.66&6.86&80.27\\
0.32&13.54&Inf&0.70&2.72&4.68&14.63&7.74&73.31\\
0.34&21.11&Inf&0.68&2.81&4.68&23.04&8.23&67.06\\
\midrule
\end{tabular}
\end{center}
\end{table}

\begin{table}[H]
\caption{95\% quantiles of the limiting distributions for PEBEL
(with the Bartlett kernel)}\label{tab:crit-EBEL}
\begin{center}\footnotesize
\begin{tabular}{c rr}\toprule
$c^*$ & $k=1$ & $k=2$ \\ \midrule
0.001&0.007&0.013\\
0.010&0.063&0.120\\
0.050&0.264&0.453\\
0.100&0.445&0.709\\
0.200&0.684&1.011\\
0.400&0.943&1.371\\
0.600&1.105&1.622\\
0.800&1.217&1.758\\
1.000&1.303&1.902\\
2.000&1.577&2.413\\

\midrule
\end{tabular}
\end{center}
\end{table}


\begin{thebibliography}{99}
\bibitem{bart}
\par\noindent\hangindent2.3em\hangafter 1
Bartolucci, F. (2007). A penalized version of the empirical
likelihood ratio for the population mean. {\it Statist. Probab.
Lett.} \textbf{77} 104-110.


%\bibitem{ber}
%\par\noindent\hangindent2.3em\hangafter 1
%Bergsma, W. (2012). The empty set and zero likelihood problems in
%maximum empirical likelihood estimation. {\it Electron. J. Statist.}
%\textbf{6} 2356-2361.


\bibitem{bill}
\par\noindent\hangindent2.3em\hangafter 1
{Billingsley, P.} (1999). {\it Convergence of Probability Measures;
Second Edition}. New York: Wiley.


\bibitem{chang}
\par\noindent\hangindent2.3em\hangafter 1
Chang, J., Chen, S. and Chen, X. (2013). High dimensional empirical
likelihood for generalized estimating equations with dependent data.
Preprint. http://arxiv.org/pdf/1308.5732.pdf.


\bibitem{Chen}
\par\noindent\hangindent2.3em\hangafter 1
Chen, J. and Huang, Y. (2012). Finite-sample properties of the
adjusted empirical likelihood. {\it J. Nonparametric Statistics}
\textbf{25} 147-159.


\bibitem{cva}
\par\noindent\hangindent2.3em\hangafter 1
Chen, J., Variyath, A. M. and Abraham, B. (2008). Adjusted empirical likelihood
and its properties. {\it J. Comput. Graph. Statist.} \textbf{17} 426-443.


\bibitem{cc06}
\par\noindent\hangindent2.3em\hangafter 1
Chen, S. X. and Cui, H. J. (2006). On Bartlett correction of empirical likelihood in the presence of nuisance parameters.
{\it Biometrika} \textbf{93} 215-220.


%\bibitem{craig}
%\par\noindent\hangindent2.3em\hangafter 1
%Craig, P. (2003). A new reconstruction of multivariate normal orthant probabilities.
%{\it J. R. Statist. Soc. B} \textbf{70} 227-243.


\bibitem{dhr91}
\par\noindent\hangindent2.3em\hangafter 1
Diciccio, T. J., Hall, P. and Romano, J. P. (1991). Empirical likelihood is Bartlett-correctable. {\it Ann. Statist.} \textbf{19} 1053-1061.


\bibitem{eo}
\par\noindent\hangindent2.3em\hangafter 1
Emerson, S. C. and Owen, A. B. (2009). Calibration of the empirical likelihood
method for a vector mean. {\it Electron. J. Statist.} \textbf{3} 1161-1192.


%\bibitem{g}
%\par\noindent\hangindent2.3em\hangafter 1
%Grend\'{a}r, M. (2009). Empty set problem of maximum empirical
%likelihood methods. {\it Electron. J. Statist.} \textbf{3}
%1542-1555.


%\bibitem{hm}
%\par\noindent\hangindent2.3em\hangafter 1
%Hipel, K. W. and McLeod, A. I. (1994). {\it Time series modelling of water resources and environmental systems}.
%Amsterdam: Elsevier.


\bibitem{JPOB}
\par\noindent\hangindent2.3em\hangafter 1
Jones, P. D., Parker, D. E., Osborn, T. J. and Briffa, K. R. (2011).
``Global and hemispheric temperature anomalies-land and marine
instrumental records,'' in {\it Trends: A Compendium of Data on
Global Change}, Oak Ridge, TN, USA: Carbon Dioxide Information
Analysis Center, Oak Ridge National Laboratory, U.S. Department of
Energy, doi: 10.3334/CDIAC/cli.002.


\bibitem{km}
\par\noindent\hangindent2.3em\hangafter 1
Karlin, S. and Mcgregor, J. (1959). Coincidence probabilities. {\it
Pacific J. Math.} \textbf{9} 1141-1164.


\bibitem{kv}
\par\noindent\hangindent2.3em\hangafter 1
Kiefer, N. M. and Vogelsang, T. J. (2005). A new asymptotic theory
for heteroskedasticity-autocorrelation robust tests. {\it
Econometric Theory} \textbf{21} 1130-1164.


\bibitem{kln}
\par\noindent\hangindent2.3em\hangafter 1
Kim, Y. M., Lahiri, S. N. and Nordman, D. J. (2013). A progressive
block empirical likelihood method for time series. {\it J. Am.
Statist. Assoc.} \textbf{108} 1506-1516.


\bibitem{kit97}
\par\noindent\hangindent2.3em\hangafter 1
Kitamura, Y. (1997). Empirical likelihood methods with weakly
dependent processes. {\it Ann. Statist.} \textbf{25} 2084-2102.


\bibitem{kit}
\par\noindent\hangindent2.3em\hangafter 1
Kitamura, Y. (2006). Empirical likelihood methods in econometrics:
theory and practice. Cowles Foundation Discussion Paper 1569.


\bibitem{kss}
\par\noindent\hangindent2.3em\hangafter 1
Kitamura, Y., Santos, A. and Shaikh, A. (2013) On the asymptotic
optimality of  empirical likelihood for testing moment restrictions.
{\it Econometrica} \textbf{80} 413-423.


\bibitem{lm}
\par\noindent\hangindent2.3em\hangafter 1
Lahiri, S. N. and Mukhopadhyay, S. (2012). A penalized empirical likelihood
method in high dimensions. {\it Ann. Statist.} \textbf{40} 2511-2540.


\bibitem{lc}
\par\noindent\hangindent2.3em\hangafter 1
Liu, Y. and Chen, J. (2010). Adjusted empirical likelihood with high-order precision.
 {\it Ann. Statist.} \textbf{38} 1341-1362.


\bibitem{mp}
\par\noindent\hangindent2.3em\hangafter 1
McElroy, T. and Politis, D. N. (2007). Computer-intensive rate
estimation, diverging statistics, and scanning. {\it Ann. Statist.}
\textbf{35} 1827-1848.


%\bibitem{mhk}
%\par\noindent\hangindent2.3em\hangafter 1
%Miwa, T., Hayter, A. J. and Kuriki, S. (2003). The evaluation of
%general non-centred orthant probabilities. {\it J. R. Statist. Soc.
%B} \textbf{65} 223-234.


\bibitem{n2009}
\par\noindent\hangindent2.3em\hangafter 1
Nordman, D. J. (2009). Tapered empirical likelihood for time series
data in time and frequency domains. {\it Biometrika} \textbf{96}
119-132.


\bibitem{nbl}
\par\noindent\hangindent2.3em\hangafter 1
Nordman, D. J., Bunzel, H. and Lahiri, S. N. (2013). A non-standard
empirical likelihood for time series. {\it Ann. Statist.} \textbf{4} 3050-3073.


\bibitem{nc2008}
\par\noindent\hangindent2.3em\hangafter 1
Nordman, D. J.  and Caragea, P. C. (2008). Point and interval
estimation of variogram models using spatial empirical likelihood.
{\it J. Am. Statist. Assoc.} \textbf{103} 350-361.


\bibitem{nbl}
\par\noindent\hangindent2.3em\hangafter 1
Nordman, D. J. and Lahiri, S. N. (2013). A review of empirical
likelihood methods for time series. {\it J. Stat. Plan. Infer.} To
appear.


\bibitem{o07}
\par\noindent\hangindent2.3em\hangafter 1
Otsu, T. (2007). Penalized empirical likelihood estimation of semiparametric models. {\it J. Multivariate
Anal.} \textbf{98} 1923-1954.


\bibitem{owen1}
\par\noindent\hangindent2.3em\hangafter 1
Owen, A. B. (1990). Empirical likelihood confidence regions. {\it Ann. Statist.} \textbf{18} 90-120.


\bibitem{owen2}
\par\noindent\hangindent2.3em\hangafter 1
Owen, A. B. (2001). {\it Empirical Likelihood}. London: Chapman \& Hall/CRC.


\bibitem{ph}
\par\noindent\hangindent2.3em\hangafter 1
Phillips, P. C. B. (1987). Time series regression with unit roots.
{\it Econometrica} \textbf{55} 277-301.


%\bibitem{ph}
%\par\noindent\hangindent2.3em\hangafter 1
%Politis, D. N. and Romano, J. P. (1994). The stationary bootstrap.
%{\it J. Am. Statist. Assoc.} \textbf{89} 1303-1313.

%\bibitem{rs}
%\par\noindent\hangindent2.3em\hangafter 1
%Rho, Y. and Shao, X. (2013). Improving the bandwidth-free inference methods by
%prewhitening. {\it J. Stat. Plan. Infer.} To appear.


\bibitem{rb}
\par\noindent\hangindent2.3em\hangafter 1
Rayner, N. A., Brohan, P., Parker, D. E., Folland, C. K., Kennedy,
J. J., Vanicek, M., Ansell, T. and Tett, S. F. B. (2006). Improved
analyses of changes and uncertainties in marine temperature measured
in situ since the mid-nineteenth century: the HadSST2 dataset. {\it
J. Climate} \textbf{19} 446-469.


\bibitem{rock}
\par\noindent\hangindent2.3em\hangafter 1
Rockafellar, T. R. (1970). {\it Convex Analysis}. Princeton Univ.
Press.


\bibitem{shao}
\par\noindent\hangindent2.3em\hangafter 1
Shao, X. (2010). A self-normalized approach to confidence interval
construction in time series. {\it J. R. Statist. Soc. B.}
\textbf{72} 343-366.


\bibitem{shepp}
\par\noindent\hangindent2.3em\hangafter 1
Shepp, L. A. (1971). First passage time for a particular Gaussian
process. {\it Ann. Math. Statist.} \textbf{42} 946-951.


%\bibitem{smith}
%\par\noindent\hangindent2.3em\hangafter 1
%Smith, R. (2011). GEL criteria for moment condition models. {\it
%Econometric Theory} \textbf{27} 1192-1235.


\bibitem{sun}
\par\noindent\hangindent2.3em\hangafter 1
Sun, Y. (2013). Fixed-smoothing asymptotics in a two-step GMM
framework. Working paper, Dept. Economics, UCSD.


\bibitem{t2010}
\par\noindent\hangindent2.3em\hangafter 1
Tang, C. Y. and Leng, C. (2010). Penalized high-dimensional empirical likelihood. {\it Biometrika} \textbf{97}
905-919.


\bibitem{t}
\par\noindent\hangindent2.3em\hangafter 1
Tsao, M. (2004). Bounds on coverage probabilities of the empirical
likelihood ratio confidence regions. {\it Ann. Statist.} \textbf{32}
1215-1221.


\bibitem{tfa}
\par\noindent\hangindent2.3em\hangafter 1
Tsao, M. and Wu, F. (2013a). Empirical likelihood on the full
parameter space. {\it Ann. Statist.} \textbf{41} 2176-2196.


\bibitem{tfb}
\par\noindent\hangindent2.3em\hangafter 1
Tsao, M. and Wu, F. (2013b). Extended empirical likelihood for
estimating equations. Preprint.


\bibitem{w}
\par\noindent\hangindent2.3em\hangafter 1
Wendel, J. G. (1962). A problem in geometric probability. {\it Math.
Scand.} \textbf{11} 109-111.


\bibitem{zs}
\par\noindent\hangindent2.3em\hangafter 1
Zhang, X. and Shao, X. (2014). Fixed-$b$ asymptotics for blockwise
empirical likelihood. {\it Statist. Sinica} \textbf{24} 1179-1194.
%http://www3.stat.sinica.edu.tw/ss\_newpaper/SS-12-321\_na.pdf


\end{thebibliography}

\begin{thebibliography}{9}
\bibitem{Jansson}
\par\noindent\hangindent2.3em\hangafter 1
Jansson, M. (2004). On the error of rejection probability
in simple autocorrelation robust tests. {\it Econometrica}
\textbf{72} 937-946.

\bibitem{kv2005}
\par\noindent\hangindent2.3em\hangafter 1
Kiefer, N. M. and Vogelsang, T. J. (2005). A new asymptotic
theory for heteroskedasticity-autocorrelation robust tests. {\it
Econometric Theory}. \textbf{21} 1130-1164.

\bibitem{n1970}
\par\noindent\hangindent2.3em\hangafter 1
Neave, H. R. (1970). An improved formula for the asymptotic
variance of spectrum estimates. {\it Ann. Math. Statist.}
\textbf{41} 70-77.

\bibitem{s2010}
\par\noindent\hangindent2.3em\hangafter 1
Shao, X. (2010). A self-normalized approach to confidence
interval construction in time series. {\it J. R. Stat. Soc. Ser. B
Stat. Methodol.} \textbf{72} 343-366.

\bibitem{sp2013}
\par\noindent\hangindent2.3em\hangafter 1
Shao, X. and Politis, D. (2013). Fixed-b subsampling and block bootstrap: improved confidence sets based on p-value calibration.
 {\it J. R. Stat. Soc. Ser. B
Stat. Methodol.} \textbf{75} 161-184.

\bibitem{spj08}
\par\noindent\hangindent2.3em\hangafter 1
Sun, Y., Phillips, P. C. B. and Jin, S. (2008). Optimal
bandwidth selection in heteroscedasicity-autocorrlation robust
testing. {\it Econometrica} \textbf{76} 175-194.

\bibitem{s2013}
\par\noindent\hangindent2.3em\hangafter 1
Sun, Y. (2013). Fixed-smoothing asymptotics in a two-step
GMM framework. Working paper, Department of Economics, UCSD.

\bibitem{s2014}
\par\noindent\hangindent2.3em\hangafter 1
Sun, Y. (2014). Let's fix it: fixed-$b$ asymptotics versus
small-$b$ asymptotics in heteroscedasticity and autocorrelation
robust inference. {\it J. Econometrics} \textbf{178} 659-677.

\bibitem{t}
\par\noindent\hangindent2.3em\hangafter 1
Vogelsang, T. (2003). Testing in GMM Models Without Truncation
Chapter 10 of Advances in Econometrics Volume 17, Maximum Likelihood Estimation of Misspecified Models: Twenty Years Later,
ed. by T. B. Fomby and R. C. Hill, Elsevier Science, 199-233.

\bibitem{zs}
\par\noindent\hangindent2.3em\hangafter 1
Zhang, X. and Shao, X. (2013). Fixed-smoothing asymptotics
for time series. {\it Ann. Statist.} \textbf{41} 1329-1349.

\bibitem{zsb}
\par\noindent\hangindent2.3em\hangafter 1
Zhang, X. and Shao, X. (2014). Fixed-$b$ asymptotics for
blockwise empirical likelihood. {\it Statist. Sinica} \textbf{24} 1179-1194.

\end{thebibliography}
\end{document}